\documentclass{article}

\usepackage{fullpage}
\usepackage{amsmath,amsthm,amssymb,bm}

\usepackage{hyperref,url,xcolor}

\makeatletter
\newtheorem*{rep@theorem}{\rep@title}
\newcommand{\newreptheorem}[2]{%
\newenvironment{rep#1}[1]{%
 \def\rep@title{#2 \ref{##1}}%
 \begin{rep@theorem}}%
 {\end{rep@theorem}}}
\makeatother

\newcommand{\change}[1]{{#1}}

\newcommand{\PF}{\mathbb{P}\mathbb{F}}
\newcommand{\MDS}{\operatorname{MDS}}

\newcommand{\LDMDS}{\operatorname{LD-MDS}}
\newcommand{\wt}{\operatorname{wt}}
\newcommand{\supp}{\operatorname{supp}}
\newcommand{\eps}{\epsilon}

\newcommand{\F}{\mathbb F}

\newcommand{\Span}{\operatorname{span}}
\newcommand{\poly}{\operatorname{poly}}

\newtheorem{theorem}{Theorem}[section]
\newreptheorem{theorem}{Theorem}
\newtheorem{corollary}[theorem]{Corollary}
\newtheorem{lemma}[theorem]{Lemma}
\newtheorem{proposition}[theorem]{Proposition}

\newtheorem{claim}[theorem]{Claim}
\newtheorem{question}[theorem]{Question}

\theoremstyle{definition}
\newtheorem{definition}[theorem]{Definition}

\theoremstyle{remark}
\newtheorem{remark}[theorem]{Remark}

\hypersetup{
	colorlinks=true,
	linkcolor=blue,%
	citecolor=blue,
	linktoc=page
}

\title{Improved Field Size Bounds for Higher Order MDS Codes\footnote{Abridged version appeared in ISIT 2023.}}
\author{Joshua Brakensiek\thanks{Department of Electrial Engineering and Computer Science, University of California, Berkeley. Email: \texttt{josh.brakensiek@berkeley.edu}. Research conducted while a graduate student at Stanford University. Supported in part by a Microsoft Research PhD Fellowship.} \and
Manik Dhar\thanks{Department of Mathematics, Massachusetts Institute of Technology. Email: \texttt{dmanik@mit.edu}. Part of this work was done while this author was interning at Microsoft Research, and was a graduate student at Princeton University where his research supported by NSF grant DMS-1953807.} \and
Sivakanth Gopi\thanks{Microsoft Research. Email: \texttt{sigopi@microsoft.com}}}
\date{}

\begin{document}
\maketitle

\abstract{
  Higher order MDS codes are an interesting generalization of MDS codes recently introduced by \change{Brakensiek, Gopi, and Makam}~\cite{bgm2021mds}. In later works, they were shown to be intimately connected to optimally list-decodable codes and maximally recoverable tensor codes. Therefore (explicit) constructions of higher order MDS codes over small fields is an important open problem. Higher order MDS codes are denoted by $\MDS(\ell)$ where $\ell$ denotes the order of generality, $\MDS(2)$ codes are equivalent to the usual MDS codes. The best prior lower bound on the field size of an $\change{[}n,k\change{]}$-$\MDS(\ell)$ codes is $\Omega_\ell(n^{\ell-1})$, whereas the best known (non-explicit) upper bound is $O_\ell(n^{k(\ell-1)})$ which is exponential in the dimension.

  In this work, we nearly close this exponential gap between upper and lower bounds. We show that an $\change{[}n,k\change{]}$-$\MDS(3)$ codes requires a field of size $\Omega_k(n^{k-1})$, which is close to the known upper bound. Using the connection between higher order MDS codes and optimally list-decodable codes, we show that even for a list size of 2, a code which meets the optimal list-decoding Singleton bound requires exponential field size; this resolves an open question \change{by Shangguan and Tamo}~\cite{shangguan2020combinatorial,ST23}.

   We also give explicit constructions of $\change{[}n,k\change{]}$-$\MDS(\ell)$ code over fields of size $n^{(\ell k)^{O(\ell k)}}$. The smallest non-trivial case where we still do not have optimal constructions is $\change{[}n,3\change{]}$-$\MDS(3)$. In this case, the known lower bound on the field size is $\Omega(n^2)$ and the best known upper bounds are $O(n^5)$ for a non-explicit construction and $O(n^{32})$ for an explicit construction. In this paper, we give an explicit construction over fields of size $O(n^3)$ which comes very close to being optimal.
}
\newpage
\tableofcontents
\newpage

\section{Introduction}
The Singleton bounds states that the minimum distacnce of an $\change{[}n,k\change{]}$-code is at most $n-k+1$ \cite{Singleton1964maximum}.\footnote{This bound holds for non-linear codes as well, but in this paper we will only focus on linear codes defined over some finite field $\F.$ A (linear) $\change{[}n,k\change{]}$-code over $\F$ is a $k$-dimensional subspace of $\F^n$.} Codes which achieve this bound are called Maximum Distance Separable (MDS) codes. Reed-Solomon codes are a beautiful construction of MDS codes over fields of size just $O(n).$ Field size plays an important role in several applications of MDS codes. In distributed storage where MDS codes are extensively used, field size is the main determinant in the efficiency of encoding the data and recovering from failures \cite{huang2012erasure,Plank2013ScreamingFG}. Because of their distance optimality (MDS) and the small field size, Reed-Solomon codes are one of the most widely used codes both in practice and in theory.

In a recent paper,\footnote{The timeline presented in this paper is
  based on the initial arXiv posting dates of each paper. In
  particular, such chronological order is
  \cite{bgm2021mds,roth2021higher,bgm2022}.} \change{Brakensiek, Gopi, and Makam}~\cite{bgm2021mds}
introduced a generalization of MDS codes called \emph{higher order MDS
  codes}. \change{For a matrix $V \in \F^{k \times n}$ and subset $A \subseteq [n]$,
  we let $V_{A}$ be the span of the columns of $V$ indexed by $A$.}

\begin{definition}[Higher order MDS codes \change{(Definition~1.3~\cite{bgm2021mds})}]
  Let $C$ be an $\change{[}n,k\change{]}$-code with generator matrix $G$.  Let $\ell$ be
  a positive integer. We say that $C$ is $\MDS(\ell)$ if for any
  $\ell$ subsets $A_1, \hdots, A_\ell \subseteq [n]$ of size of at
  most $k$, we have that
  \begin{align}
    \dim (G_{A_1} \cap \cdots \cap G_{A_\ell}) = \dim (W_{A_1} \cap
    \cdots \cap W_{A_\ell}), \label{eq:MDS-L}
  \end{align}
  where $W_{k \times n}$ is a generic\footnote{One can think of the entries of $W$ as symbolic variables or randomly chosen from a sufficiently large field (c.f., \cite{bgm2021mds}).} matrix over the same field characteristic.\footnote{Note that $\MDS(\ell)$ is a property of the code $C$ and not a particular generator matrix $G$ used to generate $C$. This is because if $G$ satisfies (\ref{eq:MDS-L}) then $MG$ also satisfies (\ref{eq:MDS-L}) for any $k\times k$ invertible matrix $M$.}
\end{definition}

\change{
\begin{remark}
  By Lemma~3.1 of \cite{bgm2021mds}, it suffices to check (\ref{eq:MDS-L}) in the case that $|A_1| + \cdots + |A_\ell| = (\ell-1)k$. We often make this assumption in our proofs.
\end{remark}
}

\change{We say that a matrix $k \times n$ matrix $G$ is $[n,k]$-$\MDS(\ell)$ if $G$ is the generator matrix of a $[n,k]$-$\MDS(\ell)$ code.} Note that $\MDS(1)$ and $\MDS(2)$ coincide with $\MDS$~\cite{bgm2021mds}. For more nontrivial example, the columns of a generator matrix of an $\change{[}n,3\change{]}$-$\MDS(3)$ code form $n$ points in the projective plane $\PF^2$ such that no three points are collinear and additionally, no three lines obtained by joining disjoint pairs of points are concurrent. This is in contrast to an ordinary $\change{[}n,3\change{]}$-MDS code where we only require that no three points are collinear.

\change{Brakensiek, Gopi, and Makam}~\cite{bgm2021mds} \change{showed} that higher order MDS codes are equivalent to maximally recoverable (MR) tensor codes which were first introduced \change{by Gopalan, et.al.} \cite{gopalan2016}. A subsequent work \change{by Brakensiek, Gopi, and Makam}~\cite{bgm2022} showed that higher order MDS codes are equivalent to optimally list-decodable codes. We look at these two equivalences in turn.

A code $C$ is a \emph{$(m,n,a,b)$-tensor code}\change{, or \emph{product code}, or \emph{turbo product code}} if it can be expressed as \change{the tensor product of linear codes} $C_{col} \otimes C_{row}$, where $C_{col}$ is a $\change{[}m,m-a\change{]}$-code and
$C_{row}$ is a $\change{[}n,n-b\change{]}$-code. In other words, the codewords of $C$
are $m\times n$ matrices where each row belongs to $C_{row}$ and each
column belongs to $C_{col}.$ There are `$a$' parity checks per column
and `$b$' parity checks per row. Such a code $C$ is \emph{maximally
  recoverable} (abbreviated as MR) if it can recover from every erasure pattern $E \subseteq [m] \times [n]$
which can be recovered from by choosing a generic $C_{col}$ and $C_{row}$. Thus MR tensor codes are optimal codes since they can recover from any erasure pattern that is information theoretically possible to recover from. 
\change{Brakensiek, Gopi, and Makam}~\cite{bgm2021mds} defined $\MDS(\ell)$ codes \change{based on} the following proposition.
\begin{proposition}[Higher order MDS codes are equivalent to MR tensor codes~\cite{bgm2021mds}]
\label{prop:mrtc_mdsell}
Let $C=C_{col}\otimes C_{row}$ be an $(m,n,a=1,b)$-tensor code. Here $a = 1$ and thus $C_{col}$ is a parity check code. Then $C$ is maximally recoverable if and only if $C_{row}$ is $\MDS(m)$.  
\end{proposition}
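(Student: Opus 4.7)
The plan is to translate ``$C$ corrects the erasure pattern $E$'' into a linear-algebraic identity about intersections of column spans of the generator matrix $G$ of $C_{row}$, and then recognize that identity as precisely the $\MDS(m)$ condition. I would first parameterize codewords: writing $C_{col}$ as the single parity check code with check vector $\lambda \in \F^m$ (with all $\lambda_i \ne 0$, which is the generic case), a codeword of $C$ has rows $x_1 G,\ldots,x_m G$ with $x_i \in \F^k$, and the column constraints $\sum_i \lambda_i (x_i G) = 0$ collapse (since $G$ has full row rank) to $\sum_i \lambda_i x_i = 0$. For an erasure pattern $E$, setting $A_i = \{j:(i,j)\in E\}$ and $A_i' = [n]\setminus A_i$, the codewords of $C$ supported on $E$ are exactly the tuples $(x_1,\ldots,x_m)$ with $x_i \in G_{A_i'}^{\perp}$ and $\sum_i \lambda_i x_i = 0$, so $C$ corrects $E$ iff this system admits only the trivial solution.

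A short kernel-dimension calculation (using $\lambda_i \ne 0$) equates triviality of the system to the condition that $G_{A_1'}^{\perp} + \cdots + G_{A_m'}^{\perp}$ be an internal direct sum in $\F^k$, which by the duality $(\cap_i U_i)^{\perp} = \sum_i U_i^{\perp}$ is equivalent to the identity
\[
\dim\bigl(G_{A_1'}\cap\cdots\cap G_{A_m'}\bigr) \;=\; \sum_i \dim G_{A_i'} - (m-1)k,
\]
the reverse of an always-true inequality. Both sides of the proposition force $C_{row}$ to be $\MDS$ (MR recovers any row from $\le b$ erasures, and $\MDS(m)\Rightarrow\MDS$ by taking all $A_i$ equal of size $k$), so $\dim G_{A_i'} = \min(|A_i'|,k)$; any row with $|A_i'| \ge k$ then has $G_{A_i'}^{\perp} = 0$ and may be dropped from the system. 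Hence WLOG $|A_i'| \le k$ and $\dim G_{A_i'} = |A_i'|$ for all $i$, and a generic row code satisfies the displayed identity exactly when $\sum_i |A_i'| \ge (m-1)k$.

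For the ``$\Leftarrow$'' direction, $\MDS(m)$ asserts that the generic value $\max\bigl(0,\sum_i |A_i'|-(m-1)k\bigr)$ is attained for every tuple of subsets of size $\le k$, so whenever $\sum_i |A_i'|\ge(m-1)k$ the displayed identity holds and $C$ corrects $E$, giving MR. For the ``$\Rightarrow$'' direction, given any $(A_1,\ldots,A_m)$ with $|A_i|\le k$ and $\sum_i|A_i|=(m-1)k$, take $E$ to have these $A_i$ as the non-erased coordinates; then generic corrects $E$, so MR forces $C$ to correct as well, and the identity yields $\dim(\cap_i G_{A_i})=0$. By the Remark following the definition, verifying $\MDS(m)$ in the tight case $\sum_i|A_i|=(m-1)k$ suffices to conclude the full $\MDS(m)$ condition. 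The main technical ingredient is the translation from erasure correction to the direct-sum/intersection identity; once this kernel-dimension calculation is in place, the rest is formal bookkeeping and an appeal to the Remark.
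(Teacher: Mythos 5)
Your translation of ``$C$ corrects $E$'' into the identity $\dim(G_{A_1'}\cap\cdots\cap G_{A_m'})=\sum_i\dim G_{A_i'}-(m-1)k$, via the kernel of $(x_1,\dots,x_m)\mapsto\sum_i\lambda_i x_i$ on $\prod_i G_{A_i'}^{\perp}$, is correct and is indeed the heart of the argument (the paper only cites this proposition, so I am judging your write-up on its own terms). The gap is the step where you identify the generic value of $\dim(W_{A_1}\cap\cdots\cap W_{A_m})$ with $\max\bigl(0,\sum_i|A_i|-(m-1)k\bigr)$. That formula is valid for disjoint sets but false in general: for overlapping sets the generic intersection contains $W_{A_1\cap\cdots\cap A_m}$ and more. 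Concretely, take $m=k=3$ and $A_1=A_2=A_3=\{1,2\}$: then $\sum_i|A_i|=(m-1)k$ yet $\dim(\cap_i W_{A_i})=2$, and the corresponding erasure pattern (columns $3,\dots,n$ fully erased) is \emph{not} generically correctable. This matters here because your $A_i'$ are complements of the per-row erasures and overlap heavily; the correct characterization of when the generic intersection vanishes is the partition-inequality condition of Theorem~\ref{thm:mds5}, not the single inequality $\sum_i|A_i'|\ge(m-1)k$.

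The two directions are affected unequally. The ``$\Leftarrow$'' direction is easily repaired because you never needed the formula: ``$E$ generically correctable'' \emph{means} the generic $W$ satisfies your displayed identity, and the definition of $\MDS(m)$ transfers that identity verbatim to $G$. The ``$\Rightarrow$'' direction has a real hole: maximal recoverability only yields $\cap_i G_{A_i}=0$ for tuples whose \emph{generic} intersection is zero (those are the generically correctable patterns), whereas $\MDS(m)$ demands $\dim(\cap_i G_{A_i})=\dim(\cap_i W_{A_i})$ for \emph{all} tuples with $\sum_i|A_i|=(m-1)k$, including those where the generic dimension is positive. Passing from ``the intersection vanishes whenever it generically vanishes'' to the full dimension-matching condition is precisely the content of Theorem~\ref{thm:mds5}; it is a genuine theorem, not a consequence of the Remark after the definition (which only reduces to the case $\sum_i|A_i|=(\ell-1)k$). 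Your proof must invoke that reduction (or reprove it) to close the forward direction.
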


A generalization of the Singleton bound was recently proved for list-decoding in \cite{shangguan2020combinatorial,roth2021higher,goldberg2021Singleton}. If an $\change{[}n,k\change{]}$-code is $(L,\rho)$-list-decodable\footnote{I.e., there are at most $L$ codewords in any Hamming ball of radius $\rho n.$ In contrast, average-radius list decoding says that from any point there are at most $L$ codewords with average Hamming distance of $\rho n$ from the point.}, then 
\begin{equation}
  \label{eq:LD_Singleton}
\rho \le \frac{L}{L+1}\left( 1-\frac{k}{n}\right).  
\end{equation}
Note that when $L=1,$ this reduces to the usual Singleton bound.
 Roth~\cite{roth2021higher} defined a higher order generalization of MDS codes as codes achieving this generalized Singleton bound for average-radius list-decoding.

\begin{definition}[List decodable-MDS codes~\cite{roth2021higher}]\label{def:ldmds}
	Let $C$ be a $\change{[}n,k\change{]}$-code. We say that $C$ is list decodable-$\MDS(L)$, denoted as $\LDMDS(L)$, if $C$ is $(L,\rho)$-average-radius list-decodable for $\rho=\frac{L}{L+1}\left( 1-\frac{k}{n}\right).$
 In other words, for any $y\in \F^n$, there does not exist $L+1$ \emph{distinct} codewords $c_0,c_1,\dots,c_L \in C$ such that 
	\begin{equation*}
		\sum_{i=0}^L \wt(c_i-y) \le (L+1)\rho n= L(n-k).
	\end{equation*}
	We say that $C$ is list $\LDMDS(\le L)$ if it is $\LDMDS(\ell)$ for all $1\le \ell \le L.$
\end{definition}

An equivalent way to define $\LDMDS(L)$ codes is using the parity check matrix $H_{(n-k)\times n}$ matrix of $C$ \cite{roth2021higher}. $C$ is $\LDMDS(L)$ if there doesn't exist $L+1$ \emph{distinct} vectors $e_0,e_1,\dots,e_L \in \F^n$ such that 
\begin{align*}
	\sum_{i=0}^L \wt(e_i) \le L(n-k) \text{ and }He_0=He_1=\dots=He_L.
\end{align*} 

\change{Note that the usual MDS codes are $\LDMDS(1).$} The list-decoding guarantees of $\LDMDS(L)$ are very strong. In particular, \change{$[n,k]$-}$\LDMDS(L)$ codes \change{approach \emph{list-decoding capacity}.} \change{Formally, we say that an $[n,k]$-code is $\eps$-close to list-decoding capacity if for all $y \in \F^n$, there exist at most $O_{\eps}(1)$ codewords within Hamming distance of $(1-\eps)n - k$ of $y$.} \change{Therefore, Definition~\ref{def:ldmds} implies that $[n,k]$-$\LDMDS(L)$ codes} get $\eps$-close to list-decoding capacity when $L\ge \change{\frac{n-k-\eps}{\eps n}}$. \change{Brakensiek, Gopi, and Makam}~\cite{bgm2022} \change{show} the equivalence between $\LDMDS$ codes and the dual of higher order MDS codes.

\begin{proposition}[$\LDMDS$ codes are the dual of higher order MDS codes~\cite{bgm2022}]\label{prop-listdec}
    If $C$ is a linear code then for all $\ell \ge 1$, $C$ is $\MDS(\ell+1)$ if and only if $C^\perp$ is $\LDMDS(\le \ell)$.
\end{proposition}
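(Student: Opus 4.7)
My plan is to prove the equivalence by induction on $\ell$. The base case $\ell = 1$ reduces to classical MDS duality: $\MDS(2)$ coincides with ordinary MDS, while $\LDMDS(1)$ for $C^\perp$ asserts that no two distinct vectors share a syndrome with combined weight at most $k$, equivalent to $C^\perp$ having minimum distance $k+1$, i.e., $C^\perp$ being MDS, which in turn is equivalent to $C$ being MDS. For the inductive step, assuming the equivalence through $\ell - 1$, we may assume $C$ is $\MDS(\ell)$ (equivalently, $C^\perp$ is $\LDMDS(\le \ell-1)$) and establish the additional equivalence between $\MDS(\ell+1)$ on $C$ and $\LDMDS(\ell)$ on $C^\perp$.

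For the $\Rightarrow$ direction, suppose $C^\perp$ violates $\LDMDS(\ell)$ via $\ell+1$ distinct vectors $e_0, \ldots, e_\ell$ with common syndrome $v = Ge_i$ and $\sum_i \wt(e_i) \le \ell k$. The case $v = 0$ is ruled out since $C^\perp$ has minimum distance $k+1$, so $\ell+1$ distinct elements of $C^\perp$ (with at most one zero) would have total weight exceeding $\ell k$. Setting $S_i := \supp(e_i)$, the MDS property of $G$ implies that, as long as $|S_i| \le k$, the preimage of $v$ supported in $S_i$ is unique; hence distinctness of the $e_i$'s forces the $S_i$'s to be pairwise distinct. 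Moreover $v \notin G_{\bigcap_i S_i}$, for otherwise the unique vector supported in $\bigcap_i S_i$ mapping to $v$ would, by uniqueness on each $S_i$, equal every $e_i$. Thus $\dim \bigcap_i G_{S_i} > |\bigcap_i S_i|$. Under the $\MDS(\ell)$ hypothesis, one verifies that the generic intersection $\dim \bigcap_i W_{S_i}$ equals the coordinate-forced value $|\bigcap_i S_i|$, yielding a direct violation of $\MDS(\ell+1)$. (If any $|S_i| > k$, one first replaces $e_i$ by its unique representative on a $k$-subset of $S_i$, using MDS.)

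For the $\Leftarrow$ direction, suppose $C$ violates $\MDS(\ell+1)$: there exist distinct subsets $A_0, \ldots, A_\ell$ of size at most $k$ (WLOG with $\sum_i |A_i| = \ell k$ by Remark~1.2) such that $\dim \bigcap_i G_{A_i} > \dim \bigcap_i W_{A_i} = |\bigcap_i A_i|$, the last equality again following from $\MDS(\ell)$. Pick $v \in \bigcap_i G_{A_i}$ outside $G_{\bigcap_i A_i}$. Let $e_i$ be the unique preimage of $v$ supported in $A_i$. If any collision $e_i = e_j$ occurs, then $v \in G_{A_i \cap A_j}$; propagating this observation across all collisions (using $\MDS(\ell)$ on the smaller sub-collections of $A_i$'s) would force $v \in G_{\bigcap_i A_i}$, contradicting the choice of $v$. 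Thus the $e_i$'s are distinct, producing $\ell+1$ vectors with common syndrome and total weight $\le \ell k$, violating $\LDMDS(\ell)$. The main obstacle is establishing the generic intersection formula $\dim \bigcap_i W_{A_i} = |\bigcap_i A_i|$ for distinct subsets of size at most $k$ summing to at most $\ell k$, together with the "no-collision" analysis that propagates pairwise collisions to a full collapse $v \in G_{\bigcap_i A_i}$; both are controlled by the inductive $\MDS(\ell)$ hypothesis and are what make the argument go through cleanly.
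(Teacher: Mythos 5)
You should first note that the paper does not actually prove this proposition; it is imported from \cite{bgm2022}, so your attempt must be measured against the argument there. Your dictionary is the right one and matches how \cite{bgm2022} sets things up: a common syndrome $v=Ge_i$ with supports $S_i$ corresponds to a vector of $\bigcap_i G_{S_i}$ lying outside $G_{\bigcap_i S_i}$, and MDS-ness makes preimages supported on sets of size at most $k$ unique. The gap is that both directions of your inductive step rest on the assertion that, for the distinct sets in question, the \emph{generic} intersection satisfies $\dim\bigcap_i W_{A_i}=|\bigcap_i A_i|$, and that assertion is false as stated. Take $k=5$, $\ell=2$, $A_0=\{1,2,3\}$, $A_1=\{1,2,4\}$, $A_2=\{5,6,7,8\}$: these are distinct, each of size at most $k$, with $|A_0|+|A_1|+|A_2|=2k$, yet generically $W_{A_0}\cap W_{A_1}=W_{\{1,2\}}$ and $\dim\bigl(W_{\{1,2\}}\cap W_{A_2}\bigr)\ge 2+4-5=1>0=|A_0\cap A_1\cap A_2|$. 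In general $\dim\bigcap_i W_{A_i}\ge \sum_{P\in\mathcal{P}}\bigl|\bigcap_{j\in P}A_j\bigr|-(|\mathcal{P}|-1)k$ for \emph{every} partition $\mathcal{P}$ of the index set (compare Theorem~\ref{thm:mds5}), and intermediate partitions can beat $|\bigcap_i A_i|$. Your phrase ``under the $\MDS(\ell)$ hypothesis, one verifies\dots'' cannot repair this, because the generic dimension is a purely combinatorial quantity depending only on the sets, not on $C$.

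What actually makes the argument work in \cite{bgm2022} is that the supports arising from a genuine violation satisfy much stronger constraints than distinctness: for every $T$ with $|T|\ge 2$ one has $v\notin G_{\bigcap_{j\in T}S_j}$, which combined with $\MDS(\ell)$ forces the partition inequalities of Theorem~\ref{thm:mds5}. Verifying those inequalities --- and, in the converse direction, reducing an arbitrary violating family to one from which genuinely distinct $e_i$ can be extracted --- is the bulk of that proof. Your one-sentence ``collision propagation'' hides the same difficulty: in the example above one gets $e_0=e_1$ with $v\in G_{\{1,2\}}$ but $v\notin G_{\emptyset}=0$, so pairwise collisions do \emph{not} propagate to $v\in G_{\bigcap_i A_i}$, and when the generic dimension exceeds $|\bigcap_i A_i|$ choosing $v$ outside $G_{\bigcap_i A_i}$ is not enough. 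Until you supply this combinatorial core, the proof is a correct skeleton with its load-bearing step missing. The treatment of supports of size larger than $k$ (replacing $e_i$ by a representative on a $k$-subset can destroy distinctness) is a secondary issue that also needs care.
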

Besides the connection to MR tensor codes and optimally list-decodable codes, higher order MDS codes are also shown to be intimately related to MDS codes whose generator matrices are constrained to have a specific support and the GM-MDS conjecture~\cite{bgm2022}. Such matrices have many applications in coding theory, see~\cite{dau2014gmmds}.

\subsection{Our Results}

Our main result exponentially improves the lower bound for higher order MDS codes.

\begin{theorem}\label{thm:exp-mds-lb}
Let $C$ be an $\change{[}n,k\change{]}$-code over the field $\F$ which is $\MDS(3)$. Then, $|\F| \ge \binom{n-2}{k-1}-1$.
\end{theorem}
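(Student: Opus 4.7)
The plan is to fix two columns of $G$ and use $\MDS(3)$ to produce many distinct points on the projective line they span. Let $v_1, \ldots, v_n$ denote the columns of $G$, and set $L = \Span(v_1, v_2)$, a $2$-dimensional subspace. For each $(k-1)$-subset $S \subseteq \{3, \ldots, n\}$, let $H_S = \Span(v_j : j \in S)$, which is $(k-1)$-dimensional since $C$ is MDS. Because $\{1,2\} \cup S$ consists of $k+1$ columns of an MDS code, $L + H_S = \F^k$, so the intersection $p_S := L \cap H_S$ is $1$-dimensional and hence defines a point on the projective line $\mathbb{P}(L) \cong \PF^1$.

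The core claim is that the $\binom{n-2}{k-1}$ points $\{p_S\}$ are pairwise distinct. Given two distinct $(k-1)$-subsets $S, S' \subseteq \{3, \ldots, n\}$, I would apply $\MDS(3)$ to $A_1 = \{1,2\}, A_2 = S, A_3 = S'$; the sizes sum to $2 + (k-1) + (k-1) = 2k$, as required by the remark following the definition. If I can show that the generic intersection $\dim(W_{A_1} \cap W_{A_2} \cap W_{A_3})$ equals $0$, then $\MDS(3)$ forces $L \cap H_S \cap H_{S'} = \{0\}$, which in turn means $p_S \neq p_{S'}$ (else the common line would lie in all three). Once distinctness is established, the bound follows from $|\mathbb{P}(L)| = |\F| + 1 \geq \binom{n-2}{k-1}$.

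The main (and only nontrivial) step is the generic dimension calculation. I would handle it by first computing $\dim(W_S \cap W_{S'}) = \max(t, k-2)$, where $t = |S \cap S'|$: when $|S \cup S'| \leq k$ the spans are independent and $W_S \cap W_{S'} = W_{S \cap S'}$ has dimension $t$, while when $|S \cup S'| > k$ the generic-dimension formula in $\F^k$ yields $(k-1) + (k-1) - k = k-2$. In either situation $W_S \cap W_{S'}$ has dimension at most $k-2$, and since the index set $\{1,2\}$ is disjoint from $S \cup S'$, the $2$-dimensional generic subspace $W_{\{1,2\}}$ meets it trivially in $\F^k$. This closes the argument. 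The subtlety I expect to have to be careful about is the range $t \in \{0,1,\ldots,k-2\}$ and confirming that $W_{\{1,2\}}$ behaves as a generic plane independent of $W_S \cap W_{S'}$ in all these cases.
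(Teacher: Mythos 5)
Your proof is correct and is essentially the paper's argument: both fix $A_1=\{1,2\}$, associate to each $(k-1)$-subset $S\subseteq\{3,\dots,n\}$ a point of a projective line, and use the $\MDS(3)$ condition on $(\{1,2\},S,S')$ (after verifying the generic intersection is $0$) to show these points are pairwise distinct, forcing $|\F|+1\ge\binom{n-2}{k-1}$. The only difference is cosmetic: you take the primal point $p_S=\Span(v_1,v_2)\cap G_S$ in $\mathbb{P}(\Span(v_1,v_2))$, while the paper normalizes $v_1=e_1$, $v_2=e_2$ and records the dual point $(w_1^{S}:w_2^{S})\in\PF^1$ coming from $G_S^{\perp}$ — these are the same map.
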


By Proposition~\ref{prop:mrtc_mdsell}, we have the following corollary for MR tensor codes.

\begin{corollary}
 Let $C$ be an $(m,n,1,b)$-tensor code over the field $\F$ then  $|\F|\ge \binom{n-2}{b-1}-1$.   
\end{corollary}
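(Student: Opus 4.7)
The plan is to construct an explicit injective map $\phi$ from the $\binom{n-2}{k-1}$ subsets of size $k-1$ of $\{3,\ldots,n\}$ into the projective line $\PG(\langle v_1, v_2\rangle)$, where $v_1, \ldots, v_n \in \F^k$ denote the columns of the generator matrix $G$. Since the projective line has only $|\F|+1$ points, injectivity of $\phi$ immediately yields $\binom{n-2}{k-1} \le |\F| + 1$, as desired. I may assume $n \ge k+2$, since otherwise the bound is vacuous.

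For each $(k-1)$-subset $T \subseteq \{3, \ldots, n\}$ I define
\[
\phi(T) \;:=\; G_T \,\cap\, \langle v_1, v_2 \rangle.
\]
Because $C$ is MDS, the $k+1$ columns indexed by $T \cup \{1, 2\}$ span $\F^k$, and a standard rank computation gives $\dim \phi(T) = (k-1) + 2 - k = 1$. Thus $\phi(T)$ is a genuine point of the projective line $\PG(\langle v_1, v_2\rangle)$, and the map $\phi$ is well-defined.

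Now I turn to the main step, injectivity. Suppose for contradiction that $T \neq T'$ with $\phi(T) = \phi(T')$; then a nonzero vector $p$ lies in $G_T \cap G_{T'} \cap \langle v_1, v_2\rangle$, so this triple intersection has dimension at least $1$. I apply the MDS(3) identity to $A_1 = T$, $A_2 = T'$, $A_3 = \{1, 2\}$; these have sizes $k-1, k-1, 2$ summing to $2k = (\ell-1)k$, as in the Remark. To derive a contradiction, I need the generic dimension $\dim(W_T \cap W_{T'} \cap W_{\{1,2\}})$ to equal $0$. Setting $R := T \cap T'$, which has size $s$ with $0 \le s \le k-2$, I pass to the quotient $\F^k / W_R$ of dimension $k - s$. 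Since $R \subseteq \{3,\ldots,n\}$ is disjoint from $\{1,2\}$, we have $W_R \cap W_{\{1,2\}} = 0$ generically, so the images of $W_T$, $W_{T'}$, and $W_{\{1,2\}}$ in the quotient are generic subspaces of dimensions $k-1-s$, $k-1-s$, and $2$ coming from disjoint index sets; the standard generic intersection bound gives $\max(0, (k-1-s)+(k-1-s)+2-2(k-s)) = 0$, and this lifts back to dimension $0$ in $\F^k$. The MDS(3) identity then forces $\dim(G_T \cap G_{T'} \cap \langle v_1, v_2\rangle) = 0$, contradicting the existence of $p$.

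The main obstacle I anticipate is carrying out the generic dimension calculation uniformly across all configurations of $T$ and $T'$. The quotient-by-$W_{T\cap T'}$ trick reduces every case to the cleanest setting of three generic subspaces arising from disjoint index sets, after which the inclusion-exclusion bound is immediate; the only subtlety is verifying that dimensions lift correctly through the quotient, which reduces to the MDS-driven fact $W_R \cap W_{\{1,2\}} = 0$.
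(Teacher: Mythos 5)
Your argument never engages with the actual hypothesis of the statement, which concerns an $(m,n,1,b)$-\emph{tensor} code, not an $[n,k]$-$\MDS(3)$ code. You work throughout with a single generator matrix $G\in\F^{k\times n}$ and freely assume that the code is MDS and satisfies the $\MDS(3)$ intersection identities for the triples $(T,T',\{1,2\})$ --- but these properties are exactly what must be \emph{derived} from the tensor-code hypothesis, and that derivation is the entire content of this corollary. The paper's proof is precisely the reduction: by Proposition~\ref{prop:mrtc_mdsell}, a maximally recoverable $(m,n,1,b)$-tensor code $C_{col}\otimes C_{row}$ has row code $C_{row}$ which is an $[n,n-b]$-$\MDS(m)$ code, hence $\MDS(3)$ (for $m\ge 3$), and then Theorem~\ref{thm:exp-mds-lb} applied with $k=n-b$ gives $|\F|\ge\binom{n-2}{n-b-1}-1=\binom{n-2}{b-1}-1$. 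None of this --- the passage to $C_{row}$, the use of maximal recoverability, the identification $k=n-b$, or the binomial symmetry converting $\binom{n-2}{k-1}$ into $\binom{n-2}{b-1}$ --- appears in your writeup, so as it stands the proposal proves a different statement and leaves the stated one unestablished.

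That said, what you did write is a correct, self-contained proof of the underlying field-size bound for $\MDS(3)$ codes (Theorem~\ref{thm:exp-mds-lb}, via Lemma~\ref{lem:2k-1k-1}). Your map $T\mapsto G_T\cap\langle v_1,v_2\rangle$ into the projective line is the primal counterpart of the paper's map $A\mapsto(w_1^A:w_2^A)$ built from the dual spaces $V_A^{\perp}$; the well-definedness and injectivity arguments are sound, and your quotient-by-$W_{T\cap T'}$ computation of the generic intersection dimension is a valid substitute for checking the combinatorial criterion of Corollary~\ref{cor:red} (indeed $|T\cap T'|+|\{1,2\}|\le k$ since $T\ne T'$). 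To repair the proposal, prepend the one-paragraph reduction via Proposition~\ref{prop:mrtc_mdsell} described above, noting that one needs $C$ to be maximally recoverable and $m\ge 3$ for the $\MDS(3)$ property of $C_{row}$ to be available.
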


Proposition~\ref{prop-listdec}, implies the following corollary for $\LDMDS$ codes.
\begin{corollary}\label{cor:ldmds}
   Let $C$ be an $\change{[}n,k\change{]}$-code over the field $\F$ which is $\LDMDS(\le 2)$ then $\F\ge \binom{n-2}{k-1}-1$ 
\end{corollary}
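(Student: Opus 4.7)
The plan is to reduce the corollary to Theorem~\ref{thm:exp-mds-lb} via the duality stated in Proposition~\ref{prop-listdec}. Concretely, if $C$ is an $[n,k]$-code over $\F$ that is $\LDMDS(\le 2)$, then applying Proposition~\ref{prop-listdec} with $\ell = 2$ shows that the dual code $C^\perp$ is $\MDS(3)$. Since $C^\perp$ is an $[n, n-k]$-code over the same field $\F$, I would then feed it directly into Theorem~\ref{thm:exp-mds-lb}, with the role of ``$k$'' played by $n-k$.

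From Theorem~\ref{thm:exp-mds-lb}, this gives the bound
\[
|\F| \;\ge\; \binom{n-2}{(n-k) - 1} - 1 \;=\; \binom{n-2}{n-k-1} - 1.
\]
The final step is to invoke the elementary symmetry $\binom{n-2}{n-k-1} = \binom{n-2}{(n-2)-(n-k-1)} = \binom{n-2}{k-1}$, which converts the above expression into $\binom{n-2}{k-1} - 1$, matching the statement of the corollary.

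I do not expect any real obstacle here: the entire content of the argument is packaged in the duality of Proposition~\ref{prop-listdec} (already proved in \cite{bgm2022}) and in Theorem~\ref{thm:exp-mds-lb} (the main result of this paper), together with a trivial binomial identity. The only thing to be careful about is that the duality is applied with $\ell = 2$ (so that $\MDS(\ell+1) = \MDS(3)$ matches the hypothesis of Theorem~\ref{thm:exp-mds-lb}) and that the dimensions are tracked correctly under the passage $C \leftrightarrow C^\perp$.
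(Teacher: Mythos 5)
Your proposal is correct and matches the paper's (implicit) derivation: the corollary is stated as a direct consequence of Proposition~\ref{prop-listdec}, i.e., $C$ being $\LDMDS(\le 2)$ makes $C^\perp$ an $[n,n-k]$-$\MDS(3)$ code, to which Theorem~\ref{thm:exp-mds-lb} applies, and the symmetry $\binom{n-2}{n-k-1}=\binom{n-2}{k-1}$ finishes the argument. No gaps.
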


In particular, we see that if $C$ is of constant rate and $\LDMDS(\le 2)$ then we would need exponential in $n$ field size. For applications to list decoding, this is only talking about
information-theoretically-optimal\footnote{Here we mean that the
  tradeoff between list decoding radius and list size is
  \emph{exactly} as specified by the generalized Singleton bound.}
average-radius list decoding.

\change{As worst-case list decoding is a weaker notion than average-radius list decoding, the lower bound of Corollary~\ref{cor:ldmds} does not immediately apply. However, we show that an exponential lower bound can also be proven in the worst-case setting.}

\begin{theorem}\label{thm:ldlb}
Let $n \ge k \ge 0$ be such that $n-k$ is divisible by $3$. Let $C$ be
an $\change{[}n,k\change{]}$-MDS code which is $(2, \frac{2(n-k)}{3n})$-worst-case list decodable, i.e., it matches the list-decoding Singleton bound (\ref{eq:LD_Singleton}) for $L=2$. Then, $C$ requires field
size $\binom{(n+2k)/3}{k-1} - 1.$
\end{theorem}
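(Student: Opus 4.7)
The plan is to proceed by contrapositive: assuming $|\F| < \binom{(n+2k)/3}{k-1} - 1$, we construct three distinct codewords of $C$ within Hamming distance $\tfrac{2(n-k)}{3}$ of some $y \in \F^n$, contradicting the worst-case list-decoding hypothesis. Writing $m = (n+2k)/3$, the equivalent goal is to find distinct nonzero $c_1, c_2 \in C$ together with subsets $T_0, T_1, T_2 \subseteq [n]$ of size $m$ satisfying $c_1|_{T_0 \cap T_1} = 0$, $c_2|_{T_0 \cap T_2} = 0$, and $c_1|_{T_1 \cap T_2} = c_2|_{T_1 \cap T_2}$. Given such data, one defines $y$ by $y|_{T_0} = 0$ and $y|_{T_i} = c_i|_{T_i}$ for $i = 1, 2$; the three codewords $0, c_1, c_2$ then all lie within Hamming distance $n - m = 2(n-k)/3$ of $y$, giving the contradiction.

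To execute the construction, I would fix $T_0 \subseteq [n]$ of size $m$, set $A_0 = [n] \setminus T_0$, pick two distinguished coordinates $j_1, j_2 \in A_0$, and partition $A_0 \setminus \{j_1, j_2\}$ into disjoint sets $E_1, E_2$ of size $(n-k-3)/3$ each (integrality uses $3 \mid n-k$). For each $(k-1)$-subset $B \subseteq T_0$, the MDS property provides a unique $1$-dimensional subspace $L_B \subseteq C$ of codewords vanishing on $B$; any generator $u_B$ has zero set exactly $B$, since a nonzero MDS codeword has at most $k-1$ zero coordinates. In particular $u_B(j_1), u_B(j_2) \ne 0$, so the projective class $[u_B(j_1) : u_B(j_2)] \in \PF^1$ is well defined. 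The key pigeonhole step is that $\binom{m}{k-1} > |\F| + 1 = |\PF^1|$ by hypothesis, so there exist distinct $B_1, B_2 \subseteq T_0$ sharing the same projective class; after rescaling $u_{B_2}$ we may assume $u_{B_1}(j_\ell) = u_{B_2}(j_\ell)$ for both $\ell = 1, 2$. Setting $c_i := u_{B_i}$ and $T_i := B_i \cup \{j_1, j_2\} \cup E_i$, each $T_i$ has size $(k-1) + 2 + (n-k-3)/3 = m$.

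The verification is then routine: $c_i|_{T_0 \cap T_i} = c_i|_{B_i} = 0$ by construction, and on $T_1 \cap T_2 = (B_1 \cap B_2) \cup \{j_1, j_2\}$ the codewords $c_1$ and $c_2$ agree (both vanish on $B_1 \cap B_2$ and coincide at $\{j_1, j_2\}$ after rescaling). Moreover $T_0 \cup T_1 \cup T_2 = T_0 \cup A_0 = [n]$, so $y$ is defined everywhere, and the three codewords $0, c_1, c_2$ are pairwise distinct because each $c_i$ is nonzero with zero set $B_i$ and $B_1 \ne B_2$. The main technical point---and what dictates the exact form of the lower bound---is engineering $T_1$ and $T_2$ so that $T_1 \cap T_2$ reduces, modulo coordinates on which both codewords automatically vanish, to exactly the two pigeonholed coordinates $\{j_1, j_2\}$; this is what converts a pigeonhole over $\PF^1$ (of size $|\F| + 1$) into the quantitative bound $\binom{m}{k-1}$ and explains why the exponent on the binomial shrinks from the $n-2$ of Theorem~\ref{thm:exp-mds-lb} to $(n+2k)/3$ here (since the supports are constrained to the smaller set $T_0$ of size $m$ rather than being essentially arbitrary).
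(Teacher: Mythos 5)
Your proof is correct, and it takes a genuinely more direct route than the paper. The paper passes to the parity-check matrix $H$ of $C$, shows that any failure of the generic intersection condition $H_{A_1}\cap H_{A_2}\cap H_{A_3}=0$ for equal-size sets $|A_i|=\tfrac{2}{3}(n-k)$ yields either a list-decoding violation or a low-weight codeword contradicting MDS, and then invokes an intermediate result (Lemma~\ref{lem:splitlb}) that converts the equal-size-split condition into a field-size bound via a projection/quotient argument reducing to Lemma~\ref{lem:2k-1k-1}. You instead work entirely in the primal code and explicitly build the violating list: the unique (up to scale) codeword $u_B$ vanishing exactly on a $(k-1)$-subset $B\subseteq T_0$ is, after a change of basis, the normal vector $w^B$ of Lemma~\ref{lem:2k-1k-1}, and your invariant $[u_B(j_1):u_B(j_2)]\in\PF^1$ is exactly the paper's $[w^B_1:w^B_2]$ --- so the combinatorial core (pigeonhole over $\PF^1$, with the $(k-1)$-subsets confined to an $m$-set with $m=(n+2k)/3$ to account for the radius) is identical, but you dispense with the generic-intersection formalism, Corollary~\ref{cor:red}, and the rank-nullity bookkeeping of Lemma~\ref{lem:splitlb} entirely, and your padding sets $E_1,E_2$ play precisely the role of the partition $I_2\cup I_3$ there. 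What the paper's modular route buys is that Lemma~\ref{lem:splitlb} stands alone as a statement about the equal-split $\MDS(3)$ condition, reusable outside the list-decoding application; what yours buys is a short, self-contained, fully explicit construction of the bad received word. Your counting checks out ($|T_i|=m$, radius $n-m=2(n-k)/3$, and $\binom{m}{k-1}=\binom{(n+2k)/3}{k-1}$ matches the stated bound); the only caveat, shared with the paper's proof, is the implicit assumption $n-k\ge 3$ so that $j_1\ne j_2$ can be chosen (the case $n=k$ is degenerate for the theorem itself).
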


\begin{remark}
  This lower bound does \emph{not} say that error-correcting codes
  achieving list decoding capacity require exponential field size
  (which contradicts the known list decoding capacity of random linear
  codes~\cite{Guruswami2010OnTL}). Instead is just says that codes over subexponential fields
  cannot achieve capacity with the \emph{exactly optimal} list size. See Section~\ref{sec:sub} for a discussion on how this bound has been circumvented.
\end{remark}

\begin{remark}
  Also note that this lower bound only applies to MDS codes, as non-MDS codes already have a suboptimal tradeoff for list-size $1$.
\end{remark}

\subsubsection{Explicit constructions}

Our first explicit construction is a general construction for $\change{[}n,k\change{]}-\MDS(\ell)$ codes field size construction. Prior to this work no constructions for general values of $n,k,$ and $l$ were known.

\begin{theorem}\label{thm-con-gen}
There is an explicit $\change{[}n,k\change{]}$-$\MDS(\ell)$-code over field size
$n^{(\ell k)^{O(\ell k)}}$.
\end{theorem}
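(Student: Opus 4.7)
The plan is to give an explicit Reed--Solomon construction by derandomizing the generic existence of $\MDS(\ell)$ codes. I would take $G$ to be the $k \times n$ Vandermonde matrix $[\alpha_j^{i-1}]$ and argue that the evaluation points $\alpha_1, \ldots, \alpha_n$ can be chosen, over a field of size $n^{(\ell k)^{O(\ell k)}}$, so that the dimension identity~(\ref{eq:MDS-L}) holds for every relevant tuple of subsets.

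First, I would cast $\MDS(\ell)$ as a finite collection of polynomial non-vanishing conditions. By the remark after Definition~1.3, it suffices to verify~(\ref{eq:MDS-L}) only for tuples $\vec A = (A_1, \ldots, A_\ell)$ of subsets of $[n]$ with each $|A_i| \le k$ and $|A_1|+\cdots+|A_\ell| = (\ell - 1)k$; there are at most $\binom{n}{k}^\ell \le n^{O(\ell k)}$ such tuples. For each such $\vec A$, the generic dimension $d_{\vec A} := \dim(W_{A_1} \cap \cdots \cap W_{A_\ell})$ is a fixed integer, and the inequality $\dim(G_{A_1} \cap \cdots \cap G_{A_\ell}) \ge d_{\vec A}$ holds unconditionally, so the $\MDS(\ell)$ condition reduces to a Zariski-open condition: the non-vanishing of a polynomial $P_{\vec A}(G)$ in the entries of $G$ (for instance, a maximal non-vanishing minor of an auxiliary matrix built from the columns of $G$ indexed by $A_1 \cup \cdots \cup A_\ell$ which certifies that the intersection does not drop dimension further).

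Next, specializing to the Vandermonde choice $G = [\alpha_j^{i-1}]$, each $P_{\vec A}$ becomes a polynomial $Q_{\vec A}(\alpha_1, \ldots, \alpha_n)$ whose degree must be bounded. I would realize $Q_{\vec A}$ as the determinant (or a quotient of determinants) of an $O(\ell k) \times O(\ell k)$ matrix whose entries are polynomials in the $\alpha_i$'s of degree at most $k - 1$; assembling this matrix so that it captures the intersection of $\ell$ column spans simultaneously---rather than via nested pairwise intersections---gives a degree bound of $(\ell k)^{O(\ell k)}$ in each $\alpha_i$. To make the construction explicit, I then select $\alpha_1, \ldots, \alpha_n$ greedily in $\F_q$ with $q = n^{(\ell k)^{O(\ell k)}}$: given $\alpha_1, \ldots, \alpha_{i-1}$, the bad values of $\alpha_i$ are the roots of at most $n^{O(\ell k)}$ univariate polynomials each of degree $(\ell k)^{O(\ell k)}$, for a total of $n^{O(\ell k)} \cdot (\ell k)^{O(\ell k)} \ll q$ forbidden values, so a valid $\alpha_i$ always exists and can be found by brute-force enumeration.

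The main obstacle is the degree bound in the middle step. A naive nested-intersection derivation of $Q_{\vec A}$ multiplies degrees by $\Theta(k)$ at each of the $\ell - 1$ levels of recursion, and the resulting symbolic expression is delicate to control; producing the claimed $(\ell k)^{O(\ell k)}$ bound---rather than something substantially larger---requires either a symmetric exterior-algebra (Pl\"ucker) formulation of the $\ell$-fold intersection or a careful choice of which minor to use as $P_{\vec A}$, so that the intersection dimension is read off from a single linear-algebraic system of size $O(\ell k)$. Once this is in hand, the greedy derandomization is routine because $Q_{\vec A}$ can be evaluated efficiently from its bounded-size determinantal description.
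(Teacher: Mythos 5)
There is a genuine gap at the heart of your argument: you never establish that the polynomials $Q_{\vec A}(\alpha_1,\ldots,\alpha_n)$ are nonzero. The non-vanishing of $P_{\vec A}(G)$ for a fully generic matrix $G$ holds essentially by definition, but your construction restricts $G$ to the Vandermonde locus, a very special subvariety of $\F^{k\times n}$; a polynomial that is nonzero on generic matrices can perfectly well vanish identically on all Vandermonde matrices. The assertion that it does not---i.e., that Reed--Solomon codes with generic evaluation points attain every generic intersection pattern---is precisely the GM-MDS theorem (Theorem~\ref{thm:gm-mds}), a nontrivial result that the paper invokes as the crux of its proof (its Claim that $\widetilde{V}$ is $\MDS(\ell)$ is proved by finding, via GM-MDS, an $F_0$-substitution for the variables that makes each determinant nonzero). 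Without this, your greedy selection has nothing to avoid: the ``bad set'' for $\alpha_i$ could be all of $\F_q$. A secondary but real issue is that your greedy step tacitly assumes that after substituting $\alpha_1,\ldots,\alpha_{i-1}$ each restricted polynomial remains not identically zero in $\alpha_i$; maintaining that invariant requires an argument (e.g., working with the product $\prod_{\vec A} Q_{\vec A}$ and its individual degrees), which you do not supply.

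Your identification of the degree bound as the main obstacle is also misdirected. Lemma~\ref{lem:mds5} already expresses each condition as a single determinant of an $\ell k\times\ell k$ matrix whose nonconstant entries are powers $\alpha_j^{s}$ with $s\le k-1$, so $Q_{\vec A}$ has total degree $O(\ell k^2)$ and individual degree $O(\ell k)$ in each $\alpha_j$---polynomial, not $(\ell k)^{O(\ell k)}$, in $\ell k$. The exponential factor in the paper's field size comes not from degrees but from its derandomization mechanism: the paper writes each evaluation point as an $F_0$-linear form $p_i(\alpha_1,\ldots,\alpha_{\ell k})$ in only $\ell k$ free generators (using that each condition involves at most $\ell k$ columns and that any $\ell k$ of the forms $p_i$ are linearly independent), and then takes a tower of $\ell k$ field extensions each of degree $\ell k^2$, which kills every nonzero polynomial of the relevant individual degree without any search. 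If your route were completed---GM-MDS to certify non-vanishing plus a correct greedy invariant---it would actually yield field size $n^{O(\ell k)}$, but the resulting object is found by search rather than given in closed form, and the ingredient you are missing is exactly the hard theorem.
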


We now give a high level overview of our construction. We first convert the generic intersection condition~(\ref{eq:MDS-L}) into a determinant condition using Lemma~\ref{lem:mds5}. 
Let $p \ge n+k-1$ be a prime power and consider the field extension $\F_p[\alpha_1,\hdots,\alpha_{k\ell}]$ over the base field $\F_p$ where each $\alpha_i$ has an extension degree $D=\ell k^2$ over $\F_p[\alpha_1,\dots,\alpha_{i-1}]$. In particular, this means that any non-zero polynomial $p(x_1,x_2,\dots,x_{\ell k})$ with $\F_p$ coefficients and individual degree at most $D-1$, cannot vanish at $(\alpha_1,\alpha_2,\dots,\alpha_{\ell k})$. Similar ideas were also used in the doubly exponential $\MDS(3)$ construction of \change{Roth~\cite{roth2021higher} and Shangguan-Tamo~\cite{shangguan2020combinatorial}}.

But here, we depart from previous constructions. Our $\change{[}n,k\change{]}$-$\MDS(\ell)$ code $C$ is the Reed-Solomon code generated by $n$ $\F_p$-linear combinations of $\alpha_1,\dots,\alpha_{k\ell}$ such that any $\ell k$ generators are linearly independent over $\F_p$, this can be achieved using a Reed-Solomon code over $\F_p.$ The key step is then to show that the determinant obtained from Lemma~\ref{lem:mds5} is a non-zero polynomial in $\alpha_1,\alpha_2,\dots,\alpha_{\ell k}$ (it is easy to see that its individual degree is at most $D-1$). Here we crucially use the GM-MDS theorem~\cite{lovett2018gmmds,yildiz2019gmmds} (which requires $p\ge n+k-1$) and the $\F_p$ linear independence of our evaluation points to show that there is an $\F_p$ substitution to $\alpha_1,\dots,\alpha_{\ell k}$ which makes the determinant non-zero. This shows that the determinant polynomial is indeed non-zero.

Our next construction is in the specific case of $\change{[}n,3\change{]}$-$\MDS(3)$ codes which is the smallest non-trivial case of a higher-order MDS code.

\begin{theorem}\label{thm-con-three}
There exists an explicit $\change{[}n,3\change{]}$-$\MDS(3)$ code with field size $O(n^3)$.    
\end{theorem}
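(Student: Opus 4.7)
The first step is to reduce the $\MDS(3)$ property to a concrete family of determinantal non-vanishing conditions. By Lemma~3.1 of \cite{bgm2021mds} it suffices to verify (\ref{eq:MDS-L}) for triples $A_1, A_2, A_3 \subseteq [n]$ with $|A_1| + |A_2| + |A_3| = (\ell-1)k = 6$. Up to permutation the size profiles are $(0,3,3)$, $(1,2,3)$, and $(2,2,2)$; the first two reduce to the ordinary MDS property of $C$, which holds for a Reed-Solomon-style code with distinct evaluation points $\alpha_1,\dots,\alpha_n$. The only genuinely new condition is the $(2,2,2)$ case: for any three pairwise disjoint pairs $\{a_i, b_i\}_{i=1,2,3}$ of evaluation points, the three lines in $\PF^2$ joining the Veronese lifts $(1, a_i, a_i^2)$ and $(1, b_i, b_i^2)$ must not be concurrent. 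Computing cross products, these three lines have dual vectors $(a_i b_i, -(a_i + b_i), 1)$ (after factoring out the nonzero scalar $b_i - a_i$), so the non-concurrency condition becomes
\begin{equation*}
\det\begin{pmatrix} 1 & a_1 + b_1 & a_1 b_1 \\ 1 & a_2 + b_2 & a_2 b_2 \\ 1 & a_3 + b_3 & a_3 b_3 \end{pmatrix} \ne 0,
\end{equation*}
i.e., the three points $(a_i + b_i, a_i b_i) \in \F^2$ are not collinear.

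The construction takes $\F = \F_{q^3}$ with $q = \Theta(n)$ a prime power, giving $|\F| = O(n^3)$. Fix $\theta \in \F_{q^3}$ so that $\{1, \theta, \theta^2\}$ is an $\F_q$-basis. We take the $n$ evaluation points to be $\alpha_i = \beta_i + \gamma_i \theta + \delta_i \theta^2$ for carefully chosen $(\beta_i, \gamma_i, \delta_i) \in \F_q^3$, for instance along a parametric curve $\alpha_i = t_i + t_i^r \theta + t_i^s \theta^2$ for distinct $t_i \in \F_q$ and appropriate exponents $r, s$, or with $(\beta_i, \gamma_i, \delta_i)$ ranging over a Sidon-like subset of $\F_q^3$. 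Expanding the determinant above in the basis $\{1, \theta, \theta^2\}$ yields three $\F_q$-components $D_0, D_1, D_2$, and the non-vanishing condition becomes the statement that at least one of $D_0, D_1, D_2$ is nonzero for every disjoint triple. The purpose of the extension is to provide the algebraic room to break the structured degeneracies that kill naive choices in $\F_q$: for $\alpha_i = i$ the disjoint pairs $\{1,6\}, \{2,5\}, \{3,4\}$ share a common pair-sum, so the corresponding $(s,p)$-points lie on a common vertical line; for cyclic $\alpha_i = \omega^i$ the pairs with equal $i+j$ share a common pair-product and give a horizontal collinearity. Lifting to $\F_{q^3}$ pushes these coincidences out of $\F_q$ and lets us certify non-collinearity by showing non-vanishing of just one of the three $\F_q$-components.

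The main obstacle is the simultaneous verification of the non-vanishing condition for all $\Theta(n^6)$ triples of disjoint pairs while maintaining $|\F| = O(n^3)$. A Schwartz-Zippel-style union bound over random $\alpha_i \in \F$ only yields $|\F| = O(n^6)$, so an explicit algebraic structure on the $\alpha_i$ is needed. The strategy is to exhibit evaluation points whose $\F_q$-coordinates satisfy a single unifying algebraic identity that certifies at least one of $D_0, D_1, D_2$ to be nonzero for every disjoint triple. This is in the spirit of the proof of Theorem~\ref{thm-con-gen} via the GM-MDS theorem~\cite{lovett2018gmmds,yildiz2019gmmds}, but exploiting that the $(2,2,2)$ profile is the only nontrivial constraint for $\ell = k = 3$ lets us work in a single degree-$3$ extension rather than the much larger tower used in the general construction.
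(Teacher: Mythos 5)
Your reduction to the $(2,2,2)$ case and to the determinant condition on the points $(1,\ a_i+b_i,\ a_ib_i)$ is correct and matches Lemma~\ref{lemma:mds3}, and your choice of ambient field --- a degree-$3$ extension of $\F_q$ with $q=\Theta(n)$ --- is exactly the right setting. But the proof stops where the actual work begins. You write that ``the strategy is to exhibit evaluation points whose $\F_q$-coordinates satisfy a single unifying algebraic identity that certifies at least one of $D_0,D_1,D_2$ to be nonzero for every disjoint triple,'' but you never exhibit such points or such an identity; the candidate families you float (a parametric curve $t+t^r\theta+t^s\theta^2$ with unspecified $r,s$, or a ``Sidon-like'' subset of $\F_q^3$) are left unverified, and that verification over all $\Theta(n^6)$ triples is precisely what separates an $O(n^3)$ construction from the non-explicit $O(n^5)$ bound. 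This is a genuine gap, not a routine omission.

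For comparison, the paper's proof takes $q$ a power of $7$, a cube root $\gamma$ with $\gamma^3=2$, and evaluation points $\beta_\alpha=\alpha+\gamma\alpha^2$ for $\alpha$ ranging over a set $S\subseteq\F_q$ of size $n$ in which \emph{no six elements sum to zero} --- this is the extra combinatorial condition your proposal is missing, and it is why $q\ge 7n$ suffices. The determinant is then a cubic in $\gamma$ with coefficients $p_0,\dots,p_3\in\F_q[x_1,\dots,x_6]$; the relation $\gamma^3=2$ folds $p_3$ into the constant term (this is what lets a degree-$3$ extension absorb four coefficients, a point your setup does not address), and a Groebner basis computation (Claim~\ref{claim:groebner}) shows that the ideal $(p_0+2p_3,\,p_1,\,p_2)$ contains $(x_1+\cdots+x_6)\prod_{i<j}(x_j-x_i)$, which is nonzero on any injective tuple from $S$. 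These are the ingredients your argument would need to supply.
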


This improves on the earlier explicit construction of size
$O(n^{32})$~\cite{roth2021higher} and is only a factor $n$ away from
the current best lower bound of $\Omega(n^2)$~\cite{bgm2021mds}. We also
construct some explicit codes for $k = 4$ and $k = 5$.

\begin{theorem}\label{thm-con-four}
There exists an explicit $\change{[}n,4\change{]}$-$\MDS(3)$ code with field size $O(n^7)$.    
\end{theorem}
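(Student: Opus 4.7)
The plan is to follow the template of Theorems~\ref{thm-con-gen} and \ref{thm-con-three}, specialized to $k=4$ and $\ell=3$. I would take $\F_q = \F_{p^m}$ with $p$ a prime power of size $\Theta(n)$ satisfying $p \ge n+k-1$, and $m$ chosen so that $p^m = O(n^7)$. Realize $\F_q$ as an extension $\F_p[\alpha_1,\ldots,\alpha_s]$ in which $\alpha_1,\ldots,\alpha_s$ are ``generic enough'' in the sense that any polynomial of bounded individual degree that is not identically zero over $\F_p$ does not vanish at $(\alpha_1,\ldots,\alpha_s)$. The code $C$ is then the Reed-Solomon code whose $n$ evaluation points are explicit $\F_p$-linear combinations of a small set of monomials in the $\alpha_i$, chosen so that any $\ell k = 12$ of the evaluation points are $\F_p$-linearly independent. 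As in the proof of Theorem~\ref{thm-con-gen}, this last property can be enforced by taking the coefficient vectors of the evaluation points to be columns of the generator matrix of an auxiliary $\change{[}n,12\change{]}$ Reed-Solomon code over $\F_p$.

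The next step is to invoke Lemma~\ref{lem:mds5} to reduce the $\MDS(3)$ condition to the non-vanishing of a finite family of determinant polynomials in $\alpha_1,\ldots,\alpha_s$, one for each triple $A_1,A_2,A_3 \subseteq [n]$ with $|A_1|+|A_2|+|A_3| = (\ell-1)k = 8$. For each such polynomial, I would apply the GM-MDS theorem~\cite{lovett2018gmmds,yildiz2019gmmds} together with the $\F_p$-linear independence of any $12$ of the evaluation points to exhibit an explicit $\F_p$ substitution to $\alpha_1,\ldots,\alpha_s$ under which the determinant evaluates to a nonzero element of $\F_p$. This establishes that the determinant is not identically zero as a polynomial, and then genericity of the $\alpha_i$ over $\F_p$ rules out vanishing when evaluated at the actual $\alpha_i$ in $\F_q$.

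The main obstacle is sharpening the degree bound enough to force field size only $O(n^7)$. The naive bound $D = \ell k^2 = 48$ per variable across $\ell k = 12$ variables that is used in Theorem~\ref{thm-con-gen} is far too wasteful. The plan is to perform a careful case analysis over the admissible size profiles $(|A_1|,|A_2|,|A_3|)$, namely $(4,4,0)$, $(4,3,1)$, $(4,2,2)$, and $(3,3,2)$ up to permutation, and to identify precisely which monomials in $\alpha_1,\ldots,\alpha_s$ can occur in each Lemma~\ref{lem:mds5} determinant. The target is to reduce both the number of extension variables $s$ and the effective per-variable degree enough that the product $s\cdot(\text{per-variable degree})$ equals $7$ (after absorbing $p = \Theta(n)$), yielding $p^m = O(n^7)$. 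Once non-vanishing of every such determinant polynomial is established, the resulting Reed-Solomon code over $\F_q$ is the desired explicit $\change{[}n,4\change{]}$-$\MDS(3)$ code.
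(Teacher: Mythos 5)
Your proposal has a genuine gap at its quantitative core. The template of Theorem~\ref{thm-con-gen} uses a tower of extensions $F_0 \subset F_1 \subset \cdots \subset F_{\ell k}$, one step per variable $\alpha_i$, and the total extension degree is the \emph{product} of the per-step degrees, not the sum: with $s$ variables each of degree $D$ over the previous field, the field size is $p^{D^s}$, not $p^{sD}$. Your stated target, making ``$s\cdot(\text{per-variable degree})$'' equal to $7$, is therefore not the right quantity; with $s = \ell k = 12$ variables the exponent is at least $2^{12}$ for any nontrivial extension, so no amount of sharpening the individual degree bound within this framework can reach $O(n^7)$. Reducing the number of variables below $\ell k$ is also not available to you, because the GM-MDS substitution step relies on solving a $12 \times 12$ Vandermonde system to send the evaluation points to the $\gamma_i$'s guaranteed by Theorem~\ref{thm:gm-mds}; with fewer free variables that system is overdetermined and the non-vanishing argument breaks. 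A smaller issue: by Corollary~\ref{cor:red} the only nontrivial size profile for $k=4$ is $(2,3,3)$, not the list you give (profiles containing a set of size $k$ reduce to the ordinary MDS condition).

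The paper's proof is structurally different and does not pass through GM-MDS at all. It proves Theorem~\ref{thm:thm4}: a single degree-$(2k-1)$ extension $\F_q[\gamma]$ with evaluation points $\beta_i = \gamma\alpha_i - \alpha_i^2$, chosen so that $\beta_i - \beta_j = (\alpha_i-\alpha_j)\bigl(\gamma - (\alpha_i+\alpha_j)\bigr)$ splits into linear factors in $\gamma$. After reducing to disjoint sets via Corollary~\ref{cor:weak} and to a $2\times 2$ determinant via Lemma~\ref{lem:prod-mat}, the condition becomes the inequality of two explicit degree-$(k-1)$ polynomials in $\gamma$, which is settled by comparing their root multisets $\{\alpha_1+\alpha_i\}$ and $\{\alpha_2+\alpha_i\}$ and deriving $|A_2'|(\alpha_2-\alpha_1)=0$, a contradiction when the characteristic exceeds $k$. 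For $k=4$ this gives exponent $2k-1=7$. You would need to supply an idea of this kind; the proposal as written does not contain one.
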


\begin{theorem}%
There exists an explicit $\change{[}n,5\change{]}$-$\MDS(3)$ code with field size $O(n^{50})$.    
\end{theorem}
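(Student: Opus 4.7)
The plan is to follow the same template as Theorems~\ref{thm-con-three} and~\ref{thm-con-four}, specialized to the case $(k,\ell) = (5,3)$. The construction will be a Reed--Solomon code with evaluation points $\alpha_1,\dots,\alpha_n$ drawn from an extension of a prime power $p$ with $p \ge n+k-1 = n+4$ (so that the GM-MDS theorem can be applied as in Theorem~\ref{thm-con-gen}); the target bound $O(n^{50})$ suggests taking $p = \Theta(n)$ and $\alpha_i$'s living in $\F_{p^d}$ for $d$ of order $50$.

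The first step is to apply Lemma~\ref{lem:mds5} to reduce the $\MDS(3)$ condition to the simultaneous non-vanishing of a finite collection of determinant polynomials $P_{A_1,A_2,A_3}(\alpha_1,\dots,\alpha_n)$, one for each triple of subsets $A_1, A_2, A_3 \subseteq [n]$ with $|A_i| \le 5$ and $|A_1| + |A_2| + |A_3| = 2k = 10$ (using the remark immediately following the definition of $\MDS(\ell)$). Each such $P_{A_1,A_2,A_3}$ is a minor of the Vandermonde-type Reed--Solomon generator matrix and only involves the $\alpha_i$ with $i \in A_1 \cup A_2 \cup A_3$, a set of size at most $10$.

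The second step is a degree analysis of these determinants for $k = 5$. By the GM-MDS-based argument in the proof of Theorem~\ref{thm-con-gen}, each $P_{A_1,A_2,A_3}$ is not identically zero over $\F_p$. A case analysis on the shape $(|A_1|, |A_2|, |A_3|)$---ranging over partitions of $10$ into three parts each at most $5$, i.e.\ $(5,5,0)$, $(5,4,1)$, $(5,3,2)$, $(4,4,2)$, $(4,3,3)$---together with the specific structure of the $5 \times 5$ minors that appear in Lemma~\ref{lem:mds5}, should yield a total degree bound of at most $50$ in the $\alpha_i$'s.

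The third step is to pick evaluation points that avoid the zero locus of every such polynomial simultaneously. Taking $\alpha_i = \beta^{e_i}$ for a primitive element $\beta \in \F_{p^d}$ and generic exponents $e_i$, together with a Schwartz--Zippel-style union bound over the $O(n^{10})$ relevant triples, shows that a $d$ of constant order suffices once the degree bound above is established, giving a total field size $p^d = O(n^{50})$. The main obstacle is the degree computation in step two: for $k = 5$ the case analysis is substantially more intricate than for $k = 3, 4$, and extracting the exponent $50$ (rather than a much larger constant) requires a careful accounting of the Vandermonde-type cancellations in each of the five subset shapes; once that bound is in hand, steps one and three are essentially routine.
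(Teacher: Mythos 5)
There is a genuine gap, and it sits in your step three rather than step two. ``Take $\alpha_i = \beta^{e_i}$ for generic exponents and apply a Schwartz--Zippel union bound over the $O(n^{10})$ triples'' is the standard \emph{non-explicit} existence argument: it shows a good choice of evaluation points exists over a field of size roughly $(\text{number of constraints})\times(\text{degree}) = O(n^{10})$, but it does not produce one, and the arithmetic does not yield $O(n^{50})$ either way. The GM-MDS argument from Theorem~\ref{thm-con-gen} that you invoke only certifies that each determinant is a nonzero polynomial in \emph{free} variables; transferring that non-vanishing to concrete evaluation points in a small field is exactly the hard part, and the only general mechanism in the paper for doing so (a tower of extensions, each of degree $\ell k^2$) costs $n^{(\ell k)^{O(\ell k)}}$, vastly more than $n^{50}$. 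A total-degree bound of $50$ on the determinants, even if established, does not by itself give an explicit field of size $n^{50}$.

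The paper's route is quite different and is worth absorbing. The exponent $50$ is $2k^2$ with $k=5$, and it arises as $(n^2)^{k^2}$, not as a degree bound of $50$ in the $\alpha_i$'s. Concretely: (i) one first observes that for $k \le 5$, any balanced triple with all $|A_i| \le k-1$ must have some $|A_i| = k-1$, which reduces everything to Theorem~\ref{thm:5} (the case $|A_1|+|A_2|=k+1$, $|A_3|=k-1$); (ii) the base points $\alpha_1,\dots,\alpha_n$ are chosen from a Sidon set in $\F_q$ with $q \approx n^2$ (pairwise sums are distinct), which forces the coprimality condition $\gcd(\beta_{i_1}-\beta_{i_2},\,\beta_{j_1}-\beta_{j_2})=1$ for the evaluation \emph{polynomials} $\beta_i(x) = \alpha_i x - \alpha_i^2 \in \F_q[x]$; (iii) after reducing to disjoint sets via Corollary~\ref{cor:weak} and to a structured determinant via Lemma~\ref{lem:prod-mat} (not the block determinant of Lemma~\ref{lem:mds5}), an infinite-descent/gcd argument shows the determinant is a nonzero polynomial in $x$ of degree at most $k^2$; (iv) substituting a root of a degree-$k^2$ irreducible over $\F_q$ then gives field size $q^{k^2} = O(n^{2k^2}) = O(n^{50})$. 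None of the three load-bearing ingredients --- the reduction to a set of size exactly $k-1$, the Sidon-set coprimality, and the descent argument in $\F_q[x]$ --- appears in your outline, so as written the proposal does not go through.
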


As the dual of $\MDS(3)$ is also $\MDS(3)$~\cite{roth2021higher,bgm2021mds} the three constructions above also give us constructions for $\change{[}n,n-k\change{]}$-$\MDS(3)$ for $k=3,4,5$. The above constructions involve carefully analyzing the algebraic conditions Reed-Solomon codes need to satisfy to have the higher order MDS property and carefully selecting the evaluation points so that we can argue for their correctness directly or in some cases reduce to a simple check which can be performed by a computer program.

\subsection{Comparison with prior work}
\renewcommand{\arraystretch}{1.5}
\begin{table}[!ht]
\centering
\begin{tabular}{ |c|c|c|c|}
\hline
\change{$\MDS(\ell)$ Code}& \change{Field Size} & \change{Explicit?} & \change{References}\\
\hline
$\change{[}n,k\change{]}$-$\MDS(3)$ & $2^{k^{n}}$ & \change{Yes} &\cite{shangguan2020combinatorial}\\
\hline
$\change{[}n,k\change{]}$-$\MDS(4)$  &$2^{(3k)^{n}}$ & \change{Yes} & \cite{shangguan2020combinatorial} \\
\hline
$\change{[}n,k\change{]}$-$\MDS(3)$ &$n^{k^{2k}}$ & \change{Yes} & \cite{roth2021higher} \\
\hline
$\change{[}n,k\change{]}$-$\MDS(\ell)$  &$n^{O(\min\{k,n-k\}(\ell-1))}$ & \change{No} & \cite{bgm2021mds,kong2021new,bgm2022}\\
\hline
$\change{[}n,k\change{]}$-$\MDS(\ell)$ &$n^{(lk)^{O(lk)}}$& \change{Yes} & Theorem~\ref{thm-con-gen}\\
\hline
\hline
$\change{[}n,3\change{]}$-$\MDS(3)$  &$\change{O(n^{32})}$ & \change{Yes} & \cite{roth2021higher}\\
\hline
$\change{[}n,3\change{]}$-$\MDS(3)$ &$\change{O(n^{5})}$ & \change{No} & \change{\cite{kong2021new,roth2021higher}}\\
\hline
\change{$\change{[}n,n-2\change{]}$-$\MDS(4)$} & \change{$O(n^5)$} & \change{No} & \cite{kong2021new}\\
\hline
$\change{[}n,3\change{]}$-$\MDS(3)$&$\change{O(n^3)}$ & \change{Yes} & Theorem~\ref{thm-con-three}\\
\hline
$\change{[}n,4\change{]}$-$\MDS(3)$  &$\change{O(n^7)}$ & \change{Yes} & Theorem~\ref{thm-con-four}\\
\hline
$\change{[}n,5\change{]}$-$\MDS(3)$ &$\change{O(n^{50})}$ & \change{Yes} & Theorem~\ref{thm-con-four}\\
\hline
\end{tabular}
\caption{Table showing \change{known} upper bounds for the field size of $\change{[}n,k\change{]}$-$\MDS(\ell)$ codes.}\label{tbl:up}
\end{table}
\renewcommand{\arraystretch}{1}

As \change{Table~\ref{tbl:up}} shows, prior to our work no general explicit constructions were known. As mentioned earlier, as the dual of $\MDS(3)$ is also $\MDS(3)$~\cite{roth2021higher,bgm2021mds} any $\change{[}n,k\change{]}$-$\MDS(3)$ construction in the table above gives us a $\change{[}n,n-k\change{]}$-$\MDS(3)$ construction as well. \change{In the MR tensor code literature, no explicit upper bounds on the field size were previously known (beyond cases equivalent to MDS codes). Note that by Proposition~\ref{prop:mrtc_mdsell}, constructing $[n,k]$-$\MDS(m)$ codes is equivalent to constructing $(m,n,a=1,b=n-k)$-MR tensor codes. Some non-explicit constructions and lower bounds for MR tensor codes for small values of $a,b$ are given in \cite{kong2021new}, which we translated into the language of $\MDS(\ell)$ codes in Tables~\ref{tbl:up} and \ref{tbl:lo}. For example, our explicit $[n,3]$-$\MDS(3)$ code gives an $(m=3,n,a=1,b=3)$-MR tensor code over fields of size $O(n^3)$, which improves over the non-explicit $O(n^5)$ field size construction given in \cite{kong2021new}.}  We also see that even after our work there is an exponential gap in the exponent of the field size between the explicit and non-explicit constructions.

\renewcommand{\arraystretch}{1.5}
\begin{table}[!ht]
\centering
\begin{tabular}{ |c|c|c|c|}
\hline
\change{$\MDS(\ell)$ Code} & \change{Field Size} & \change{Reference}\\
\hline
$\change{[}n,n-2\change{]}$-$\MDS(4)$& $\Omega(n^2)$ & \cite{kong2021new} \\
\hline
\change{$\change{[}n,3\change{]}$-$\MDS(3)$} & \change{$\Omega(n)$} & \change{\cite{kong2021new}} \\
\hline
$\change{[}n,k\change{]}$-$\MDS(\ell)$&$\Omega_{\ell,k}(n^{\min\{\ell,k,n-k\}-1})$& \cite{bgm2021mds}\\
\hline
$\change{[}n,k\change{]}$-$\MDS(3)$ & $\binom{n-2}{k-1}-1$ & Theorem~\ref{thm:exp-mds-lb} \\
\hline
\end{tabular}
\caption{Table showing \change{known} lower bounds for the field size of $\change{[}n,k\change{]}$-$\MDS(\ell)$ codes.}\label{tbl:lo}
\end{table}
\renewcommand{\arraystretch}{1}

\change{Table~\ref{tbl:lo} presents the known lower bounds.} \change{Observing} that any $\change{[}n,k\change{]}$-$\MDS(\ell)$ code is also $\MDS(\ell'),\ell'\le \ell$\change{,} we see that our lower bound is an exponential improvement over the earlier best known lower bounds (say when $k=n/2$ and $\ell$ is constant).

\subsection{Open Questions}

We conclude the introduction with the following intriguing open
questions. First, we would like to close the gap between the
existential upper and lower bounds for $\MDS(\ell)$ codes.

\begin{question}
Can we match the existential upper and lower bounds for
$\change{[}n,k\change{]}$-$\MDS(\ell)$ for $\ell \ge 3$? In particular, can we improve the current lower
bound from $\Omega_{k}(n^k)$ to $\Omega_{\ell,k}(n^{\Omega(\ell k)})$?
\end{question}

Second, we would like to close the gap between the existential upper
bounds and explicit constructions.

\begin{question}
Can we construct explicit $\change{[}n,k\change{]}$-$\MDS(\ell)$ codes with field size
$O_{\ell,k}(n^{(\ell-1)k})$?
\end{question}

Finally, we would like to get a truly optimal construction in the
$\change{[}n,3\change{]}$-$\MDS(3)$ case.

\begin{question}
Do there exist $\change{[}n,3\change{]}$-$\MDS(3)$ codes over fields of size $O(n^2)$?
\end{question}

Because of the rich connections higher order MDS codes, progress on any of these questions, particularly the constructions of explicit codes, will lead to new insights and applications for distributed storage and list decoding.

\subsection{Subsequent Work}\label{sec:sub}

Since this paper was first posted, there have been a number of follow-up works. An improved upper bound on the field size of $\change{[}n,k\change{]}$-$\MDS(\ell)$ codes was proved by \change{Athi, Chigullapally, Krishnan, and Lalitha, V}~\cite{athi2023structure} for some choices of $n,k,\ell$. Theorems~\ref{thm:exp-mds-lb} and~\ref{thm:ldlb} were generalized by \change{Alrabiah, Guruswami, and Li}~\cite{alrabiah2023ag} to show that any code, linear or non-linear, which is $\eps$-close to achieving list-decoding capacity requires a field size of at least $\exp(\Omega(1/\eps))$. They prove this result by extending the techniques of this paper. In another line of work, Theorem~\ref{thm:ldlb} has been circumvented entirely \cite{guo2023randomly,alrabiah2023randomly,bdg2023b} by considering suitable \emph{relaxations} of higher order MDS codes in order to construct linear codes which achieve (average-radius) list-decoding capacity over polynomial-sized or constant-sized fields.

\change{\paragraph{Acknowledgments.}
We thank anonymous reviewers for many helpful comments.}

\section{Preliminaries}

In this section, we state a couple of definitions and known theorems about higher order
MDS codes which will be useful for us.

An $\change{[}n,k\change{]}$-Reed-Solomon code with \emph{generators} $\alpha_1, \hdots,
\alpha_n \in \F$ is the code with the following \change{generator matrix}:
\[
  \begin{pmatrix}
    1 & 1 & \cdots & 1\\
    \alpha_1 & \alpha_2 & \cdots & \alpha_n\\
    \alpha_1^2 & \alpha_2^2 & \cdots & \alpha_n^2\\
    \vdots & \vdots & \ddots & \vdots\\
    \alpha_1^{k-1} & \alpha_2^{k-1} & \cdots & \alpha_n^{k-1}
  \end{pmatrix}.
\]

We now discuss properties of general higher order MDS codes.

\begin{theorem}[\cite{bgm2022}]\label{thm:mds5}
  Let $V$ be a $k \times n$ matrix. Then, $V$ is $\change{[}n,k\change{]}$-$\MDS(\ell)$
  if and only if for all $A_1, \hdots, A_\ell \subseteq [n]$ with
  $|A_i| \le k$ and $|A_1| + \cdots + |A_\ell| = (\ell-1)k$, we have
  that $V_{A_1} \cap \cdots \cap V_{A_\ell} = 0$ whenever for all partitions
  $P_1 \cup \cdots \cup P_s = [\ell]$ we have that
  \begin{align}
    \sum_{i=1}^s \left|\bigcap_{j \in P_i} A_j \right| \le (s-1)k.\label{part:ineq}
  \end{align}
\end{theorem}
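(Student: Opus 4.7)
The plan is to analyze the generic dimension of the intersection $W_{A_1} \cap \cdots \cap W_{A_\ell}$, show that the partition condition characterizes exactly when that dimension vanishes, and then match this against $V$ in both directions.

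First I would establish a combinatorial generic dimension formula: for any tuple $(A_1, \ldots, A_\ell)$ with $|A_i| \le k$ and $|A_1| + \cdots + |A_\ell| = (\ell-1)k$, a generic $k \times n$ matrix $W$ satisfies
\[
  \dim(W_{A_1} \cap \cdots \cap W_{A_\ell}) \;=\; \max_{P} \Bigl( \sum_{i=1}^s \Bigl|\bigcap_{j \in P_i} A_j\Bigr| \;-\; (s-1)k \Bigr)^+,
\]
where the maximum ranges over partitions $P = P_1 \sqcup \cdots \sqcup P_s$ of $[\ell]$ and $(\cdot)^+$ denotes the positive part. The discrete partition (all singletons) contributes exactly $\sum_j |A_j| - (\ell-1)k = 0$, so the right-hand side is always nonnegative, and it equals zero precisely when the partition inequality (\ref{part:ineq}) holds for every partition. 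I would prove this by induction on $\ell$, starting from the classical $\ell=2$ fact $\dim(W_A \cap W_B) = \max(0, |A|+|B|-k)$ for generic $W$ and peeling off one subset at a time, tracking how a partition of $[\ell{-}1]$ either absorbs the newly added subset into an existing part or spawns a new one.

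For the forward direction, assume $V$ is $\MDS(\ell)$, and let $(A_1, \ldots, A_\ell)$ satisfy both hypotheses of the theorem. By the generic dimension formula above, $\dim(W_{A_1} \cap \cdots \cap W_{A_\ell}) = 0$, and by the $\MDS(\ell)$ definition this forces $\dim(V_{A_1} \cap \cdots \cap V_{A_\ell}) = 0$, as required.

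For the reverse direction, suppose $V_{A_1} \cap \cdots \cap V_{A_\ell} = 0$ for every tuple meeting the hypotheses. By the remark preceding the theorem, it suffices to show $\dim(V_{A_1} \cap \cdots \cap V_{A_\ell})$ matches the generic dimension whenever $\sum |A_i| = (\ell-1)k$. I would induct on $\ell$; the base case $\ell = 2$ reduces to ordinary $\MDS$ since the partition inequality then reads $|A_1 \cap A_2| \le 0$. For the inductive step, given a tuple not satisfying the partition inequality, pick a partition $P$ with $s < \ell$ parts where $\sum_i |B_i| > (s-1)k$, with $B_i = \bigcap_{j \in P_i} A_j$. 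Using $\MDS(\ell) \Rightarrow \MDS(\ell')$ for $\ell' < \ell$ (by padding with $A = [n]$) and the inductive hypothesis applied to each sub-collection $(A_j)_{j \in P_i}$, one can describe $\bigcap_{j \in P_i} V_{A_j}$ in terms of $V_{B_i}$ plus a controlled number of extra directions, and then combine these across $i$ to reduce the computation to a ``critical'' tuple where the hypothesis applies.

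The main obstacle is the reduction step in the reverse direction: $\bigcap_{j \in P_i} V_{A_j}$ need not equal $V_{B_i}$, so one must carefully match the extra dimensions appearing on both the $V$-side and the $W$-side. A likely route is to augment the original tuple by enlarging some $A_j$ to $A_j \cup X$ for a carefully chosen $X \subseteq [n]$, producing a new tuple that still has size-sum $(\ell-1)k$ and now satisfies the partition inequality (and so is directly covered by the hypothesis), then subtract to recover information about the original intersection. Getting the bookkeeping right so that the augmentation genuinely reduces to a critical case without overshooting the target dimensions is where most of the technical work lies.
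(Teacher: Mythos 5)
First, a point of comparison: the paper does not prove Theorem~\ref{thm:mds5} at all --- it is imported wholesale from \cite{bgm2022} --- so there is no in-paper proof to measure your attempt against, and I am judging the proposal on its own terms. Your architecture (a generic dimension formula $\dim(W_{A_1}\cap\cdots\cap W_{A_\ell})=\max_P\bigl(\sum_i|\bigcap_{j\in P_i}A_j|-(s-1)k\bigr)^+$, followed by a reduction of arbitrary tuples to the null case) is indeed the route taken in \cite{bgm2021mds,bgm2022}, so the plan is sound. But as written it is an outline with both hard steps missing. For the dimension formula, the lower bound is easy (intersect part by part and use $\dim(U\cap U')\ge\dim U+\dim U'-k$), but the matching upper bound is a genuine theorem, and your stated $\ell=2$ base case is wrong: for generic $W$ one has $\dim(W_A\cap W_B)=\max(|A\cap B|,\,|A|+|B|-k)$, not $\max(0,|A|+|B|-k)$ (take $A=B$), and the correct base case is exactly what drives the ``absorb into an existing part or spawn a new one'' bookkeeping. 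The peeling induction must also contend with the fact that $\bigcap_{i<\ell}W_{A_i}$ is \emph{not} in general position relative to $W_{A_\ell}$, which is precisely why the upper bound is not a one-line rank count.

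Second, and more seriously, the reverse direction is the entire content of the theorem, and your proposal explicitly stops at the key step (``getting the bookkeeping right \ldots is where most of the technical work lies''). Identifying that $\bigcap_{j\in P_i}V_{A_j}\ne V_{B_i}$ is the obstacle is correct, but you give no mechanism for the augmentation $A_j\mapsto A_j\cup X$ that keeps the total size at $(\ell-1)k$, lands on a tuple satisfying (\ref{part:ineq}), and still lets you read off the dimension of the original intersection; that is exactly where \cite{bgm2022} does the real work. Finally, note that the statement needs $V$ to be MDS as a standing hypothesis (as the paper's own Corollary~\ref{cor:red} assumes): the null conditions alone do not force $\dim V_A=|A|$ --- a $2\times 3$ matrix with columns $e_1,e_2,0$ satisfies every null condition for $\ell=2$ yet is not $\MDS(2)$ --- so your claim that the $\ell=2$ base case ``reduces to ordinary MDS'' already fails without that hypothesis, and the ``actual dimension at least generic dimension'' half of the reverse direction silently relies on it as well.
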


We \emph{prove} the following simple corollary for $\MDS(3)$\change{, which shows we can ignore that case that $|A_i| = k$ for some $i \in [3]$.}

\begin{corollary}\label{cor:red}
  Let $V$ be a $k \times n$ \change{$\MDS$} matrix. Then, $V$ is $\change{[}n,k\change{]}$-$\MDS(3)$
  if and only if for all $A_1, A_2, A_3 \subseteq [n]$ with
  $|A_i| \le k-1$ and $|A_1| + |A_2| + |A_3| = 2k$, we have that
  $V_{A_1} \cap V_{A_2} \cap V_{A_3} = 0$ whenever it generically
  should; that is, the following conditions hold:
  \begin{itemize}
    \item $|A_1 \cap A_2 \cap A_3| = 0$.
    \item $|A_{\pi(1)} \cap A_{\pi(2)}| + |A_{\pi(3)}| \le k$ for all
      permutations $\pi : [3] \to [3]$.
  \end{itemize}
\end{corollary}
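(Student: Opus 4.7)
The plan is to reduce Corollary~\ref{cor:red} to Theorem~\ref{thm:mds5} by showing that the only substantive triples are those with all $|A_i|\le k-1$, while triples containing some $|A_i|=k$ are handled automatically by the MDS hypothesis. For the forward direction, the two bullet points in the statement are exactly the nontrivial specialisations of the partition inequality from Theorem~\ref{thm:mds5} at $\ell=3$: the first bullet is the $s=1$ partition $\{1,2,3\}$, the second records the three $s=2$ partitions, and the $s=3$ partition is precisely the equality $|A_1|+|A_2|+|A_3|=2k$. So if $V$ is $[n,k]$-$\MDS(3)$, the intersection vanishes on every triple satisfying the hypothesis of the corollary.

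For the converse, I apply Theorem~\ref{thm:mds5} with $\ell=3$ and split on how many of the $|A_i|$ equal $k$. If all $|A_i|\le k-1$, the conclusion is exactly the hypothesis. If at least two of them equal $k$, the constraint $|A_1|+|A_2|+|A_3|=2k$ forces the remaining $A_j$ to be empty, so $V_{A_j}=\{0\}$ and the triple intersection vanishes trivially. The only remaining case is that exactly one $|A_i|$ equals $k$, say $|A_3|=k$. Then $V_{A_3}=\F^k$ because $V$ is MDS, so the triple intersection collapses to $V_{A_1}\cap V_{A_2}$.

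To finish this last case, the $s=2$ partition inequality $|A_1\cap A_2|+|A_3|\le k$ together with $|A_3|=k$ forces $A_1\cap A_2=\emptyset$. Combined with $|A_1|+|A_2|=k$ this gives $|A_1\cup A_2|=k$, at which point the standard MDS identity $V_{A_1}\cap V_{A_2}=V_{A_1\cap A_2}$ -- valid whenever $|A_1\cup A_2|\le k$, via the inclusion $V_{A_1\cap A_2}\subseteq V_{A_1}\cap V_{A_2}$ combined with the dimension count $\dim(V_{A_1}\cap V_{A_2})=|A_1|+|A_2|-|A_1\cup A_2|=|A_1\cap A_2|$ -- yields $V_{A_1}\cap V_{A_2}=V_\emptyset=\{0\}$. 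There is no real obstacle; the argument is a short bookkeeping reduction, and the only subtlety is remembering to invoke the MDS property to collapse full-rank factors out of the intersection.
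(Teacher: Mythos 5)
Your proposal is correct and follows essentially the same route as the paper: reduce to Theorem~\ref{thm:mds5}, observe that the two bullets together with $|A_1|+|A_2|+|A_3|=2k$ are exactly the partition inequalities for $\ell=3$, and dispose of the case $|A_i|=k$ by noting $V_{A_i}=\F^k$ so the triple intersection collapses to a pairwise one handled by the MDS hypothesis. The only cosmetic difference is that you carry out the $\MDS(2)$ step by an explicit dimension count (and separately note the degenerate two-sets-of-size-$k$ case), where the paper simply cites that MDS implies $\MDS(2)$.
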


\begin{proof}
  \change{First,} the combinatorial conditions on the sets $A_1, A_2, A_3$ \change{are necessary} by considering suitable partitions of $[3]$ \change{and invoking Theorem~\ref{thm:mds5}}.

 \change{Second, we show that the conditions are sufficient.} By
  Theorem~\ref{thm:mds5}, it suffices to check \change{when one of the $A_i$'s has size
  exactly $k$} \change{that}
  $V_{A_1} \cap V_{A_2} \cap V_{A_3}\change{ = 0}$  \change{if and only if (\ref{part:ineq}) holds}. Assume without loss of generality that $|A_3| =
  k$. Then $V_{A_3} = \F^k$, so $\change{V_{A_1} \cap V_{A_2} \cap V_{A_3} = } V_{A_1} \cap V_{A_2}$. Since $V$ is $\MDS$, $V$ is also $\MDS(2)$
  (see \cite{bgm2021mds}), so $\change{V_{A_1} \cap V_{A_2} =} 0$ if and only if
  it generically should.
\end{proof}

We shall also use the following matrix identity for checking
$\MDS(\ell)$ conditions.

\begin{lemma}[\cite{tian2019formulas,bgm2021mds}]\label{lem:mds5}
  Let $V$ be a $k \times n$ matrix. Consider
  $A_1, \hdots, A_\ell \subseteq [n]$ with $|A_i| \le k$ and
  $|A_1| + \cdots + |A_\ell| = (\ell-1)k$, we have that
  $V_{A_1} \cap \cdots \cap V_{A_\ell} = 0$ if and only if
  \[\det \begin{pmatrix}
    I_k & V\change{|}_{A_1} & & & \\
    I_k & & V\change{|}_{A_2} & &\\
    \vdots & & & \ddots &\\
    I_k & & & &  V\change{|}_{A_{\change{\ell}}}
  \end{pmatrix} \neq 0,\]
  where $V\change{|}_{A_i}$ denotes the submatrix of $V$ with columns indexed by $A_i$.
\end{lemma}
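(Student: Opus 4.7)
The plan is to analyze the lemma by a direct kernel computation, exploiting the block structure. First I would observe that the displayed matrix $M$ is square of size $\ell k \times \ell k$: it has $\ell k$ rows (from the $\ell$ stacked copies of $I_k$) and $k + |A_1| + \cdots + |A_\ell| = k + (\ell-1)k = \ell k$ columns, so $\det M$ is well-defined and the claim reduces to showing that $\ker M = 0$ if and only if $V_{A_1} \cap \cdots \cap V_{A_\ell} = 0$.

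To set up the correspondence between kernel elements and elements of the intersection, I would parameterize a vector in $\F^{\ell k}$ as $(x, y_1, y_2, \ldots, y_\ell)^{\top}$ with $x \in \F^k$ corresponding to the first $k$ columns (the $I_k$ blocks) and $y_i \in \F^{|A_i|}$ corresponding to the columns labeled by $V|_{A_i}$. Applying $M$ to this vector, the $i$-th block of $k$ rows evaluates to $x + V|_{A_i} y_i$, so the vector lies in $\ker M$ precisely when $x = -V|_{A_i} y_i$ for every $i \in [\ell]$. In particular, any such $x$ belongs to $V_{A_i}$ for each $i$, and hence to the intersection $V_{A_1} \cap \cdots \cap V_{A_\ell}$.

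From here the two directions fall out. If $\bigcap_i V_{A_i} = 0$, then any kernel element has $x = 0$ and thus $V|_{A_i} y_i = 0$ for all $i$; since the lemma is applied to an MDS $V$ (as in Corollary~\ref{cor:red}), each submatrix $V|_{A_i}$ (with $|A_i| \le k$) has linearly independent columns, so $y_i = 0$, forcing $\ker M = 0$ and $\det M \neq 0$. Conversely, given any non-zero $x \in \bigcap_i V_{A_i}$, I would pick vectors $y_i$ with $V|_{A_i} y_i = -x$; then $(x, y_1, \ldots, y_\ell)$ is a non-zero kernel element, so $\det M = 0$. The only delicate point is the implicit need for each $V|_{A_i}$ to have full column rank in order to rule out spurious $y_i \neq 0$ contributions in the forward direction, which is precisely where the ambient MDS hypothesis on $V$ quietly enters; once this is flagged, the remainder is a routine rearrangement and the main care is just in tracking the block dimensions.
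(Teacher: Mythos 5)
Your proof is correct, and in fact the paper offers no proof to compare against: Lemma~\ref{lem:mds5} is quoted from \cite{tian2019formulas,bgm2021mds} without argument, so your kernel computation is exactly the self-contained justification one would want. The block bookkeeping checks out (the matrix is $\ell k \times \ell k$ since $k + \sum_i |A_i| = \ell k$), the identification of kernel vectors $(x,y_1,\dots,y_\ell)$ with $x = -V|_{A_i}y_i$ for all $i$ is right, and both directions follow. You are also right to flag the one genuine subtlety: the general identity is $\dim\ker M = \dim\bigl(\bigcap_i V_{A_i}\bigr) + \sum_i \bigl(|A_i| - \dim V_{A_i}\bigr)$, so the ``only if'' direction of the lemma as literally stated (for an arbitrary $k\times n$ matrix $V$) fails when some $V|_{A_i}$ has deficient column rank — e.g.\ a zero column makes $\det M = 0$ while the intersection can still be trivial. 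This is an imprecision in the statement, not in your argument; in every application in the paper (Corollary~\ref{cor:red}, the Reed--Solomon constructions) $V$ is MDS, so each $V|_{A_i}$ with $|A_i|\le k$ automatically has full column rank and your forward direction goes through.
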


The following is useful for constructing Reed-Solomon codes over $\change{[}n,3\change{]}$-$\MDS(3)$.

\begin{lemma}[\cite{bgm2021mds,roth2021higher}]\label{lemma:mds3}
Let $V$ be an $\change{[}n,3\change{]}$-Reed-Solomon code with evaluation points $\beta_1,
\hdots, \beta_n \in \F$. Then, $V$ is $\MDS(3)$ if and only if for all injective
maps $\alpha : [6] \to [n]$ we have that
\[
  \det \begin{pmatrix} 1 & \beta_{\alpha(1)} + \beta_{\alpha(2)} &
    \beta_{\alpha(1)}\beta_{\alpha(2)}\\
1 & \beta_{\alpha(3)} + \beta_{\alpha(4)} &
    \beta_{\alpha(3)}\beta_{\alpha(4)}\\1 & \beta_{\alpha(5)} + \beta_{\alpha(6)} &
    \beta_{\alpha(5)}\beta_{\alpha(6)}\\
  \end{pmatrix} \neq 0.
\]
\end{lemma}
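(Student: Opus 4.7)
The plan is to reduce $\MDS(3)$ for the Reed--Solomon generator to the explicit $3\times 3$ determinant using Corollary~\ref{cor:red}, then dualize each two-dimensional column subspace $V_{A_i} \subseteq \F^3$ into a single linear form that can be read off from the roots $\beta_{\alpha(j)}$.

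First I would apply Corollary~\ref{cor:red}: since $V$ is already MDS (it is Reed--Solomon), $V$ is $\MDS(3)$ iff for every $A_1,A_2,A_3 \subseteq [n]$ with $|A_i|=2$, $|A_1|+|A_2|+|A_3|=6$, $|A_1\cap A_2\cap A_3|=0$, and $|A_i\cap A_j|\le 1$, the intersection $V_{A_1}\cap V_{A_2}\cap V_{A_3}$ is trivial. Next I would dispose of the case in which some pair meets: if, say, $A_1 \cap A_2 = \{a\}$, then $V_{A_1}$ and $V_{A_2}$ are two distinct two-dimensional subspaces of $\F^3$ both containing the column $v_a$, so $V_{A_1}\cap V_{A_2} = \Span(v_a)$, and the triple intersection is trivial iff $v_a\notin V_{A_3}$, which follows from $V$ being MDS whenever $a\notin A_3$ (and $a\in A_3$ is excluded by $|A_1\cap A_2\cap A_3|=0$). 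The symmetric argument covers any configuration with a repeated index, so only the case of pairwise disjoint $A_i$'s remains, which is precisely parametrized by injective maps $\alpha:[6]\to[n]$ via $A_i=\{\alpha(2i-1),\alpha(2i)\}$.

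For disjoint $A_1,A_2,A_3$, each $V_{A_i}$ is a two-dimensional subspace of $\F^3$, hence the kernel of a nonzero linear form $\ell_i\in(\F^3)^*$. Writing $a_i=\beta_{\alpha(2i-1)}$ and $b_i=\beta_{\alpha(2i)}$, the form $\ell_i$ must vanish on both $(1,a_i,a_i^2)$ and $(1,b_i,b_i^2)$. Viewing $\ell_i=(p,q,r)$ as the coefficient vector of the polynomial $p+qt+rt^2$, this says $a_i,b_i$ are the roots of that polynomial, which forces $\ell_i = (a_i b_i,\,-(a_i+b_i),\,1)$ up to scaling. Therefore $V_{A_1}\cap V_{A_2}\cap V_{A_3}=0$ if and only if the $3\times 3$ matrix with rows $\ell_1,\ell_2,\ell_3$ is nonsingular. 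Swapping columns $1$ and $3$ and negating column $2$ (two sign changes, hence no change in whether the determinant is zero) produces exactly the matrix in the lemma statement, completing the proof.

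There is no real obstacle here beyond bookkeeping; the main subtlety is verifying that Corollary~\ref{cor:red}, together with the MDS property of Reed--Solomon, genuinely reduces the problem to the disjoint case, so that the remaining content is the clean duality between a 2-plane in $\F^3$ spanned by two points on the rational normal curve and the unique (up to scalar) degree-$2$ polynomial vanishing at the corresponding $\beta$-values.
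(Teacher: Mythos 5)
Your proof is correct. The paper does not actually prove this lemma (it is imported from \cite{bgm2021mds,roth2021higher}), but your argument is the standard one and lines up exactly with the general machinery the paper does develop: your identification of each $2$-plane $V_{A_i}\subseteq\F^3$ with the coefficient vector of the unique (up to scalar) quadratic vanishing at $\beta_{\alpha(2i-1)},\beta_{\alpha(2i)}$ is the $k=3$, $|A_i|=2$ case of the duality in Lemma~\ref{lem:prod-mat} (where $V_{A_i}^{\perp}$ is spanned by multiples of $\Pi_{A_i}$), except that you keep the criterion symmetric in the three pairs rather than normalizing against $A_1$ as that lemma does; and your reduction to pairwise disjoint pairs via Corollary~\ref{cor:red} plus the MDS property plays the role of Corollary~\ref{cor:weak}. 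The only points worth making explicit are (i) the distinctness of the evaluation points, which you use both for $V$ being MDS (hypothesis of Corollary~\ref{cor:red}) and for the Vieta step where a nonzero polynomial of degree at most $2$ with two distinct roots must have nonzero leading coefficient, and (ii) the one-line justification that $V_{A_1}\neq V_{A_2}$ when $|A_1\cap A_2|=1$ (otherwise $V_{A_1\cup A_2}$ would be $2$-dimensional, contradicting MDS); both are immediate, so there is no gap.
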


We shall also use a variant of the GM-MDS theorem.

\begin{theorem}[GM-MDS
  theorem\change{, }\cite{dau2014gmmds,lovett2018gmmds,yildiz2019gmmds}, see \cite{bgm2022}]\label{thm:gm-mds}
Let $A_1, \hdots, A_\ell$ of total size $(\ell-1)k$ such that
  $W_{A_1} \cap \cdots \cap W_{A_\ell} = 0$ for a generic
  \change{$k \times n$ matrix}. Let $\F$ be a field of size at least $n+k-1$. Then,
  there exists $\gamma_1, \gamma_2, \hdots, \gamma_n \in \F$ such that
  the $\change{[}n,k\change{]}$ Reed-Solomon code with generator matrix $U$ with generators $\gamma_1, \hdots,
  \gamma_n$ has that $U_{A_1} \cap \cdots \cap U_{A_\ell} = 0$.
\end{theorem}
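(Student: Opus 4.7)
The plan is to reduce this variant of the GM-MDS theorem to its classical form, which states that given zero-pattern sets $S_1, \ldots, S_k \subseteq [n]$ with $|S_i| \le k-1$ satisfying $|\bigcap_{i \in T} S_i| \le k - |T|$ for all $T \subseteq [k]$, and a field $\F$ of size at least $n+k-1$, there exist evaluation points $\gamma_1, \ldots, \gamma_n \in \F$ for which the $[n,k]$ Reed-Solomon code admits a generator matrix whose $i$-th row vanishes at the columns indexed by $S_i$.

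First I would perform a duality translation. For a $k \times n$ matrix $V$, the condition $V_{A_1} \cap \cdots \cap V_{A_\ell} = 0$ (where $V_{A_i} \subseteq \F^k$ is the column span) is equivalent, via standard linear algebra duality inside $\F^k$, to $V_{A_1}^\perp + \cdots + V_{A_\ell}^\perp = \F^k$. For generic $V$, each $V_{A_i}^\perp$ has dimension $k - |A_i|$, and the dimensions sum to exactly $k$ since $\sum_i |A_i| = (\ell-1)k$. Choose a basis $u_{i,1}, \ldots, u_{i,k-|A_i|}$ of each $V_{A_i}^\perp$ and collect these vectors as the columns of a $k \times k$ matrix $U$; the triviality condition becomes $\det U \ne 0$, together with the zero constraints $u_{i,j}^\top V_a = 0$ for $a \in A_i$. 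Setting $M := U^\top V$ reformulates this as: $M$ is a $k \times n$ matrix which is $\MDS$ (using that $V$ is generic, $M$ is $\MDS$ iff $U$ is invertible) and whose row indexed by $(i,j)$ vanishes on the columns indexed by $A_i$.

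Second I would apply the classical GM-MDS theorem to this zero-pattern problem. Define $S_{(i,j)} := A_i$ for $i \in [\ell]$ and $j \in [k - |A_i|]$, giving exactly $k$ rows as required. The combinatorial hypothesis $|\bigcap_{(i,j) \in T} S_{(i,j)}| \le k - |T|$ for all row-index subsets $T$ must be extracted from the partition inequality (\ref{part:ineq}) of Theorem~\ref{thm:mds5}: any such $T$ is naturally grouped by which block $A_i$ each row comes from; this induces a partition of $[\ell]$ (with singleton blocks for unrepresented $i$) to which (\ref{part:ineq}) applies, yielding the required bound after a short counting argument. The classical GM-MDS theorem \cite{lovett2018gmmds,yildiz2019gmmds} then supplies evaluation points $\gamma_1, \ldots, \gamma_n \in \F$ and a Reed-Solomon generator matrix $M$ on these points satisfying the prescribed zero pattern.

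Finally I would reverse the duality. Write $M = U^\top \tilde U$ where $\tilde U$ is the standard Reed-Solomon generator matrix on $\gamma_1, \ldots, \gamma_n$ and $U$ is the $k \times k$ change-of-basis matrix, which is invertible because both $M$ and $\tilde U$ are $\MDS$. Reading off the columns of $U$ as bases for the $\tilde U_{A_i}^\perp$ and applying the duality of step one in reverse recovers $\tilde U_{A_1} \cap \cdots \cap \tilde U_{A_\ell} = 0$, which is the desired conclusion. The main obstacle in this plan is the partition-to-subset translation in step two: one must verify that the generic triviality condition for $\ell$-fold intersections (a condition indexed by partitions of $[\ell]$) exactly matches the GM-MDS generic condition (indexed by subsets of $[k]$). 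This combinatorial equivalence is the technical heart of the reduction and is carried out explicitly in \cite{bgm2022}.
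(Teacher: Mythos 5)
The paper does not actually prove Theorem~\ref{thm:gm-mds}: it is quoted as a known result, with the translation from the zero-pattern form of GM-MDS to the generic-intersection form attributed to \cite{bgm2022}. Your proposal reconstructs that translation, and it is correct: dualize to $V_{A_1}^\perp + \cdots + V_{A_\ell}^\perp = \F^k$, take the zero-pattern sets to be $A_i$ repeated $k-|A_i|$ times (so there are exactly $\sum_i (k-|A_i|) = \ell k - (\ell-1)k = k$ rows), apply classical GM-MDS, and reverse the duality using that the resulting generator matrix has rank $k$. One phrasing in your step two should be tightened: for a row subset $T$ touching the blocks $P \subseteq [\ell]$, the partition of $[\ell]$ to feed into (\ref{part:ineq}) is $\{P\} \cup \{\{i\} : i \notin P\}$ --- a single part containing all of $P$, not one part per represented block (all-singleton partitions give only the trivial inequality). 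With that partition and $\sum_i |A_i| = (\ell-1)k$, the inequality (\ref{part:ineq}) rearranges to $|\bigcap_{i \in P} A_i| \le k - \sum_{i\in P}(k-|A_i|) \le k - |T|$, which is precisely the GM-MDS hypothesis for the repeated sets; blocks with $|A_i|=k$ contribute no rows and can be ignored. With that clarification the reduction is complete and matches the standard derivation in \cite{bgm2022}.
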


\begin{remark}
Note the order of the quantifiers. The code $U$ is only guaranteed to
meet the $\MDS(\ell)$ criteria for one tuple of sets $(A_1, \hdots,
A_\ell)$.
\end{remark}

\section{\texorpdfstring{$\change{[}n,k\change{]}$-$\MDS(3)$}{[n,k]-MDS(3)} codes require field
  size  \texorpdfstring{$\binom{n-2}{k-1}-1$}{Choose(n-2,k-1)-1}}

In this section, we show the exponential (in dimension) lower bound for the field size of $\MDS(3)$ codes.

\begin{reptheorem}{thm:exp-mds-lb}
Let $V \in \F^{k \times n}$ be an $\MDS(3)$-code. Then, $|\F| \ge \binom{n-2}{k-1}-1$.
\end{reptheorem}

We prove this by in fact showing a slightly stronger lower bound.

\begin{lemma}\label{lem:2k-1k-1}
Let $V \in \F^{k \times n}$ be an $\MDS$ code such that $V_{A_1} \cap V_{A_2} \cap V_{A_3}=0$ for all $A_1,
A_2, A_3 \subseteq [n]$ with distinct $A_2,A_3$, $|A_1| = 2, |A_2| = |A_3| = k-1$ and $A_1
\cap (A_2 \cup A_3) = \emptyset$. Then, $|\F| \ge \binom{n-2}{k-1}-1$.
\end{lemma}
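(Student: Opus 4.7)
\textbf{Proof plan for Lemma~\ref{lem:2k-1k-1}.}

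The plan is to build an injection from the $\binom{n-2}{k-1}$ choices of $(k-1)$-subset $S \subseteq [n]\setminus\{1,2\}$ into the projective line $\PF(V_{A_1}) \cong \PF^1$ over $\F$, whose size is $|\F|+1$. WLOG, after permuting columns, take $A_1 = \{1,2\}$, so $V_{A_1}$ is a $2$-dimensional subspace of $\F^k$ (using that $V$ is MDS). To each $(k-1)$-subset $S \subseteq \{3,\dots,n\}$ I associate the subspace $L_S := V_{A_1} \cap V_S \subseteq V_{A_1}$.

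First I would verify that $L_S$ is exactly $1$-dimensional for every such $S$. Since $|A_1 \cup S| = k+1 > k$ and $V$ is MDS, we have $V_{A_1 \cup S} = \F^k$, so $\dim(V_{A_1}+V_S) = k$, and hence by the dimension formula
\[
  \dim L_S = \dim V_{A_1} + \dim V_S - \dim(V_{A_1}+V_S) = 2 + (k-1) - k = 1.
\]
Thus $L_S$ determines a well-defined point on the projective line $\PF(V_{A_1})$.

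The key step is injectivity. Suppose $S \neq S'$ are two distinct $(k-1)$-subsets of $\{3,\dots,n\}$ with $L_S = L_{S'}$. Then $L_S \subseteq V_{A_1} \cap V_S \cap V_{S'}$, so the triple intersection $V_{A_1} \cap V_S \cap V_{S'}$ is at least $1$-dimensional. But $(A_1, S, S')$ satisfies the hypotheses of the lemma exactly: $|A_1|=2$, $|S|=|S'|=k-1$, $S \neq S'$, and $A_1 \cap (S \cup S') = \emptyset$. Hence by assumption $V_{A_1} \cap V_S \cap V_{S'} = 0$, a contradiction. So $S \mapsto L_S$ is an injection from $\binom{[n]\setminus\{1,2\}}{k-1}$ into $\PF(V_{A_1})$, giving $\binom{n-2}{k-1} \le |\F|+1$, which rearranges to the claimed bound.

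There is no serious obstacle here: the MDS hypothesis makes each $L_S$ a genuine projective point on $V_{A_1}$, and the triple-intersection hypothesis is precisely the statement that different $S$'s give different points. The only subtle point to double-check is that the hypothesis is applied with the correct labeling (the index $A_1$ plays the role of the size-$2$ set, while the two $(k-1)$-sets $S, S'$ are required to be distinct and disjoint from $A_1$), which matches the statement of the lemma verbatim.
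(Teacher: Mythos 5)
Your proof is correct and is essentially the paper's argument: both establish an injection from the $\binom{n-2}{k-1}$ subsets $S\subseteq\{3,\dots,n\}$ of size $k-1$ into a projective line over $\F$, with the MDS hypothesis giving well-definedness and the triple-intersection hypothesis giving injectivity. The only cosmetic difference is that you work primally with the point $V_{A_1}\cap V_S\in\PF(V_{A_1})$, whereas the paper works dually, assigning to $S$ the point $(w_1^{S}:w_2^{S})\in\PF^1$ read off from the $1$-dimensional space $V_S^{\perp}$ and verifying distinctness via a $2\times 2$ determinant.
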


\begin{proof}[Proof of Theorem~\ref{thm:exp-mds-lb}]
From Corollary~\ref{cor:red}, consider any $A_1, A_2, A_3$ with $|A_1| = 2, |A_2| = |A_3| = k-1$ and $A_1
\cap (A_2 \cup A_3) = \emptyset$ and $|A_2 \cap A_3| + |A_1| \le k$. Then, we have that $A_2 \neq A_3$ and $V_{A_1} \cap V_{A_2} \cap
V_{A_3} = 0$ when $V$ is $\MDS(3)$. The field size lower bound then follows from Lemma~\ref{lem:2k-1k-1}. 
\end{proof}

\begin{proof}[Proof of Lemma~\ref{lem:2k-1k-1}]
Let $e_1,e_2,\dots,e_k$ be the coordinate vectors in $\F^k.$
Via row operations, we may assume that $V_1 = e_1$ and $V_2 = e_2$. Now consider the
$\MDS(3)$ test on $A_1 = \{1,2\}$ and two distinct subsets $A_2, A_3$
of size $k-1$ which are disjoint from $A_1$ (but not necessarily from
each other). By Corollary~\ref{cor:red}, we have that $V_{A_1} \cap
V_{A_2} \cap V_{A_3}=0$.

Note that for any subset $A \subseteq [n]$ of size $k-1$,
$V_A^{\perp}$ is a $1$-dimensional space with coordinates $(w^A_1,
\hdots, w^A_k)$, where $w^A_i = (-1)^i\det(V|_{([k] \setminus
  i) \times A})$.\footnote{The submatrix with rows indexed by $[k]\setminus i$ and columns indexed by $A$.} Also $V_{A_1}^\perp=\Span\{e_3,e_4,\dots,e_k\}$. Finally $V_{A_1} \cap V_{A_2} \cap V_{A_3}=0$ iff $V_{A_1}^\perp+V_{A_2}^\perp +V_{A_3}^\perp=\F^k.$ Therefore, we have that the $\MDS(3)$ condition on $A_1,A_2,A_3$
is equivalent to 
\[
  \det \begin{pmatrix}
    0 & 0 & \cdots & 0 & 0 & w^{A_2}_1 & w^{A_3}_1\\
    0 & 0 & \cdots & 0 & 0 & w^{A_2}_2 & w^{A_3}_2\\
    1 & 0 & \cdots & 0 & 0 & w^{A_2}_3 & w^{A_3}_3\\
    0 & 1 & \cdots & 0 & 0 & w^{A_2}_4 & w^{A_3}_4\\
    \vdots & \vdots & \ddots & \vdots & \vdots & \vdots & \vdots\\
    0 & 0 & \cdots & 1 & 0 & w^{A_2}_{k-1} & w^{A_3}_{k-1}\\
    0 & 0 & \cdots & 0 & 1 & w^{A_2}_{k} & w^{A_3}_{k}.
  \end{pmatrix} = \det \begin{pmatrix}w^{A_2}_1 & w^{A_3}_1\\
    w^{A_2}_2 & w^{A_3}_2\end{pmatrix} \neq 0.
\]

In other words, $(w_1^{A_2} : w_2^{A_2}) \neq (w_1^{A_3} :
w_2^{A_3})$ in $\PF^1$ whenever $A_2,A_3 \subseteq \{3,\hdots,
n\}$ are distinct subsets of size $k-2$. Since $\PF^1$ has $q+1$ distinct points, we have that $q \ge \binom{n-2}{k-1}-1$. 
\end{proof}

This result greatly improves on the previous lower bounds for MR tensor
codes.

\begin{corollary}
  Let $U \otimes V$ be an $(m,n,a,b)$ MR tensor code with $a \ge 1$
  and $m-a \ge 2$. Then the field size must be at least
  $\binom{n-2}{b-1}-1$.
\end{corollary}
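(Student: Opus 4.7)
The plan is to reduce to Theorem~\ref{thm:exp-mds-lb} by showing that the row code $V$ is necessarily $[n, n-b]$-$\MDS(3)$, then invoking the binomial identity $\binom{n-2}{n-b-1} = \binom{n-2}{b-1}$ to rewrite the resulting field-size bound.

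The core structural step is a puncturing reduction in the column direction. Since $U$ is the column code of an MR tensor code, it is $[m, m-a]$-MDS, with minimum distance $a+1$. Using $m - a \ge 2$, I fix any $R \subseteq [m]$ of size $m' := m-a+1 \ge 3$ and observe that $(U \otimes V)|_{R \times [n]} = (U|_R) \otimes V$. Since $|[m] \setminus R| = a - 1 < a+1$, the projection $U \to U|_R$ is a bijection, so $U|_R$ is $[m', m'-1]$-MDS, and $(U|_R) \otimes V$ is an $(m', n, 1, b)$-tensor code with the same row code $V$.

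I claim this puncturing preserves the MR property. Given any $E' \subseteq R \times [n]$ generically recoverable for $(m', n, 1, b)$-tensor codes, I lift it to $E := E' \cup (([m] \setminus R) \times [n])$. For any $[m, m-a]$-MDS column code $U^*$ (whether $U$ itself or a generic one), the bijection $U^* \to U^*|_R$ extends to a per-codeword bijection $U^* \otimes V^* \to (U^*|_R) \otimes V^*$, showing that recovering $E$ for the larger code is equivalent to recovering $E'$ for the puncturing. Applying this to a generic code, $E$ is generically recoverable for $(m, n, a, b)$-tensor codes; applying this to $U \otimes V$, the MR hypothesis supplies recovery from $E$, which descends to recovery of $E'$ by $(U|_R) \otimes V$.

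Proposition~\ref{prop:mrtc_mdsell} then yields that $V$ is $\MDS(m') \subseteq \MDS(3)$, and Theorem~\ref{thm:exp-mds-lb} applied with $k = n-b$ gives $|\F| \ge \binom{n-2}{(n-b)-1} - 1 = \binom{n-2}{b-1} - 1$. The main point making the reduction work cleanly, and the only place the hypotheses $a \ge 1$ and $m - a \ge 2$ are used, is that $U$'s MDS distance $a+1$ strictly exceeds $|[m] \setminus R| = a - 1$, so the lifted erasures in the complementary rows add no information-theoretic content, and that $m' \ge 3$ gives a nontrivial higher-order MDS conclusion.
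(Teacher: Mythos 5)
Your proposal is correct and follows essentially the same route as the paper: puncture $U$ down to an $[m-a+1,\,m-a]$ code, note that the resulting $(m-a+1,n,1,b)$-tensor code is still MR, invoke Proposition~\ref{prop:mrtc_mdsell} to get that $V$ is $\MDS(m-a+1)$ and hence $\MDS(3)$, and apply Theorem~\ref{thm:exp-mds-lb} with $k=n-b$. The only difference is that you spell out, via the erasure-pattern lifting argument, why puncturing preserves maximal recoverability, a step the paper asserts with a citation.
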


\begin{proof}
Let $U'$ be a $(m-a+1, m-a)$ code formed by puncturing $U$. Note that
$U' \otimes V$ must be an $(m-a+1,n,1,b)$ MR tensor code. Thus, $V$
must be an $\MDS(m-a+1)$ code~\cite{bgm2021mds}. Since $m-a+1 \ge 3$, we have that $V$
is an $\MDS(3)$ code. Thus, the field size lower bound of
$\binom{n-2}{(n-b)-1} -1 = \binom{n-2}{b-1}-1$ applies.
\end{proof}

\subsection{Application to list decoding}
As mentioned previously the lower bound of Theorem~\ref{thm:exp-mds-lb} is only about average-radius list decoding. Furthermore, the ``hard'' case identified is
a rather extreme example, where one of the codewords has Hamming
distance $n-k-2$ from the received codeword, which is only three
less than the minimum distance.

In this section we prove a lower bound for the worst-case list decoding setting by proving Theorem~\ref{thm:ldlb}.

\begin{reptheorem}{thm:ldlb}
Let $n \ge k \ge 0$ be such that $n-k$ is divisible by $3$. Let $C$ be
an $\change{[}n,k\change{]}$-MDS code which is $(2, \frac{2(n-k)}{3n})$ worst-case list decodable. Then, $C$ requires field
size $\binom{n-2(n-k)/3}{k-1} - 1.$
\end{reptheorem}

To start, we show that the $\MDS(3)$ lower bound extends to the case
the sets $A_1, A_2, A_3$ are all the same size.

\begin{lemma}[Extension to the $(2k/3, 2k/3, 2k/3)$ split.]\label{lem:splitlb}
  Let $k$ be divisible by $3$. Let $V$ be an $\change{[}n,k\change{]}$-$\MDS$ code such
  that for all $A_1, A_2, A_3 \subseteq [n]$ of size $2k/3$ we have
  that $V_{A_1} \cap V_{A_2} \cap V_{A_3} = 0$ whenever this holds
  generically. Then, the field size of the code is at least
  $\binom{n-2k/3}{k/3+1} - 1.$ %
\end{lemma}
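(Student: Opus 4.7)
Setting $m := k/3$, the plan is to mirror the proof strategy of Lemma~\ref{lem:2k-1k-1} but with the 2-element set $A_1$ replaced by a $2m$-element set. First, I would fix a subset $A_1 \subseteq [n]$ of size $2m$ and, via invertible row operations on $V$ (which preserve the $\MDS(\ell)$ structure), assume $V|_{A_1} = \begin{pmatrix} I_{2m} \\ 0_{m \times 2m} \end{pmatrix}$, so that $V_{A_1} = \Span(e_1,\ldots,e_{2m})$ is canonically identified with $\F^{2m}$. For each $(m+1)$-subset $B \subseteq [n] \setminus A_1$, the MDS property forces $V_B \cap V_{A_1}$ to be 1-dimensional (since $V_B$ has dimension $m+1$, $V_{A_1}$ has dimension $2m$, and $V_B + V_{A_1} = \F^{3m}$ by MDS, as $|B \cup A_1| = 3m+1 > k$). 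Let $\ell_B \in \PF(V_{A_1}) \cong \PF^{2m-1}$ denote this line.

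A dimension-counting argument using MDS alone shows that distinct $B$'s produce distinct $\ell_B$: if $\ell_{B_1} = \ell_{B_2}$, the common line would lie in $V_{B_1} \cap V_{B_2} = V_{B_1 \cap B_2}$ (the latter equality from MDS, since $|B_1 \cup B_2| \le 2m+2 \le k$), and hence in $V_{B_1 \cap B_2} \cap V_{A_1}$; but this intersection has dimension $|B_1 \cap B_2| + 2m - 3m = |B_1 \cap B_2| - m \le 0$ for distinct $(m+1)$-subsets, contradicting that $\ell_{B_1}$ is a line. This yields $\binom{n-2m}{m+1}$ distinct points in $\PF^{2m-1}$.

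The main --- and hardest --- step is then to leverage the $\MDS(3)$ hypothesis to confine these points to a single projective line's worth of $q + 1$ rational points. I plan to do this by fixing an auxiliary set $D \subseteq [n] \setminus A_1$ of size $m-1$ and applying the hypothesis to triples of the form $(A_1, D \cup B_2, D \cup B_3)$ where $B_2, B_3$ satisfy $|B_2 \cap B_3| \le 1$ (the precise condition ensuring the generic $\MDS(3)$ criterion applies: one checks that $|(D \cup B_i) \cap (D \cup B_j)| = (m-1) + |B_i \cap B_j| \le m$). The resulting trivial pairwise intersections among the $m$-dimensional subspaces $L_{B_i} := V_{D \cup B_i} \cap V_{A_1}$ of $V_{A_1}$, together with the containment $\ell_{B_i} \subseteq L_{B_i}$, should propagate to a one-parameter family structure on the $\ell_B$'s --- forcing them onto a rational normal curve (or analogous $\PF^1$-isomorphic subvariety) in $\PF^{2m-1}$. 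Combined with the injectivity from step two, this gives $\binom{n-2m}{m+1} \le q + 1$, as required.

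The principal difficulty lies in the final step: making precise how the partial-spread constraints on the $m$-subspaces $L_B$ propagate to a genuine one-parameter projective family among their contained lines $\ell_B$. Bridging this gap from an injection into the Grassmannian $\mathrm{Gr}(m, 2m)$ to one into $\PF^1$ is the technical heart of the argument, and likely requires a careful analysis via Plücker coordinates or an explicit parametrization of the family $B \mapsto \ell_B$ in terms of the underlying code structure.
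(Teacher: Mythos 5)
Your first two steps are sound: after normalizing $V|_{A_1}$, each $(m+1)$-subset $B \subseteq [n]\setminus A_1$ does give a line $\ell_B = V_B \cap V_{A_1}$ in $\PF^{2m-1}$, and your MDS argument that distinct $B$'s give distinct lines is correct. But on their own these steps only yield $\binom{n-2m}{m+1} \le |\PF^{2m-1}| = (q^{2m}-1)/(q-1)$, i.e.\ $q \gtrsim \binom{n-2m}{m+1}^{1/(2m-1)}$, which is exponentially weaker than the claimed bound. Everything therefore rides on the final step — confining the $\ell_B$'s to a $\PF^1$-worth of points — and that step is not proved; it is announced as a hope. The mechanism you propose does not obviously deliver it: the triples $(A_1, D\cup B_2, D\cup B_3)$ give $L_{B_2}\cap L_{B_3}=0$ only when $|B_2\cap B_3|\le 1$, so the $L_B$'s form at best a \emph{partial} partial-spread of $m$-spaces in $\F^{2m}$, and a partial spread can have up to roughly $q^m$ members — nowhere near $q+1$. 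There is no argument offered for why the contained lines $\ell_B$ should lie on a rational normal curve or any other $\PF^1$-isomorphic set, and you acknowledge as much. As written, this is a genuine gap, not a routine detail.

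The paper avoids this difficulty entirely by changing the ambient space rather than working inside $\PF^{2m-1}$. Fix $I\subseteq[n]$ of size $2k/3-2$ and quotient by $V_I$: the projection $\Pi$ with $\ker\Pi = V_I$ turns $V$ into an MDS code $V'$ of dimension $k'=k/3+2$ on the columns outside $I$. One then verifies that $V'$ satisfies the hypothesis of Lemma~\ref{lem:2k-1k-1} (the $(2,k'-1,k'-1)$ split): a nonzero vector in $V'_{A'_1}\cap V'_{A'_2}\cap V'_{A'_3}$ is lifted, via $W=\Pi^{-1}(\Span(v_0))$ and a partition $I=I_2\cup I_3$, to a nonzero vector in $V_{I\cup A'_1}\cap V_{I_2\cup A'_2}\cap V_{I_3\cup A'_3}$, contradicting the $(2k/3,2k/3,2k/3)$ hypothesis on $V$. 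Lemma~\ref{lem:2k-1k-1} then immediately produces the injection into $\PF^1$ — precisely because in the shortened code the distinguished set has size $2$, so the relevant quotient is $2$-dimensional. In short, the reduction to the already-proven $|A_1|=2$ case is what makes the $\PF^1$ appear for free; your plan keeps $|A_1|=2m$ and consequently has to manufacture the $\PF^1$ by hand, which is exactly the part that is missing.
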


\begin{proof}
  Let $V$ be an $\MDS(3)$ code such that
  $V_{A_1} \cap V_{A_2} \cap V_{A_3} = 0$ whenever
  $|A_1| = |A_2| = |A_3| = 2k/3$ and generically the intersection
  should be $0$. By Corollary~\ref{cor:red}, this is equivalent to
  $|A_1 \cap A_2 \cap A_3| = 0$ and $|A_i \cap A_j| \le k/3$ for all
  $i \neq j$. %

Pick $2k/3-2$ columns $I \subseteq [n]$ of $V$.  Let $k' = k - |I| =  k/3+2$ Pick an arbitrary projection $\Pi :
\F^k \to \F^{k'}$ of rank $k'$ such that $\ker(\Pi) = V_I$. Let $V'$ be the code with columns $\Pi(v_{i})$ for
$i \in [n] \setminus I$.

\begin{claim}
$V'$ is MDS.
\end{claim}
\begin{proof}
Pick any set of indices $J$ of size $k'$ disjoint from $I$, it suffices to prove that
$V'_{I} = \F^{k'}$. Note that
\begin{align*}
  V'_I &= \Span \{\Pi(v_i) : i \in J\}\\
       &= \Span \{\Pi(v_i) : i \in I \cup J\}\\
       &= \Pi(\Span \{v_i : i \in I \cup J\})\\
       &= \Pi(\F^k) && \text{ ($V$ MDS)}\\
       &= \F^{k'} && \text{ ($\Pi$ max rank)},
\end{align*}
as desired.
\end{proof}

To complete the field size lower bound, consider distinct
$A'_1, A'_2, A'_3 \subseteq [n]\setminus I$ such that $|A'_1| = 2$,
$|A'_2|=|A'_3| = k'-1 = k/3+1$, and
$A'_1 \cap (A'_2 \cup A'_3) = 0$. By Lemma~\ref{lem:2k-1k-1}, it suffices to show that
$V'_{A'_1} \cap V'_{A'_2} \cap V'_{A'_3} = 0$. Assume for sake of
contradiction that there exists nonzero
$v_0 \in V'_{A'_1} \cap V'_{A'_2} \cap V'_{A'_3}$. Pick an arbitrary
partition $I_2 \cup I_3 = I$ such that $|I_2| = |I_3| = k/3-1$. Now
let
\begin{align*}
  A_1 &= I \cup A'_1\\
  A_2 &= I_2 \cup A'_2\\
  A_3 &= I_3 \cup A'_3.
\end{align*}

Note that $|A_1| = |A_2| = |A_3| = 2k/3$ and $|A_1 \cap A_2 \cap A_3| =
0$. Further, we can check that
\begin{align*}
  |A_1 \cap A_2| + |A_3| &= |I_2| + |A_3| = k/3-1 + 2k/3 \le k,\\
  |A_1 \cap A_3| + |A_2| &= |I_3| + |A_2| = k/3-1 + 2k/3 \le k,\\
  |A_2 \cap A_3| + |A_1| &= |A'_2 \cap A'_3| + |A_1| \le k/3+2k/3\le k.
\end{align*}

Thus, by Corollary~\ref{cor:red}, we know that $V_{A_1} \cap V_{A_2}
\cap V_{A_3} = 0$. 

Let $W = \Pi^{-1}(\Span(v_0))$ (recall that $v_0 \in V'_{A'_1} \cap
V'_{A'_2} \cap V'_{A'_3}$). %
Note that $\dim(W) = k-k'+1 = 2k/3-1$. Since $\Span(v_0) \in V'_{A'_1}$, we
have that 
\[
  W = \Pi^{-1}(\Span(v_0)) \subseteq \Pi^{-1}(V'_{A'_1}) = V_I +
  V_{A'_1} = V_{A_1}.
\]
Also observe that since $v_0 \in V'_{A'_2} = \Pi(V_{A_2})$, there
exists $w_0 \in V_{A_2}$ such that $\Pi(w_0) = v_0$. Thus,
$w_0 \in W$, too, so $v_0 \in \Pi(W \cap V_{A_2})$. Further, let
$\Pi_2$ be the map $\Pi$ but restricted to the domain
$W \cap V_{A_2}$. Observe that by the rank-nullity theorem
\begin{align*}
  \dim (W \cap V_{A_2})
  &= \dim \ker \Pi_2 + \dim (\Pi(W \cap V_{A_2}))\\
  &\ge \dim V_{I_2} + \dim \Span (v_0)\\
  &= |I_2| + 1.
\end{align*}
Likewise, $\dim(W \cap V_{A_3}) \ge |I_3| + 1$.
Thus,
\begin{align*} \dim(V_{A_1} \cap V_{A_2} \cap V_{A_3})
  &\ge \dim(W \cap V_{A_2} \cap V_{A_3})\\
  &= \dim((W \cap V_{A_2}) \cap (W \cap V_{A_3}))\\
  &\ge \dim(W \cap V_{A_2}) + \dim(W \cap V_{A_3}) - \dim(W)\\
  &\ge |I_2|+1 + |I_3|+1 - |W|\\
  &= 1,
\end{align*}
a contradiction. %
Therefore, by Lemma~\ref{lem:2k-1k-1} we get a lower bound of
\[
  q \ge \binom{n-2k/3}{k/3+1} - 1.\qedhere
\]

\end{proof}

We are now ready to prove Theorem~\ref{thm:ldlb}.

\begin{proof}[Proof of Theorem~\ref{thm:ldlb}]
We adapt the proof of Proposition~4.1 in \cite{bgm2022}. Let
$H$ be the parity check matrix of $C$ (i.e., the generator matrix of
$C^{\perp}$). If $H$ has the property that for all $A_1, A_2, A_3
\subseteq [n]$ of size $\frac{2}{3}(n-k)$ we have that $H_{A_1}
\cap H_{A_2} \cap H_{A_3} = 0$ when it generically should, then the
desired field size lower bound holds by Lemma~\ref{lem:splitlb}. 

Otherwise, there exists $A_1, A_2, A_3$ of size $\frac{2}{3}(n-k)$
such that $H_{A_1} \cap H_{A_2} \cap H_{A_3} \neq 0$ even though the
generic intersection is $0$. %
Let $v_0$ be a nonzero vector in this common
intersection. By definition, for each $i \in [3]$ there exists a
nonzero $e_i \in \F^n$ such that $v_0 = He_i$ and
$\supp(e_i) \subseteq A_i$.

If $e_1, e_2, e_3$ are all distinct, then we have violated that $C$ is
$(2, \frac{2(n-k)}{3n})$ worst-case list decodable (see equation 17 of
\cite{bgm2022}). If $e_1 = e_2 = e_3$, then
$A_1 \cap A_2 \cap A_3 \neq 0$ which contradicts
Corollary~\ref{cor:red}. Otherwise, if say WLOG $e_1 = e_2 \neq e_3$,
then $\supp(e_1),\supp(e_2)\subset A_1\cap A_2$. So, by
Corollary~\ref{cor:red},
\[
  \lvert\supp(e_1 - e_3)\rvert \le |A_1 \cap A_2| + |A_3| \le k.
\]
However $H(e_1 -e_3) = 0$, so $C$ cannot be MDS. \qedhere
\end{proof}
\section{Doubly-exponential construction of higher-order MDS codes}

\change{Shangguan and Tamo~\cite{shangguan2020combinatorial} (c.f., Theorem 1.7)} give a doubly-exponential explicit
construction of higher order MDS codes. In particular, they construct
$\MDS(3)$ codes which have field size $2^{k^n}$ and $\MDS(4)$ codes which have field size $2^{(3k)^n}$, although their method
can adapted to $\MDS(\ell)$ codes (see their remark after Theorem
1.9). \change{Roth~\cite{roth2021higher}} improves this construction to
$n^{k^{O(k)}}$ for $\MDS(3)$ codes. 

\begin{remark}
We also note that a straightforward construction can be obtained by
using ideas from \change{the works of Brakensiek-Gopi-Makam~\cite{bgm2022} and
Shangguan-Tamo \cite{shangguan2020combinatorial}}. For a $k\times n$ matrix $V$, using
the equivalence of $V_{A_1}\cap\hdots\cap V_{A_\ell}=0$ with
$V^\perp_{A_1}+\hdots+V_{A_\ell}^\perp=\F_q^k$, we can write down a
determinantal matrix identity which for Reed-Solomon
codes only has individual degree at most $k-1$ in the evaluation points~\cite{bgm2022}. This will give us a general $\change{[}n,k\change{]}$-$\MDS(\ell)$ construction over fields of size $2^{k^n}$ by working over the field $\F_2(x_1,\hdots,x_n)$ where $x_i$ has degree $k$ over $\F_2(x_1,\hdots,x_{i-1})$, similar to what was done \change{by Shangguan-Tamo} \cite{shangguan2020combinatorial}.
\end{remark}
In the remainder of this section, we construct
$\MDS(\ell)$ codes for all $\ell \ge 4$ with field sizes comparable to
Roth's.

\begin{reptheorem}{thm-con-gen}
There is an explicit $\change{[}n,k\change{]}$-$\MDS(\ell)$-code over field size
$n^{(\ell k)^{O(\ell k)}}$.
\end{reptheorem}

The proof incorporates ideas from both
Shangguan-Tamo~\cite{shangguan2020combinatorial} and
Roth~\cite{roth2021higher}.

\begin{proof}
Pick $F_0$ to be a finite field of size at least $n+k-1$. For $i \in
[\ell k]$, let $F_i$ be a degree $\ell k^2$ extension of $F_{i-1}$ via
generator $\alpha_i$. Let $\F = F_{\ell k}$. Note that $|\F| = n^{(\ell
  k)^{O(\ell k)}}$. Pick distinct $\beta_1, \hdots, \beta_n \in F_0$. For
all $i \in [n]$, define the multivariate polynomial
\[
  p_i(x_1, \hdots, x_{\ell k}) = \sum_{j=1}^{\ell k} \beta_i^{j-1} x_j.
\]

Let $V$ be the $\change{[}n,k\change{]}$-Reed-Solomon code with generators
$p_i(\alpha_1, \hdots, \alpha_{\ell k})$ for $i \in [n]$. We claim
that $V$ is $\MDS(\ell)$. Let $\widetilde{V}$ be an
$\change{[}n,k\change{]}$-Reed-Solomon code with generators $p_i(x_1, \hdots, x_{\ell
  k})$, where the base field is $\widetilde{\F} := F_0(x_1, \hdots, x_{\ell k})$ (i.e.,
$F_0$ extended by $\ell k$ free generators). 

The desired result follows by the following two claims.

\begin{claim}
  $V$ is $\MDS(\ell)$ if and only if $\widetilde{V}$ is $\MDS(\ell)$.
\end{claim}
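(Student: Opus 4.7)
The plan is to use Lemma~\ref{lem:mds5} to reduce both $\MDS(\ell)$ conditions (for $V$ and for $\widetilde V$) to the non-vanishing of the \emph{same} family of block-matrix determinants, viewed as polynomials in $x_1,\ldots,x_{\ell k}$. I will then show that each such polynomial has individual degree in each variable strictly less than the extension degree $D=\ell k^2$. Because the $\alpha_i$'s form a tower of extensions each of degree $D$ over the previous field, any polynomial in $F_0[x_1,\ldots,x_{\ell k}]$ of individual degree at most $D-1$ is nonzero iff its value at $(\alpha_1,\ldots,\alpha_{\ell k})$ is nonzero, and the equivalence of the two $\MDS(\ell)$ conditions will follow.

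In more detail: for each tuple $(A_1,\ldots,A_\ell)$ satisfying the hypotheses of Theorem~\ref{thm:mds5}, Lemma~\ref{lem:mds5} identifies $\widetilde V_{A_1}\cap\cdots\cap\widetilde V_{A_\ell}=0$ with the non-vanishing of the determinant of a block matrix whose left block is $\ell$ copies of $I_k$ and whose diagonal blocks are $\widetilde V|_{A_i}$. Because the generators of $\widetilde V$ are the linear forms $p_j(x)$, the entries of $\widetilde V|_{A_i}$ are monomials $p_j(x)^{r-1}$ for $r\in[k]$, $j\in A_i$, and the determinant is some polynomial $P_{A_1,\ldots,A_\ell}\in F_0[x_1,\ldots,x_{\ell k}]$. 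The corresponding determinant for $V$ is exactly $P_{A_1,\ldots,A_\ell}(\alpha_1,\ldots,\alpha_{\ell k})$. Hence $\widetilde V$ is $\MDS(\ell)$ iff each $P_{A_1,\ldots,A_\ell}$ is nonzero as a polynomial, and $V$ is $\MDS(\ell)$ iff each such $P_{A_1,\ldots,A_\ell}$ evaluates to a nonzero element of $\F$. The direction $V\Rightarrow\widetilde V$ is immediate; for the converse, a standard induction on the number of variables, using that $\alpha_i$ has minimal polynomial of degree $D$ over $F_{i-1}$, shows that any nonzero $P\in F_0[x_1,\ldots,x_{\ell k}]$ of individual degree at most $D-1$ satisfies $P(\alpha_1,\ldots,\alpha_{\ell k})\neq 0$.

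The main obstacle is the degree bound $\deg_{x_m}P_{A_1,\ldots,A_\ell}<D=\ell k^2$, which I would prove via a combinatorial analysis of the Leibniz expansion. Each nonzero Leibniz term is parameterized by a function $f:[k]\to[\ell]$ that records, for each column $c\in[k]$ of the $I_k$-block, which of the $\ell$ copies of the identity provides the matching $1$; the rows of block row $i$ that are then matched into the $V|_{A_i}$-block form the set $S_i=\{r\in[k]:f(r)\neq i\}$, which necessarily has size $|A_i|$. The contribution of block row $i$ to that term is a generalized Vandermonde minor in $\{p_j(x):j\in A_i\}$ with row set $S_i$, whose individual degree in $x_m$ is at most $\sum_{r\in S_i}(r-1)$. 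Summing over $i$ and observing that each $r\in[k]$ lies in exactly $\ell-1$ of the $S_i$'s yields the per-term bound $(\ell-1)\sum_{r=1}^{k}(r-1)=(\ell-1)\binom{k}{2}$, which is strictly less than $\ell k^2=D$. This finishes the argument.
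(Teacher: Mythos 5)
Your proposal is correct and follows essentially the same route as the paper: reduce both $\MDS(\ell)$ conditions to the non-vanishing of the block determinants of Lemma~\ref{lem:mds5}, bound the individual degree of each determinant polynomial below the extension degree $D=\ell k^2$, and conclude via the tower of degree-$D$ extensions. Your Leibniz-term analysis gives a sharper bound of $(\ell-1)\binom{k}{2}$ where the paper settles for the cruder observation that each term is a product of at most $\ell k$ factors of individual degree at most $k-1$; both bounds are below $D$, so the refinement is not needed.
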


\begin{proof}
  To verify, consider $A_1, \hdots, A_\ell$ of total size $(\ell-1)k$
  such that $W_{A_1} \cap \cdots \cap W_{A_\ell} = 0$ for a generic
  \change{$k \times n$ matrix}. It suffices to check by Lemma~\ref{lem:mds5} that

\begin{align}
  \det \begin{pmatrix}
    I_k & V\change{|}_{A_1} & & & \\
    I_k & & V\change{|}_{A_2} & &\\
    \vdots & & & \ddots &\\
    I_k & & & &  V\change{|}_{A_{\change{\ell}}}
  \end{pmatrix} \neq 0 &\iff \det \begin{pmatrix}
    I_k & \widetilde{V}\change{|}_{A_1} & & & \\
    I_k & & \widetilde{V}\change{|}_{A_2} & &\\
    \vdots & & & \ddots &\\
    I_k & & & &  \widetilde{V}\change{|}_{A_{\change{\ell}}}
  \end{pmatrix} \neq 0 \label{eq:oes}
\end{align}

Note that each term in the determinant is the product of at most
$\ell k$ expressions of the form $p_i^{j}$ for $j \le k-1$. Since each
$\alpha_i$ is from a degree $\ell k^2$ extension, we have that the LHS
determinant is nonzero in $F_0[\alpha_1, \hdots, \alpha_{\ell k}]$ iff RHS determinant is nonzero in
$F_0(x_1, \hdots, x_{\ell k})$.
\end{proof}

\begin{claim}
  $\widetilde{V}$ is $\MDS(\ell)$.
\end{claim}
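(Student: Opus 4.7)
The plan is to prove $\widetilde V$ is $\MDS(\ell)$ by reducing, for each fixed tuple of sets, to the GM-MDS theorem (Theorem~\ref{thm:gm-mds}) via a carefully chosen specialization of the free variables $x_1,\ldots,x_{\ell k}$. Fix any $A_1,\ldots,A_\ell \subseteq [n]$ of total size $(\ell-1)k$ for which the generic intersection $W_{A_1}\cap\cdots\cap W_{A_\ell}$ vanishes. By Lemma~\ref{lem:mds5}, it suffices to prove that the determinant $\Delta(x_1,\ldots,x_{\ell k})$ of the block matrix built from $\widetilde V|_{A_1},\ldots,\widetilde V|_{A_\ell}$ is nonzero as an element of $F_0(x_1,\ldots,x_{\ell k})$, equivalently nonzero as a polynomial in $F_0[x_1,\ldots,x_{\ell k}]$. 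The strategy is to exhibit an explicit $F_0$-valued substitution that makes $\Delta$ nonzero.

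To build the substitution, invoke Theorem~\ref{thm:gm-mds} with the field $F_0$: since $|F_0|\ge n+k-1$ and the intersection is generically zero, there exist $\gamma_1,\ldots,\gamma_n\in F_0$ so that the $[n,k]$-Reed--Solomon code $U$ with generators $\gamma_1,\ldots,\gamma_n$ satisfies $U_{A_1}\cap\cdots\cap U_{A_\ell}=0$. The key observation is that the determinant $\Delta$ depends only on the columns of $\widetilde V$ indexed by $S := A_1\cup\cdots\cup A_\ell$, and $|S|\le (\ell-1)k<\ell k$. So I would seek $\widetilde x = (\widetilde x_1,\ldots,\widetilde x_{\ell k})\in F_0^{\ell k}$ with
\[
  p_i(\widetilde x) \;=\; \sum_{j=1}^{\ell k}\beta_i^{j-1}\widetilde x_j \;=\; \gamma_i \qquad\text{for all }i\in S.
\]
This is a linear system over $F_0$ whose coefficient matrix is the $|S|\times \ell k$ Vandermonde-type matrix $(\beta_i^{j-1})_{i\in S,\,j\in[\ell k]}$. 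Since $|S|\le \ell k$ and the $\beta_i$ are distinct elements of $F_0$, this matrix has full row rank $|S|$, so the system is consistent and a solution $\widetilde x\in F_0^{\ell k}$ exists.

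Substituting $x_j=\widetilde x_j$ into the generator matrix of $\widetilde V$ turns its $i$-th column, for $i\in S$, into exactly the $i$-th column of the generator matrix of $U$. Hence $\Delta(\widetilde x)$ equals the corresponding determinant for $U$, which is nonzero by Lemma~\ref{lem:mds5} applied to $U$ and the conclusion of GM-MDS. Thus $\Delta$ is a nonzero polynomial, so nonzero in $F_0(x_1,\ldots,x_{\ell k})$, and by Lemma~\ref{lem:mds5} again $\widetilde V_{A_1}\cap\cdots\cap\widetilde V_{A_\ell}=0$. Since this argument works for every valid tuple $(A_1,\ldots,A_\ell)$, the code $\widetilde V$ is $\MDS(\ell)$.

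The main obstacle, and the reason the construction is set up this way, is ensuring the existence of the $F_0$-valued specialization $\widetilde x$: we need both the Vandermonde-type matrix on the $\beta_i$'s to have full row rank (handled by choosing the $\beta_i$ distinct in $F_0$ and by the bound $|S|<\ell k$) and GM-MDS to be applicable over $F_0$ (handled by the assumption $|F_0|\ge n+k-1$). Once these two facts are in place, the proof is essentially a matching argument between $\widetilde V$ and an appropriate GM-MDS specialization restricted to the relevant columns.
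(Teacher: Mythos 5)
Your proof is correct and takes essentially the same route as the paper: both invoke the GM-MDS theorem over $F_0$ to obtain generators $\gamma_1,\ldots,\gamma_n$ and then solve a Vandermonde linear system to find an $F_0$-specialization of $x_1,\ldots,x_{\ell k}$ under which the relevant columns of $\widetilde V$ become those of $U$, making the determinant of Lemma~\ref{lem:mds5} nonzero. If anything, your handling of the rectangular full-row-rank system restricted to $i\in S=A_1\cup\cdots\cup A_\ell$ is slightly more careful than the paper's square system over $i\in[\ell k]$.
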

\begin{proof}
 Again, consider
  $A_1, \hdots, A_\ell$ of total size $(\ell-1)k$ such that
  $W_{A_1} \cap \cdots \cap W_{A_\ell} = 0$ for a generic
  \change{$k \times n$ matrix}. Since $|F_0| \ge n+k-1$, Theorem~\ref{thm:gm-mds},
  there exists $\gamma_1,\gamma_2,\dots,\gamma_k\in F_0$ such that the $\change{[}n,k\change{]}$
  Reed-Solomon code $U$ generated by $\gamma_1, \hdots, \gamma_n$ has
  that $U_{A_1} \cap \cdots \cap U_{A_\ell}=0$.
   Thus, it suffices to
  prove that there exists an assignment
  $\pi : \{x_1, \hdots, x_{\ell k}\} \to F_0$ such that
  \[
    \gamma_i = p_i(\pi(x_1), \hdots, \pi(x_{\ell k})).
  \]
  This shows that the RHS of (\ref{eq:oes}) holds. Finding such a $\pi$
  is equivalent to solving the following linear system:
  \[
    \begin{pmatrix}
      1 & \beta_1 & \cdots & \beta_1^{\ell k - 1}\\
      1 & \beta_2 & \cdots & \beta_2^{\ell k - 1}\\
      \vdots & \vdots & \ddots & \vdots\\
      1 & \beta_{\ell k} & \cdots & \beta_{\ell k}^{\ell k - 1}\end{pmatrix}
    \begin{pmatrix}
      \pi(x_1)\\
      \pi(x_2)\\
      \vdots\\
      \pi(x_{\ell k})\\
    \end{pmatrix}
    =
    \begin{pmatrix}
      \gamma_1\\
      \gamma_2\\
      \vdots\\
      \gamma_{\ell k}
    \end{pmatrix}
  \]
  Since the square matrix on the LHS is a Vandermonde matrix, it is
  invertible, so such a $\pi$ does indeed exist.
\end{proof}

Thus, $V$ is indeed $\MDS(\ell)$.
\end{proof}
\begin{remark}
  Using techniques, we can also get an explicit construction of
  $\change{[}n,n-b\change{]}$-$\MDS(\ell)$ codes over fields of size
  $n^{{\ell b}^{O(\ell b)}}$ by adapting the non-constructive upper
  bound \change{by Brakensiek, Gopi, and Makam}~\cite{bgm2021mds} (see their Appendix A). In particular, the
  $\MDS(\ell)$ conditions are equivalent to the maximal recoverability
  of a suitable tensor code. By looking at the parity check matrix of
  the tensor code, these recoverability conditions can be expressed as
  ensuring nonzero determinants of matrices of size at most
  $(\ell-1)b$. Likewise, one can show the degree of each variables in
  the determinant is $\poly(\ell, b)$, yielding the stated bound.
\end{remark}
\section{New explicit constructions for \texorpdfstring{$\change{[}n,3\change{]}$-$\MDS(3)$}{[n,3]-MDS(3)}}\label{sec:n3mds3}

First, we present a hand-verifiable $O(n^4)$ construction for
$\change{[}n,3\change{]}$-$\MDS(3)$. After that we present an $O(n^3)$ construction, where a portion
of the proof involves computation of a Groebner basis (for more
details, refer to \cite{cox2013ideals}). We compute the Groebner bases
in Julia~\cite{bezanson2017julia} using the OSCAR library~\cite{OSCAR,OSCAR-book}. We provide the code we used to
perform these computations.\footnote{\url{https://github.com/jbrakensiek/MDS3-Groebner}}

\subsection{A simple \texorpdfstring{$O(n^4)$}{O(n 4)} construction}

\begin{theorem}\label{thm:n4}
There exists an explicit $\change{[}n,3\change{]}$-$\MDS(3)$ code with field size $O(n^4)$.
\end{theorem}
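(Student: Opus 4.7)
My plan is to invoke Lemma~\ref{lemma:mds3}, which reduces the $[n,3]$-$\MDS(3)$ property of the Reed-Solomon code with evaluations $\beta_1,\ldots,\beta_n$ to the non-vanishing of the determinant
\[
  D_\alpha := (a+b)(cd-ef) + (c+d)(ef-ab) + (e+f)(ab-cd)
\]
for every injective $\alpha:[6]\to[n]$, where $(a,b,c,d,e,f)=(\beta_{\alpha(1)},\ldots,\beta_{\alpha(6)})$. So the task becomes exhibiting $n$ elements of a field $\F$ with $|\F|=O(n^4)$ on which $D_\alpha\neq 0$ for every six-tuple of distinct indices and each of the $15$ pairings of those six indices into three pairs.

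To hit $|\F|=O(n^4)$ I would work in a degree-$4$ extension. Pick a prime power $q=\Theta(n)$, fix $\omega\in\F_{q^4}$ with minimal polynomial of degree $4$ over $\F_q$, and let $\F=\F_q(\omega)$ so $|\F|=\Theta(n^4)$. Choose $n$ distinct base parameters $t_1,\ldots,t_n\in\F_q$ and take the evaluations to be an $\F_q$-polynomial combination of $\omega$ and $t_i$, with the simplest natural candidate being $\beta_i=t_i+t_i^2\,\omega$ (or more generally $\beta_i=\sum_{j=0}^{3}g_j(t_i)\,\omega^j$ for carefully chosen low-degree $g_j\in\F_q[x]$). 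Substituting into $D_\alpha$ and expanding produces a polynomial in $\omega$ of $\omega$-degree at most $3$ with coefficients in $\F_q$; since $\omega$ has algebraic degree $4$ over $\F_q$, this polynomial vanishes in $\F$ if and only if every $\F_q$-coefficient vanishes, so the proof reduces to showing that at least one coefficient is nonzero for every six-tuple and pairing.

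By direct expansion the constant coefficient equals $D(t_{\alpha(1)},\ldots,t_{\alpha(6)})$, i.e., the Reed-Solomon $\MDS(3)$ test on the base parameters, while the leading $\omega^3$-coefficient collapses (with $\beta_i=t_i+t_i^2\omega$) to $D(t_{\alpha(1)}^2,\ldots,t_{\alpha(6)}^2)$; the two intermediate coefficients are explicit symmetric expressions in the $t_{\alpha(j)}$'s. The main obstacle I anticipate is the ``simultaneous vanishing'' case, in which a single six-tuple witnesses concurrent pair-lines both for itself and for its squared image \emph{and} kills the two intermediate coefficients. I would handle this by restricting the $t_i$'s to a carefully structured subset of $\F_q$ (e.g., a multiplicative coset or Sidon-type set) so that the algebraic coincidence required is ruled out by a short identity. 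Because the check is a short symmetric-function manipulation rather than a large casework, it is hand-verifiable—in contrast to the Groebner-basis computation required for the tighter $O(n^3)$ construction—matching the ``simple $O(n^4)$'' spirit of Theorem~\ref{thm:n4}. Plugging back into Lemma~\ref{lemma:mds3} completes the proof.
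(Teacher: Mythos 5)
Your setup coincides exactly with the paper's: the same degree-$4$ extension $\F_q[\gamma]$ with $q=\Theta(n)$, the same evaluation points $\beta_i=\alpha_i+\gamma\alpha_i^2$, the same reduction via Lemma~\ref{lemma:mds3} to showing that the four $\F_q$-coefficients $p_0,p_1,p_2,p_3$ of the determinant (viewed as a cubic in $\gamma$) cannot all vanish. Your observations that $p_0$ is the $\MDS(3)$ test on the base parameters and that $p_3$ is the same test on their squares are correct (the paper instead records the useful relation $p_1=p_0\cdot(\alpha_1+\cdots+\alpha_6)$).

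However, there is a genuine gap: the entire content of the theorem is the step you defer, namely ruling out simultaneous vanishing of $p_0,\ldots,p_3$. This is not a ``short symmetric-function manipulation.'' The base code over $\F_q$ is certainly \emph{not} $\MDS(3)$ (by the $\Omega(n^2)$ lower bound), so six-tuples with $p_0=0$ genuinely occur, and nothing you have written prevents $p_2$ and $p_3$ from vanishing on the same tuple. The paper resolves this with an explicit Nullstellensatz-type certificate: polynomials $Q_0,Q_2,Q_3$ such that
\[
Q_0p_0+Q_2p_2+Q_3p_3=\prod_{(i,j)\in S}(x_i-x_j)
\]
for an explicit multiset $S$ of $8$ pairs; the right-hand side is nonzero on injective $\alpha$, so the $p_i$ cannot all vanish. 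Finding and verifying this identity (the paper checks it by computer, and $Q_0$ alone has about forty monomials) is the actual proof. Your proposed fallback --- restricting the $t_i$ to a Sidon-type set --- is also problematic for the stated bound: a Sidon set of size $n$ in $\F_q$ forces $q=\Omega(n^2)$, giving field size $\Omega(n^8)$ after the degree-$4$ extension rather than $O(n^4)$. As written, the proposal identifies the right obstacle but does not overcome it.
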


\begin{proof}
Assume $q$ is an odd prime power of size at least $n$. Let
$\F_q[\gamma]$ be a degree four extension. Let
$\alpha : [n] \to \F_q$ be some injective map. We claim that
\[
  \beta_i = \alpha(i) + \gamma \alpha(i)^2, i \in [n]
\]
form the generators of a $\change{[}n,3\change{]}-\MDS(3)$ code. In particular, by
Lemma~\ref{lemma:mds3}, we must
check that
\[
  \det \begin{pmatrix}
    1 & \beta_{1} + \beta_{2} & \beta_{1}\beta_{2}\\
    1 & \beta_{3} + \beta_{4} & \beta_{3}\beta_{4}\\
    1 & \beta_{5} + \beta_{6} & \beta_{5}\beta_{6}
  \end{pmatrix} \neq 0
\]
for all injective $\alpha : [6] \to \F_q$. When expanded, LHS is a degree
three polynomial in $\gamma$; that is, the symbolic determinant can be
written as
\[
  P(\gamma) = p_0(\alpha)\gamma^0 + p_1(\alpha)\gamma^1 + p_2(\alpha)\gamma^2 + p_3(\alpha)\gamma^3,
\]
where $p_i$ is some polynomial in $\F_q[x_1, \hdots,
x_6]$ evaluated at $\alpha(1), \hdots, \alpha(6)$. A notable observation that will be useful later is that
\[
  p_1(\alpha) = p_0(\alpha)(\alpha_1 + \cdots + \alpha_6).
\]
Since $\gamma$ is a degree four field extension, we have that
$P(\gamma) = 0$ if and only if $p_i(\alpha) = 0$ for all
$i \in \{0,1,2,3\}$. Let $J \subseteq \F_q[x_1, \hdots, x_6]$ be the
ideal generated by $p_0, p_1, p_2, p_3$. It suffices to prove that
there exists $h \in J$ such that $h(\alpha) \neq 0$. We show this as
follows.

\begin{claim}
Let $S = \{(3,6),(2,4),(2,6),(3,5),(4,5),(4,6),(2,5),(2,3)\}$. Let
$h = \prod_{(i,j) \in S}(x_i - x_j)$. Then, $h \in J$.
\end{claim}

\begin{proof}
It suffices to find $Q_0, Q_1, Q_2, Q_3 \in \F_q[x_1, \hdots, x_6]$
such that $h = Q_0p_0 + Q_1p_1 + Q_2p_2 + Q_3p_3$. Note since $p_0$
divides $p_1$, we may assume $Q_1 = 0$.

Now, let $g(x) = x_2x_3 +x_2x_4 - x_2x_5 -x_2x_6  -x_3x_4 +
x_5x_6$. We have that
\begin{align*}
Q_2 &= x_2 \cdot g\\
Q_3 &= -\frac{1}{2}g
\end{align*}
Finally,
\begin{align*}
  Q_0 &= \frac{1}{2}(-2x_1x_2^3x_3 - 2x_1x_2^3x_4 + 2x_1x_2^3x_5 + 2x_1x_2^3x_6 -
        x_1x_2^2x_3^2 - x_1x_2^2x_4^2 + x_1x_2^2x_5^2\\
      &\hspace{0cm} + x_1x_2^2x_6^2 + 2x_1x_2x_3^2x_4 + 2x_1x_2x_3x_4^2 -
        2x_1x_2x_5^2x_6 - 2x_1x_2x_5x_6^2 - x_1x_3^2x_4^2\\
      &\hspace{0cm}+ x_1x_5^2x_6^2 - 2x_2^3x_3^2 - 2x_2^3x_3x_4 -
        2x_2^3x_4^2 + 2x_2^3x_5^2 + 2x_2^3x_5x_6\\
      &\hspace{0cm}+ 2x_2^3x_6^2 - x_2^2x_3^2x_4 - x_2^2x_3x_4^2 +
        x_2^2x_3x_4x_5 + x_2^2x_3x_4x_6 - x_2^2x_3x_5x_6\\
      &\hspace{0cm}- x_2^2x_4x_5x_6 + x_2^2x_5^2x_6 + x_2^2x_5x_6^2 +
        3x_2x_3^2x_4^2 + x_2x_3^2x_4x_5 + x_2x_3^2x_4x_6\\
      &\hspace{0cm}- x_2x_3^2x_5x_6 + x_2x_3x_4^2x_5 + x_2x_3x_4^2x_6
        + x_2x_3x_4x_5^2 + x_2x_3x_4x_6^2\\
      &\hspace{0cm}- x_2x_3x_5^2x_6 - x_2x_3x_5x_6^2 - x_2x_4^2x_5x_6
        - x_2x_4x_5^2x_6 - x_2x_4x_5x_6^2\\
      &\hspace{0cm}- 3x_2x_5^2x_6^2 - x_3^2x_4^2x_5 - x_3^2x_4^2x_6 +
        x_3^2x_4x_5x_6 + x_3x_4^2x_5x_6\\
      &\hspace{0cm}- x_3x_4x_5^2x_6 - x_3x_4x_5x_6^2 + x_3x_5^2x_6^2 + x_4x_5^2x_6^2)
\end{align*}

\change{We verify the claimed identity in $h$ with Julia code. See the script \textsf{claim-5-2.jl} in the GitHub repository.}

For a simpler perspective, we can make the following substitutions:
\begin{align*}
s_2 &= \alpha_3 + \alpha_4, & p_2 &= \alpha_3\alpha_4,\\
s_3 &= \alpha_5 + \alpha_6, & p_3 &= \alpha_5\alpha_6.
\end{align*}

Then, we have that
\begin{align*}
g &= x_2(s_2-s_3) - p_2+p_3\\
2Q_0 &= -2x_1x_2^3(s_2 - s_3) - x_1x_2^2(s_2^2 - 2p_2 - s_3^2 + 2p_3)\\
      & + 2x_1x_2(s_2p_2 - s_3p_3) - x_1(p_2-p_3)(p_2+p_3)\\
  &- 2x_2^3(s_2^2 - p_2 - s_3^2 + p_3) - x_2^2(s_2-s_3)(p_2+p_3)\\\
  &+x_2(3p_2^2 - 3p_3^2 + (s_2+s_3)(p_2s_3-s_2p_3)) - (p_2s_3 - s_2p_3)(p_2+p_3).
\end{align*}

\end{proof}

Note that $h(\alpha)$ is nonzero, as $\alpha$ is injective. Thus, we
have proved our construction is $\MDS(3)$.

\end{proof}

\begin{remark}
  One can adapt this construction for characteristic $2$. Let
  $F = \F_2[x]/(p(x))$ be a suitable extension field, with $p(x)$ an
  irreducible of degree $\lceil \log_2 n \rceil + 1$. For all
  $i \in [n]$, pick distinct $\alpha_i \in F$ such that each
  $\alpha_i$ has $1$ as its $x^0$ coefficient. Pick $\gamma$ from a
  degree four extension of $F$, and let
  $\beta_i = \alpha_i + \gamma \alpha_i^3$. We can define
  $p_0, p_1, p_2, p_3$ as in the proof of Theorem~\ref{thm:n4}, and
  show that the ideal they generate includes
  \[\prod_{i < j \in [6]}(x_i + x_j)\prod_{i < j < k \in [6]}(x_i + x_j +
    x_k) \neq 0,\] proving the code is $\MDS(3)$.
\end{remark}

\subsection{An \texorpdfstring{$O(n^3)$}{O(n 3)} construction.}

With another trick, we can shave the field size by another factor of
$n$. 

\begin{reptheorem}{thm-con-three}
There exists an explicit $\change{[}n,3\change{]}$-$\MDS(3)$ code with field size $O(n^3)$.
\end{reptheorem}

\begin{proof}

Assume $q$ is a power of $7$ and let $\F_q[\gamma]$ be the degree
three extension such that $\gamma^3 = 2$. (Observe that $x^3-2$ is
irreducible over $\F_7$ as it has no roots over $\F_7$.) %
Let $S \subseteq \F_q$ of size $n$ such that no six sum to $0$. It is
clear that we can have $|S| \ge q/7$, so we can do this as long as
$q\ge 7n$.

We claim that
\[
  \beta_\alpha = \alpha + \gamma \alpha^2, \alpha \in S.
\]
form the generators of a $\change{[}n,3\change{]}-\MDS(3)$ code. To check this, consider
an injective map $\alpha : [6] \to S$. Let $\beta_i$ be shorthand for
$\beta_{\alpha(i)}$. By Lemma~\ref{lemma:mds3}, we must check that
\[
  \det \begin{pmatrix}
    1 & \beta_{1} + \beta_{2} & \beta_{1}\beta_{2}\\
    1 & \beta_{3} + \beta_{4} & \beta_{3}\beta_{4}\\
    1 & \beta_{5} + \beta_{6} & \beta_{5}\beta_{6}
  \end{pmatrix} \neq 0
\]
When we expand, without using the identity that $\gamma^3 = 2$, the
LHS is a degree three polynomial in $\gamma$; that is, the symbolic
determinant can be written as
\[
  P(\gamma) = p_0(\alpha)\gamma^0 + p_1(\alpha)\gamma^1 + p_2(\alpha)\gamma^2 + p_3(\alpha)\gamma^3,
\]
where $p_i$ is some polynomial in $\F_q[x_1, \hdots,
x_6]$ evaluated at $\alpha(1), \hdots, \alpha(6)$. Once we apply that
$\gamma^3 = 2$, we get instead that
\[
  P(\gamma) = (p_0(\alpha) + 2p_3(\alpha))\gamma^0 + p_1(\alpha)\gamma^1 + p_2(\alpha)\gamma^2,
\]

Thus, it suffices to check that the ideal
\[
  J := (p_0 + 2 p_3, p_1, p_2)
\]
contains a polynomial which is nonzero when evaluated at $\alpha(1),
\hdots, \alpha(6)$. 

\begin{claim}\label{claim:groebner}
  $(x_1 + \cdots + x_6)\prod_{1 \le i < j \le 6} (x_j - x_i) \in J$.
\end{claim}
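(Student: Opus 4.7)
The statement is a polynomial ideal membership claim in $\F_7[x_1, \ldots, x_6]$: the explicit polynomial $f := (x_1 + \cdots + x_6)\prod_{1 \le i < j \le 6}(x_j - x_i)$ lies in the ideal $J = (p_0 + 2 p_3,\, p_1,\, p_2)$. Unlike the analogous Claim inside the proof of Theorem~\ref{thm:n4}, where explicit cofactors $Q_0, Q_1, Q_2, Q_3$ were written down by hand, here the plan is to defer the certificate to a Groebner basis computation.

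The first step is to compute $p_0, p_1, p_2, p_3$ explicitly by expanding the $3 \times 3$ determinant of Lemma~\ref{lemma:mds3} with $\beta_i = x_i + \gamma x_i^2$ as a formal polynomial in $\gamma$ over $\F_7[x_1, \ldots, x_6]$, collecting like powers of $\gamma$. This expansion is mechanical, and each $p_j$ has bounded total degree in the $x_i$. As a sanity check one verifies the identity $p_1 = p_0 \cdot (x_1 + \cdots + x_6)$ used already in Theorem~\ref{thm:n4}; although tempting, I do not see how to use it to shorten the argument here, since $p_0$ itself is not a generator of $J$ — only $p_0 + 2 p_3$ is. The second step is to form $J = (p_0 + 2 p_3, p_1, p_2)$ in a computer algebra system and compute a Groebner basis $G$ of $J$ with respect to a graded monomial order (for example, graded reverse lex). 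The third step is to reduce $f$ modulo $G$; exhibiting that the resulting normal form equals $0$ certifies $f \in J$ and proves the claim. This is precisely the computation performed by the accompanying Julia/OSCAR script.

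The main obstacle is purely computational, not conceptual. The generators $p_0 + 2 p_3, p_1, p_2$ have substantial total degree, the Vandermonde factor $\prod_{i<j}(x_j - x_i)$ alone has degree $15$, and the ideal lives in $6$ variables, so a by-hand Groebner basis reduction is infeasible. The proof therefore ultimately rests on the correctness of a computer algebra verification, which is why the authors provide their source code for reproducibility. One might hope to exploit the $S_2 \wr S_3$-symmetry of the generators of $J$ (invariance under swaps within each of the three pairs and under permutations of the pairs) to cut down the search for cofactors, but the target polynomial $f$ is alternating, so any certificate $f = Q_0(p_0+2p_3) + Q_1 p_1 + Q_2 p_2$ must break this symmetry through the $Q_i$, and I do not see a route to avoid the Groebner calculation at the end.
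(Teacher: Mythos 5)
Your proposal matches the paper's proof exactly: the claim is verified by computing a Groebner basis of $J=(p_0+2p_3,\,p_1,\,p_2)$ in OSCAR and checking that the normal form of the target polynomial is $0$. The additional remarks on symmetry and infeasibility of a hand computation are reasonable but do not change the argument.
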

\begin{proof}
  We verify this by computing a Groebner basis of $J$ and that the
  remainder upon dividing the LHS by the Groebner basis is $0$. We
  verify this in OSCAR.
\end{proof}

By Claim~\ref{claim:groebner}, it suffices to prove that
\[
  (\alpha(1) + \cdots \alpha(6))\prod_{1 \le i < j \le 6} (\alpha(j) -
  \alpha(i)) \neq 0
\]
for all injective maps $\alpha : [6] \to S$. Note that the first term
in the product is nonzero by the definition of $S$, and the remaining
terms are nonzero since $\alpha$ is injective. Thus, $P(\gamma) \neq
0$, so our code is indeed $\MDS(3)$.
\end{proof}

\section{Constructions of \texorpdfstring{$\change{[}n,4\change{]}$}{[n,4]}-\texorpdfstring{$\MDS(3)$}{MDS(3)}
  and \texorpdfstring{$\change{[}n,5\change{]}$}{[n,5]}-\texorpdfstring{$\MDS(3)$}{MDS(3)} codes}

In this section, we give some structural observations about $\MDS(3)$
codes which shall lead to explicit constructions of $\change{[}n,4\change{]}$-$\MDS(3)$
and $\change{[}n,5\change{]}$-$\MDS(3)$ codes.

\subsection{\texorpdfstring{$\MDS(\ell)$}{MDS(l)} equivalent conditions for Reed-Solomon codes}

First, we give an alternative characterization of the higher-order MDS
conditions when considering a Reed-Solomon code.

Assume we have a $\change{[}n,k\change{]}$-RS code $V$ with evaluation points $\beta_1, \hdots,
\beta_n \in \F$. For $A \subseteq [n]$ define
\[
  \Pi_A(x) = \prod_{i \in A} (x - \beta_i).
\]
Define $\Pi^d_A(x)$ to be the following (row) vector of polynomials:
\[
  \Pi^d_A(x) := (\Pi_A(x), x\Pi_A(x), \cdots, x^{d-1}\Pi_A(x)).
\]

\begin{lemma}\label{lem:prod-mat}
  Assume that $A_1, \hdots, A_\ell \subseteq [n]$ such that
  $|A_i| \le k$ for all $i \in [\ell]$. Let
  $|A_1| + \cdots + |A_\ell| = (\ell-1)k$. Let $\delta_i = k - |A_i|$.
  Assume without loss of generality that
  $A_1 = \{1,2,\hdots, k-\delta_1\}$. We have that
  $V_{A_1} \cap \cdots \cap V_{A_\ell} = 0$ if and only if
  \begin{align}
    \det \begin{pmatrix}
      \Pi^{\delta_2}_{A_2}(\beta_{1}) & \Pi^{\delta_3}_{A_3}(\beta_{1}) &
      \cdots & \Pi^{\delta_\ell}_{A_\ell}(\beta_1)\\
      \Pi^{\delta_2}_{A_2}(\beta_2) & \Pi^{\delta_3}_{A_3}(\beta_2) &
      \cdots & \Pi^{\delta_\ell}_{A_\ell}(\beta_2)\\
      \vdots & \vdots & \ddots & \vdots\\
      \Pi^{\delta_2}_{A_2}(\beta_{k-\delta_1}) & \Pi^{\delta_3}_{A_3}(\beta_{k-\delta_1}) &
      \cdots & \Pi^{\delta_\ell}_{A_\ell}(\beta_{k-\delta_1})
    \end{pmatrix}\label{eq:MDS-key} \neq 0.
  \end{align}
\end{lemma}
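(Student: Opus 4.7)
The plan is to dualize the intersection condition, parametrize each $V_{A_i}^\perp$ by polynomials, and then reduce the resulting linear independence check by evaluating on $A_1$.

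First, I would observe that since the Reed-Solomon code $V$ is MDS, $\dim V_{A_i}^\perp = k - |A_i| = \delta_i$ and $\sum_{i=1}^\ell \delta_i = \ell k - (\ell-1)k = k$, where $\perp$ denotes orthogonal complement in $\F^k$. By the standard duality $(\bigcap_i V_{A_i})^\perp = \sum_i V_{A_i}^\perp$, we have $V_{A_1} \cap \cdots \cap V_{A_\ell} = 0$ iff $\sum_i V_{A_i}^\perp = \F^k$, and by the dimension count this sum equals $\F^k$ iff it is a direct sum, iff chosen bases of all $\ell$ perpendicular spaces, taken together, are linearly independent in $\F^k$.

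Next, I would identify $\F^k$ with $\F[x]_{<k}$ via $u \mapsto p_u(x) := \sum_{r=1}^k u_r x^{r-1}$. Under this identification, $u \cdot (1,\beta_j,\ldots,\beta_j^{k-1})^T = p_u(\beta_j)$, so $V_{A_i}^\perp$ corresponds to polynomials in $\F[x]_{<k}$ vanishing on $\{\beta_j : j \in A_i\}$. Since $\Pi_{A_i}$ has degree $k - \delta_i$, a basis of this subspace is $\Pi_{A_i}, x\Pi_{A_i}, \ldots, x^{\delta_i-1}\Pi_{A_i}$.

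Finally, I would ``quotient out'' $V_{A_1}^\perp$: the evaluation map $\F[x]_{<k} \to \F^{|A_1|}$ given by $p \mapsto (p(\beta_r))_{r \in A_1}$ is surjective (the $\beta_r$ are distinct and $|A_1| \leq k$) with kernel $V_{A_1}^\perp$, so it descends to an isomorphism $\F[x]_{<k}/V_{A_1}^\perp \cong \F^{|A_1|}$. Hence the bases from the previous step are jointly linearly independent in $\F[x]_{<k}$ iff the images of $x^{j-1}\Pi_{A_i}(x)$ for $i \geq 2$, $j \in [\delta_i]$ are linearly independent in $\F^{|A_1|}$; there are $\sum_{i=2}^\ell \delta_i = k - \delta_1 = |A_1|$ such images, so by dimension this is equivalent to the matrix with $(i,j)$-th column $(\beta_r^{j-1}\Pi_{A_i}(\beta_r))_{r \in A_1}$ having nonzero determinant. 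By the definition of $\Pi^{\delta_i}_{A_i}$ and since $A_1 = \{1,\ldots,k-\delta_1\}$, this is precisely the matrix in (\ref{eq:MDS-key}). The argument is essentially bookkeeping; the only subtlety I anticipate is aligning the index conventions so that the columns of the final matrix are indexed in the expected blocks for $A_2, \ldots, A_\ell$.
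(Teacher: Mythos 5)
Your proof is correct and follows essentially the same route as the paper: dualize to $V_{A_1}^\perp + \cdots + V_{A_\ell}^\perp = \F^k$, identify $\F^k$ with polynomials of degree $<k$ so that $V_{A_i}^\perp$ has basis $\Pi_{A_i}, x\Pi_{A_i}, \ldots, x^{\delta_i-1}\Pi_{A_i}$, and then test independence via evaluation. The only (cosmetic) difference is in the last step: the paper evaluates at all $k$ points $\beta_1,\ldots,\beta_k$ and kills the first $\delta_1$ columns of the resulting $k\times k$ determinant using the block structure (zero upper-left block, invertible Vandermonde-type lower-left block), whereas you quotient by $V_{A_1}^\perp$ and evaluate only at the points of $A_1$ — an equivalent and slightly cleaner way to arrive at the same $(k-\delta_1)\times(k-\delta_1)$ matrix.
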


\begin{proof}
Starting with the equality $V_{A_1} \cap \cdots \cap V_{A_\ell} = 0,$
take the dual to get
\begin{align}
  V^{\perp}_{A_1} + \cdots + V^{\perp}_{A_\ell} = \F^k. \label{eq:cnp}
\end{align}
Identify each vector $(c_0, \hdots, c_{k-1})^{T} \in \F^k$ with the
polynomial $c_0 + c_1x + \cdots + c_{k-1}x^{k-1}$. Observe that a
degree $\le k-1$ polynomial $p$ is in $V^{\perp}_{A_i}$ if and only if
$p(\beta_i) = 0$ for all $i \in A_i$ which is true if and only if
$\Pi_{A_i} | p$. Thus, the polynomials $(\Pi_{A_i}(x), x \Pi_{A_i}(x),
\hdots, x^{\delta_i-1}\Pi_{A_i}(x)) = \Pi^{\delta_i}_{A_i}(x)$ form a basis of $V_{A_i}^{\perp}$.

Consider the linear transformation $\Lambda : \F^k \to \F^k$ which
sends a polynomial $p$ to the evaluations $(p(\beta_1), \hdots,
p(\beta_k))$. Since the $\beta_i$'s are distinct, this map is
invertible because the Vandermonde matrix is nonsingular. Applying
$\Lambda$ to the basis $\Pi^{\delta_i}_{A_i}(x)$, we have that
$V^{\perp}_{A_i}$ also has the following columns as a basis:
\[
\begin{pmatrix}
  \Pi_{A_i}^{\delta_i}(\beta_1)\\
  \Pi_{A_i}^{\delta_i}(\beta_2)\\
  \vdots\\
  \Pi_{A_i}^{\delta_i}(\beta_k)
\end{pmatrix}.
\]
Thus, (\ref{eq:cnp}) is equivalent to
\begin{align}
\det \begin{pmatrix}
  \Pi^{\delta_1}_{A_1}(\beta_{1}) & \Pi^{\delta_2}_{A_2}(\beta_{1}) &
  \cdots & \Pi^{\delta_\ell}_{A_\ell}(\beta_1)\\
  \Pi^{\delta_1}_{A_1}(\beta_2) & \Pi^{\delta_2}_{A_2}(\beta_2) &
  \cdots & \Pi^{\delta_\ell}_{A_\ell}(\beta_2)\\
  \vdots & \vdots & \ddots & \vdots\\
  \Pi^{\delta_1}_{A_1}(\beta_{k}) & \Pi^{\delta_2}_{A_2}(\beta_{k}) &
  \cdots & \Pi^{\delta_\ell}_{A_\ell}(\beta_{k})
\end{pmatrix}\neq 0.\label{eq:MDS-step}
\end{align}

The upper-left $(k-\delta_i) \times \delta_i$ submatrix in
(\ref{eq:MDS-step}) is all $0$'s since
$A_1=\{\beta_1,\beta_2,\dots,\beta_{k-\delta_1}\}$. The lower-left
$\delta_i \times \delta_i$ submatrix in (\ref{eq:MDS-step}) has full
rank since after removing common factors $\Pi_{A_1}(\beta_i)$ in each
row, we are left with a Vandermonde matrix. Thus, by elementary column
operations, (\ref{eq:MDS-step}) holds if and only if (\ref{eq:MDS-key}) holds.
\end{proof}

We also have the following corollary when $\ell = 3$ that we can WLOG
assume the sets are disjoint.

\begin{corollary}\label{cor:weak}
  Let $A_1, A_2, A_3 \subseteq [n]$ be such that
  $|A_1| + |A_2| + |A_3| = 2k$, $|A_1|, |A_2|, |A_3| \le k$ and
  $|A_1 \cap A_2 \cap A_3| = 0$. Let
  $A'_1 = A_1 \setminus (A_2 \cup A_3), A'_2 = A_2 \setminus (A_1 \cup
  A_3), A'_3 = A_3 \setminus (A_1 \cup A_2)$ with
  $k' = \frac{1}{2}(|A'_1| + |A'_2| + |A'_3|)$. Let $V'$ be the
  $\change{[}n,k'\change{]}$-RS code with the same generators as $V$. Then, $V_{A_1}
  \cap V_{A_2} \cap V_{A_3} = 0$ if and only if $V'_{A_1} \cap
  V'_{A_2} \cap V'_{A_3} = 0$. Furthermore, $k'-|A'_i| \le k-|A_i|$
  for all $i \in [3]$.
\end{corollary}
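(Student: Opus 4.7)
The plan is to prove the equivalence via the dual formulation, building on the polynomial identification already used in the proof of Lemma~\ref{lem:prod-mat}. Identifying $\F^k$ with polynomials of degree less than $k$, each $V^{\perp}_{A_i}$ corresponds to the $\delta_i$-dimensional space of multiples of $\Pi_{A_i}(x)$, where $\delta_i := k - |A_i|$. Since $\delta_1 + \delta_2 + \delta_3 = 3k - 2k = k$, the condition $V_{A_1} \cap V_{A_2} \cap V_{A_3} = 0$ (equivalent to $V^{\perp}_{A_1} + V^{\perp}_{A_2} + V^{\perp}_{A_3} = \F^k$) is equivalent to this sum being direct, i.e., the only relation
\[
  \Pi_{A_1} q_1 + \Pi_{A_2} q_2 + \Pi_{A_3} q_3 = 0, \quad \deg q_i < \delta_i,
\]
is the trivial one. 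The analogous reduction for $V'$ gives: $V'_{A'_1} \cap V'_{A'_2} \cap V'_{A'_3} = 0$ iff the only relation $\Pi_{A'_1} r_1 + \Pi_{A'_2} r_2 + \Pi_{A'_3} r_3 = 0$ with $\deg r_i < \delta'_i := k' - |A'_i|$ is trivial.

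The key step is to exhibit an explicit bijection between the two sets of relations. Setting $B_{jk} := A_j \cap A_k$, factor $\Pi_{A_i} = \Pi_{A'_i} \Pi_{B_{ij}} \Pi_{B_{ik}}$. For a $V$-relation, evaluate at $\beta_j$ with $j \in B_{12}$: the first two terms vanish, and since the triple intersection is empty we have $j \notin A_3$, so $\Pi_{A_3}(\beta_j) \ne 0$, forcing $q_3(\beta_j) = 0$. Ranging over $j \in B_{12}$ yields $\Pi_{B_{12}} \mid q_3$; by symmetry $\Pi_{B_{13}} \mid q_2$ and $\Pi_{B_{23}} \mid q_1$. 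Writing $q_i = \Pi_{B_{jk}} r_i$, the degree bound becomes $\deg r_i < \delta_i - |B_{jk}|$, which matches $\delta'_i$ (a short count gives $\delta'_i - \delta_i = -|B_{jk}|$), and the relation factors as
\[
  \Pi_{B_{12}} \Pi_{B_{13}} \Pi_{B_{23}} \cdot \bigl(\Pi_{A'_1} r_1 + \Pi_{A'_2} r_2 + \Pi_{A'_3} r_3\bigr) = 0.
\]
Cancelling the nonzero common factor gives a $V'$-relation, and the inverse map (multiplying $r_i$ by $\Pi_{B_{jk}}$) takes $V'$-relations back to $V$-relations, establishing the iff.

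The furthermore clause is a direct count: summing $|A'_i| = |A_i| - |B_{ij}| - |B_{ik}|$ over $i$ and dividing by $2$ gives $k' = k - |B_{12}| - |B_{13}| - |B_{23}|$, from which $k' - |A'_i| = k - |A_i| - |B_{jk}| \le k - |A_i|$. The main obstacle is the degree bookkeeping: one must verify that the divisibility $\Pi_{B_{jk}} \mid q_i$ is compatible with $\deg q_i < \delta_i$ (so that the quotient $r_i$ lies exactly in the range $\deg r_i < \delta'_i$), and that the bijection sends trivial relations to trivial relations. Once this accounting is in place, the factor $\Pi_{B_{12}} \Pi_{B_{13}} \Pi_{B_{23}}$ cleanly separates out and the polynomial factorization makes the equivalence transparent.
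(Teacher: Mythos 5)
Your proof is correct and takes a genuinely different route from the paper's. The paper proves this by induction on $k$: it picks a single element $i$ lying in two of the sets (necessarily exactly two, since the triple intersection is empty), applies the determinant criterion of Lemma~\ref{lem:prod-mat}, factors one linear term $(\beta_j-\beta_i)$ out of each row to pass to $(A_1,A_2\setminus\{i\},A_3\setminus\{i\})$ in the $[n,k-1]$-RS code, and iterates until the sets are disjoint. You instead work entirely on the dual side, identifying $V^{\perp}_{A_i}$ with the multiples of $\Pi_{A_i}$ of degree less than $k$ and setting up a degree-preserving bijection between the syzygy modules $\{(q_1,q_2,q_3):\sum_i\Pi_{A_i}q_i=0,\ \deg q_i<\delta_i\}$ for $V$ and for $V'$, cancelling all the shared factors $\Pi_{A_i\cap A_j}$ in one step; your divisibility argument (evaluating the relation at $\beta_j$ for $j\in A_1\cap A_2$ to force $\Pi_{A_1\cap A_2}\mid q_3$) and the count $\delta'_i=\delta_i-|A_j\cap A_k|$ are both correct. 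Your version avoids the induction and the normalization $A_1=\{1,\dots,k-\delta_1\}$, and makes the ``furthermore'' clause immediate; the paper's version has the advantage of reusing the Lemma~\ref{lem:prod-mat} determinant machinery verbatim. One shared caveat worth flagging: your step ``spans iff direct iff only the trivial relation'' for $V'$ needs $\dim V'^{\perp}_{A'_i}=\delta'_i$, i.e.\ $\delta'_i=k'-|A'_i|\ge 0$ for each $i$, which is equivalent to $|A_i|+|A_j\cap A_k|\le k$. This is not among the corollary's stated hypotheses and the statement can actually fail without it (e.g.\ $k=4$, $A_1=\{1,2,3\}$, $A_2=\{4,5,6\}$, $A_3=\{4,5\}$ gives $A'_3=\emptyset$, so the primed intersection is trivially $0$ while $V_{A_1}\cap V_{A_3}\neq 0$ for dimension reasons); but the paper's own inductive proof has exactly the same unstated requirement, and the condition holds in every application since it is precisely the genericity condition of Corollary~\ref{cor:red}. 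So your argument is on the same footing as the paper's, just organized more cleanly.
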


\begin{remark}
This effectively reduces checking only the $\MDS$ conditions in which
the sets are disjoint, see similar ideas in \cite{bgm2021mds} and
\cite{roth2021higher} (i.e., Theorem 18).
\end{remark}

\begin{proof}
We prove this result by induction on $k$. The base case of $k=0$ is
trivial. For positive $k$, note that if $A_1$, $A_2$ and $A_3$ are
disjoint, the result immediately follows.
Otherwise, assume WLOG that there is some $i
\in A_2 \cap A_3$. Let $\delta_i=k-|A_i|$ and assume WLOG that that $A_1=\{1,2,\dots,k-\delta_1\}$. Note that $i \not\in A_1$ by assumption. We will use Lemma~\ref{lem:prod-mat} to reduce intersection condition to a determinant condition. Now, $x -
\beta_i$ is a factor of both $\Pi_{A_2}$ and $\Pi_{A_3}$. Thus,

\begin{align*}
   \det \begin{pmatrix}
      \Pi^{\delta_2}_{A_2}(\beta_{1}) & \Pi^{\delta_3}_{A_3}(\beta_{1})\\
      \Pi^{\delta_2}_{A_2}(\beta_2) & \Pi^{\delta_3}_{A_3}(\beta_2)\\
      \vdots & \vdots &\\
      \Pi^{\delta_2}_{A_2}(\beta_{k-\delta_1}) & \Pi^{\delta_3}_{A_3}(\beta_{k-\delta_1})
    \end{pmatrix} = \prod_{j=1}^{k-\delta_1}(\beta_j - \beta_i)\det \begin{pmatrix}
      \Pi^{\delta_2}_{A_2\setminus \{i\}}(\beta_{1}) & \Pi^{\delta_3}_{A_3\setminus\{i\}}(\beta_{1})\\
      \Pi^{\delta_2}_{A_2\setminus \{i\}}(\beta_2) & \Pi^{\delta_3}_{A_3\setminus\{i\}}(\beta_2)\\
      \vdots & \vdots &\\
      \Pi^{\delta_2}_{A_2\setminus \{i\}}(\beta_{k-\delta_1}) &
      \Pi^{\delta_3}_{A_3\setminus \{i\}}(\beta_{k-\delta_1})
    \end{pmatrix}.
  \end{align*}
  Since $\beta_i \neq \beta_j$ for all $j \in A_1$. We have that one
  of the two determinants is nonzero if and only if the other is
  nonzero. Let $V'$ be the $\change{[}n,k-1\change{]}$-RS code with the generators
  $\beta_1, \hdots, \beta_n$. The determinant on the RHS being nonzero
  is equivalent to whether $V'_{A_1} \cap V'_{A_2 \setminus \{i\}} \cap
  V'_{A_3 \setminus \{i\}} = 0$. Note that $(k-1)-|A_1| \le k -|A_1|$,
  $(k-1)-|A_2 \setminus \{i\}| \le k -|A_2|$ and $(k-1)-|A_3 \setminus
  \{i\}| \le k - |A_3|.$ Thus, by induction, we may iterate on
  $(A_1, A_2 \setminus \{i\}, A_3 \setminus \{i\})$ until the three
  sets are pairwise disjoint. 
\end{proof}

\subsection{\texorpdfstring{$\change{[}n,4\change{]}$}{[n,4]}-\texorpdfstring{$\MDS(3)$}{MDS(3)}}

The goal of this section is the following theorem.

\begin{theorem}\label{thm:thm4}
There exists an explicit $\change{[}n,k\change{]}$-MDS code $V$ over field size
$O(n)^{2k-1}$ such that for any $A_1, A_2, A_3 \subseteq [n]$ such
that $|A_1| = 2$ and $|A_2| = |A_3| = k-1$ we have that $V_{A_1}
\cap V_{A_2} \cap V_{A_3} = 0$ whenever it generically should.
\end{theorem}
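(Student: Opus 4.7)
The plan is to construct an explicit Reed-Solomon code whose generators take the form
\[
\beta_i = \alpha_i + \gamma\,\phi(\alpha_i), \qquad i \in [n],
\]
where $\alpha_1, \dots, \alpha_n$ lie in a subset of $\F_q$ with $q = \Theta(n)$ and $\gamma$ generates a degree $2k-1$ extension of $\F_q$. This gives a total field size of $q^{2k-1} = O(n)^{2k-1}$, as required. The auxiliary map $\phi$ (for instance $\phi(x) = x^2$, as used in the $O(n^4)$ construction for $k=3$ in Theorem~\ref{thm:n4}) is chosen to introduce enough algebraic freedom in the $\gamma$-direction.

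First I would reduce the condition $V_{A_1}\cap V_{A_2}\cap V_{A_3}=0$ (for $|A_1|=2$, $|A_2|=|A_3|=k-1$) to a determinantal non-vanishing. By Corollary~\ref{cor:weak}, one may assume $A_1, A_2, A_3$ are pairwise disjoint. Writing $A_1 = \{i_1, i_2\}$ and applying Lemma~\ref{lem:prod-mat} with $\delta_1 = k-2$ and $\delta_2 = \delta_3 = 1$, the intersection condition becomes
\[
D(\gamma) := \Pi_{A_2}(\beta_{i_1})\,\Pi_{A_3}(\beta_{i_2}) - \Pi_{A_2}(\beta_{i_2})\,\Pi_{A_3}(\beta_{i_1}) \neq 0.
\]
Each $\beta_i$ is linear in $\gamma$ and each $\Pi_{A_j}$ is a product of $k-1$ linear factors, so $D(\gamma)$ is a polynomial in $\gamma$ of degree at most $2(k-1) = 2k-2$. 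Since $\gamma$ generates a degree $2k-1$ extension of $\F_q$, $D(\gamma)$ is nonzero in the extension field whenever it is nonzero as a formal polynomial in $\F_q[\gamma]$.

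The core of the argument is thus to show that, for every valid pairwise disjoint triple $(A_1, A_2, A_3)$, $D(\gamma)$ is a nonzero element of $\F_q[\gamma]$. My plan is to appeal to the GM-MDS theorem (Theorem~\ref{thm:gm-mds}): since $q \ge n+k-1$, for each fixed triple there exist generators $\delta_1, \dots, \delta_n \in \F_q$ making the associated Reed-Solomon determinant nonzero. I then want to realize these $\delta_i$ as the specialization $\alpha_i + \gamma_0\,\phi(\alpha_i)$ for some $\gamma_0 \in \overline{\F_q}$, which would exhibit a specialization of $\gamma$ at which $D$ is nonzero, forcing $D \not\equiv 0$ as a polynomial.

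The main obstacle I anticipate is precisely this lifting step: the map $(\alpha, \gamma_0) \mapsto \alpha + \gamma_0\,\phi(\alpha)$ does not in general sweep out every desired target tuple $(\delta_1, \dots, \delta_n)$ as $\gamma_0$ varies. One way to sidestep this is to allow the $\alpha_i$ to themselves be polynomials in a second extension generator, analogous to how Theorem~\ref{thm-con-gen} exploits algebraic independence across multiple extension generators, but one has to be careful not to inflate the field size. A more direct route is to expand $D(\gamma)$ symbolically and isolate a specific coefficient (for example the coefficient of $\gamma^{2k-2}$, or the coefficient of a carefully chosen monomial in the $\alpha_j$ with $j \in A_2 \cup A_3$) that can be shown nonzero by a Vandermonde-type argument combined with the genericity of the $\alpha_i$. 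Executing such a coefficient isolation for every combinatorial choice of $(A_1,A_2,A_3)$ simultaneously, while keeping the extension degree at $2k-1$, is the main technical hurdle.
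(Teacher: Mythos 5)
There is a genuine gap: you set up the construction and the reduction correctly (the form $\beta_i = \alpha_i + \gamma\,\phi(\alpha_i)$ with a degree-$(2k-1)$ extension, the reduction via Corollary~\ref{cor:weak} and Lemma~\ref{lem:prod-mat} to the $2\times 2$ determinant $D(\gamma)$ of degree $\le 2k-2$), but the entire content of the theorem is the proof that $D(\gamma)\not\equiv 0$ in $\F_q[\gamma]$, and you leave that step unproven. Both of your proposed routes are problematic. The GM-MDS lifting fails for exactly the reason you identify: a single scalar $\gamma_0$ gives only one degree of freedom, so you cannot hit an arbitrary target tuple $(\delta_1,\dots,\delta_n)$, and GM-MDS only guarantees good generators exist for each fixed triple, not that they lie on your one-parameter curve. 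The coefficient-isolation route is not carried out, and it is not clear it can be: for a generic injective $\alpha$ the top coefficient of $\gamma^{2k'-2}$ in $D$ is $\prod_{i\in A'_2}\phi'$-type terms minus a symmetric counterpart, and showing it (or some other coefficient) is nonzero for \emph{every} disjoint triple simultaneously is essentially the whole difficulty.

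The missing idea in the paper's argument is a specific algebraic identity: with $\beta_i = \gamma\alpha_i - \alpha_i^2$, differences factor as $\beta_{i} - \beta_{j} = (\alpha_{i}-\alpha_{j})\bigl(\gamma - (\alpha_{i}+\alpha_{j})\bigr)$. Hence $\Pi_{A'_2}(\beta_1)\Pi_{A'_3}(\beta_2)$ and $\Pi_{A'_2}(\beta_2)\Pi_{A'_3}(\beta_1)$ are each a nonzero $\F_q$-scalar times a product of monic linear factors in $\gamma$ whose roots are the sums $\alpha_1+\alpha_i$ (resp.\ $\alpha_2+\alpha_i$) over $i$ in $A'_2$ and $A'_3$. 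If $D(\gamma)\equiv 0$, the two multisets of roots coincide; disjointness of $A'_2,A'_3$ and injectivity of $\alpha$ force the resulting bijection to split into permutations $\tau_2$ of $A'_2$ and $\tau_3$ of $A'_3$ with $\alpha_1+\alpha_i = \alpha_2+\alpha_{\tau_2(i)}$, and summing over $i\in A'_2$ gives $|A'_2|(\alpha_1-\alpha_2)=0$, which is impossible because the characteristic is taken to be at least $k$. Your suggested $\phi(x)=x^2$ with $\beta_i=\alpha_i+\gamma\alpha_i^2$ admits a similar factorization $\beta_i-\beta_j=(\alpha_i-\alpha_j)(1+\gamma(\alpha_i+\alpha_j))$, so a variant of this argument could be made to work, but you would additionally need to exclude or handle pairs with $\alpha_i+\alpha_j=0$; in any case you did not observe or exploit this factorization, and without it (or an equivalent mechanism) the proof does not go through.
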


This result immediately implies a $\change{[}n,4\change{]}$-$\MDS(3)$ construction..

\begin{corollary}
There exists an explicit $\change{[}n,4\change{]}$-$\MDS(3)$ code over field size $O(n^7)$.
\end{corollary}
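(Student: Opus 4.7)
The plan is to construct $V$ as a Reed-Solomon code over $\F = \F_q[\gamma]$, where $q = O(n)$ is a prime power and $\gamma$ generates an extension of degree $2k-1$, yielding field size $q^{2k-1} = O(n)^{2k-1}$. The evaluation points will take the form $\beta_i = \alpha_i + \gamma\, \alpha_i^2$ for distinct $\alpha_1, \ldots, \alpha_n \in \F_q$, directly generalizing the $k=3$ construction of Theorem~\ref{thm:n4}.

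By Corollary~\ref{cor:weak}, we may assume $A_1, A_2, A_3$ are pairwise disjoint. Writing $A_1 = \{a, b\}$ and invoking Lemma~\ref{lem:prod-mat} with $\delta_1 = k-2$ and $\delta_2 = \delta_3 = 1$, the intersection condition $V_{A_1} \cap V_{A_2} \cap V_{A_3} = 0$ reduces to the non-vanishing of the $2\times 2$ determinant
\[
  P \;:=\; \Pi_{A_2}(\beta_a)\Pi_{A_3}(\beta_b) \;-\; \Pi_{A_2}(\beta_b)\Pi_{A_3}(\beta_a).
\]
The factorization $\beta_i - \beta_\ell = (\alpha_i - \alpha_\ell)\bigl(1 + \gamma(\alpha_i + \alpha_\ell)\bigr)$ shows that each $\Pi_{A_j}(\beta_i)$ has $\gamma$-degree at most $k-1$, so $P$ has $\gamma$-degree at most $2k-2$. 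Since $[\F_q[\gamma]:\F_q] = 2k-1$, we have $P \neq 0$ in $\F$ if and only if at least one of its $\gamma$-coefficients $P_0, \ldots, P_{2k-2} \in \F_q[x_1, \ldots, x_n]$ is nonzero when evaluated at our chosen $\alpha$. This is the quantitative step that makes the $(2k-1)$-fold extension suffice.

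It remains to verify this non-vanishing uniformly in $(A_1, A_2, A_3)$ and over an injective $\alpha : [n] \to \F_q$, which is the main obstacle. Mirroring the proof of Theorem~\ref{thm:n4}, the preferred route is to show that the ideal $J = (P_0, \ldots, P_{2k-2}) \subseteq \F_q[x_1, \ldots, x_n]$ contains a polynomial of the form $\prod_{(i,j) \in S}(x_i - x_j)$ for some set of pairs $S$, which is automatically nonzero on any injective $\alpha$. When an explicit ideal-membership identity becomes unwieldy for larger $k$, one can instead invoke the GM-MDS theorem (Theorem~\ref{thm:gm-mds}) to deduce that $P$ is not the zero polynomial in the bivariate ring $\F_q[\alpha, \gamma]$, and then combine a Schwartz-Zippel bound with a union bound over the $\poly(n)$ combinatorial types of $(A_1, A_2, A_3)$ to select a single injective $\alpha \in \F_q^n$ that avoids all the bad loci simultaneously, provided $q = \Omega(n) \cdot \poly(k)$. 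This last uniformization is the delicate part, since even showing $P \not\equiv 0$ as a formal identity already requires a subtle argument: the $\gamma^0$-coefficient of $P$ encodes the (generally failing) $\MDS(3)$ condition for the base evaluation points $\alpha_i$, so one must exploit a higher-degree coefficient to produce the witness.
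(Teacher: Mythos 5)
Your setup is correct and is essentially the one the paper uses: the corollary is an instance of a general statement (Theorem~\ref{thm:thm4}) giving an explicit $[n,k]$-MDS code over a field of size $O(n)^{2k-1}$ that handles all triples with $|A_1|=2$ and $|A_2|=|A_3|=k-1$, which by Corollary~\ref{cor:red} is the only nontrivial size pattern when $k=4$. Your reduction to disjoint sets via Corollary~\ref{cor:weak}, the $2\times 2$ determinant $P=\Pi_{A_2}(\beta_a)\Pi_{A_3}(\beta_b)-\Pi_{A_2}(\beta_b)\Pi_{A_3}(\beta_a)$ from Lemma~\ref{lem:prod-mat}, and the $\gamma$-degree count $2k-2<2k-1$ all match. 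But the heart of the proof --- that $P\neq 0$ for \emph{every} choice of disjoint sets and \emph{every} injective $\alpha$ --- is exactly what you leave open, and neither of your fallback routes closes it. The GM-MDS theorem shows the relevant determinant is nonzero for independent, freely chosen evaluation points; it says nothing about the determinant restricted to the curve $\beta_i=\alpha_i+\gamma\alpha_i^2$, on which a nonzero polynomial can perfectly well vanish identically. And the Schwartz--Zippel/union-bound route cannot deliver $q=O(n)$: the polynomial $P$ depends on the actual indices in $A_1,A_2,A_3$, not merely on a ``combinatorial type,'' so one must beat all $\Theta(n^8)$ triples, which forces $q=\Omega(n^8)$ and a total field size far above $O(n^7)$. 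The ideal-membership route would give the required uniform statement, but you neither exhibit the identity nor give a reason one exists for $k=4$.

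The paper closes this gap with a direct argument that your parametrization nearly supports. Take $\beta_i=\gamma\alpha_i-\alpha_i^2$, so that $\beta_i-\beta_\ell=(\alpha_i-\alpha_\ell)\left(\gamma-(\alpha_i+\alpha_\ell)\right)$ is, up to a nonzero scalar in $\F_q$, a \emph{monic linear} polynomial in $\gamma$ with root $\alpha_i+\alpha_\ell$. Then $\Pi_{A_2}(\beta_a)\Pi_{A_3}(\beta_b)$ and $\Pi_{A_2}(\beta_b)\Pi_{A_3}(\beta_a)$ are, up to nonzero scalars, products of linear factors in $\gamma$ of degree less than $2k-1$; if they agreed in $\F_q[\gamma]$ they would agree as polynomials, forcing a bijection matching their roots. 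Disjointness of $A_2$ and $A_3$ together with injectivity of $\alpha$ forces this bijection to split into permutations $\tau_2$ of $A_2$ and $\tau_3$ of $A_3$ satisfying $\alpha_a+\alpha_i=\alpha_b+\alpha_{\tau_2(i)}$; summing over $i\in A_2$ gives $|A_2|(\alpha_a-\alpha_b)=0$, a contradiction once the characteristic of $\F_q$ exceeds $k$. Without this argument (or an explicit ideal-membership identity of the kind you describe), your proposal does not establish the corollary.
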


\begin{proof}
  By Corollary~\ref{cor:red}, the only nontrivial condition for
  $\change{[}n,4\change{]}$-$\MDS(3)$ is $|A_1| = 2$ and $|A_2| = |A_3| = 3$, so the
  construction applies.
\end{proof}

\begin{remark}
By similar logic, Theorem~\ref{thm:thm4} allows one to construct a
$\change{[}n,3\change{]}$-$\MDS(3)$ code over field size $O(n^5)$, although the
constructions in Section~\ref{sec:n3mds3} are superior. 
\end{remark}

\begin{proof}[Proof of Theorem~\ref{thm:thm4}.]

Let $q \ge n$ be a prime power with characteristic at least $k$. Let
$K$ be the a degree-$2k-1$ extension of $\F_q$. We let $\gamma$ be a
generator of this extension.

Pick an arbitrary injective map $\alpha : [n] \to \F_q$. We shall let
our RS field evaluations be
\[
  \beta_i = \gamma \alpha_i -\alpha_i^2.
\]

We let $V(\alpha, \gamma)$ be the dimension $k$ RS code
with $\beta_1, \hdots, \beta_n$ (as defined above) as evaluation points.

Let $A_1, A_2, A_3$ be such that $|A_1| = 2$ and
$|A_2| = |A_3| = k-1$, and $W_{A_1} \cap W_{A_2} \cap W_{A_3} = 0$ for
a generic matrix $W$. By Corollary~\ref{cor:weak}, to check that
$V_{A_1} \cap V_{A_2} \cap V_{A_3} = 0$, it suffices to check our
construction works for $A'_i \subseteq A_i$ in the code $V'$ (i.e.,
the $\change{[}n,k'\change{]}$-RS code with generators $\beta_1, \hdots, \beta_n$) where
$k' = \frac{1}{2}(|A'_1| + |A'_2| + |A'_3|)$ which are pairwise
disjoint.

If $|A'_2|$ or $|A'_3|$ is at least $k'$, then
$V_{A'_1} \cap V_{A'_2} \cap V_{A'_3}$ would be what it generically
should be based on $V$ being $\MDS$. Otherwise, $|A'_2| + |A'_3| \le
2k'-2$, so $|A'_1| \ge 2 = |A_1|$. Thus, we may assume without loss of generality that
$A'_1 = A_1 = \{1,2\}$ and that $|A'_2| = |A'_3| = k'-1$.
We may also assume $k' > |A_1| = 2$ by the same reasoning.

We seek to show that (\ref{eq:MDS-key}) has full row rank (i.e.,
nonzero determinant). That is, it suffices to consider the
determinant.
\[
  \Pi_{A'_2}(\beta_1)\Pi_{A'_3}(\beta_2) - \Pi_{A'_2}(\beta_2)\Pi_{A'_3}(\beta_1).
\]

Now, observe that for some nonzero $a \in \F_q$, we have that
\[
  \Pi_{A'_2}(\beta_1)\Pi_{A'_3}(\beta_2) = a \prod_{i \in A'_2} (\gamma -
  (\alpha_1 + \alpha_i)) \prod_{j \in A'_3} (\gamma -  (\alpha_2 + \alpha_j)).
\]
Likewise for some nonzero $b \in \F_q$,
\[
  \Pi_{A'_2}(\beta_2)\Pi_{A'_3}(\beta_1) = b \prod_{i \in A'_2} (\gamma - 
  (\alpha_2 + \alpha_i)) \prod_{j \in A'_3} (\gamma -  (\alpha_1 + \alpha_j)).
\]
Assume for sake of contradiction that
$\Pi_{A'_2}(\beta_1)\Pi_{A'_3}(\beta_2) = \Pi_{A'_2}(\beta_2)\Pi_{A'_3}(\beta_1)$.
Since $\gamma$ is a degree $n^{2k-1}$ extension, by comparing the lead
coefficients, we know that $a = b$, and the polynomials in $\gamma$
must have the same roots. Thus, there exists a permutation $\pi : (\{1\}
\times A'_2) \cup (\{2\} \times A'_3) \to (\{2\} \times A'_2) \cup (\{1\}\times A'_3)$ such that whenever $\pi(i_1,i_2) = (j_1,j_2)$ we have that
$\alpha_{i_1} + \alpha_{i_2} = \alpha_{j_1} + \alpha_{j_2}$. Since we
are assuming that $A'_2$ and $A'_3$ are disjoint, we can never map
$\{1\} \times A'_2$ to $\{1\} \times A'_3$ nor $\{2\} \times A'_3$ to
$\{2\} \times A'_2$. Thus, we can decompose $\pi$ into a pair of
permutations $\tau_2 : A'_2 \to A'_2$ and $\tau_3 : A'_3 \to A'_3$ such
that for all $i \in A'_2$ we have that $\alpha_1 + \alpha_i = \alpha_2
+ \alpha_{\tau_2(i)}$ with a similar definition for $\tau_3$. Now
observe that since $\F_q$ has characteristic at least $k$ we have that
\[
  0 \neq |A'_2|(\alpha_2 - \alpha_1) = \sum_{i \in A'_2} (\alpha_i -
  \alpha_{\tau_2(i)}) = \sum_{i \in A'_2} \alpha_i - \sum_{i \in A'_2}
  \alpha_i = 0,
\]
a contradiction.
\end{proof}

\subsection{\texorpdfstring{$\change{[}n,5\change{]}$}{[n,5]}-\texorpdfstring{$\MDS(3)$}{MDS(3)}}

We generalize the construction a bit in the weak-MDS case; that is,
where the sets considered are disjoint (c.f., \cite{bgm2021mds,roth2021higher}).

\begin{theorem}\label{thm:5}
  There exists an explicit $\change{[}n,k\change{]}$-MDS code $V$ over field size
  $O(n)^{2k^2}$ such that for any $A_1, A_2, A_3 \subseteq [n]$ such
  that $|A_1| + |A_2| = k+1$ and $|A_3|=k-1$ we have that
  $V_{A_1} \cap V_{A_2} \cap V_{A_3} = 0$ whenever it generically should.
\end{theorem}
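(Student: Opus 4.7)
The plan is to adapt the approach of Theorem~\ref{thm:thm4} using the same parameterization $\beta_i = \gamma\alpha_i - \alpha_i^2$ with an injective map $\alpha : [n] \hookrightarrow \F_q$ for $q$ a prime power polynomial in $n$ (of characteristic at least $k$), but with $\gamma$ generating an extension $K = \F_q[\gamma]$ of degree approximately $2k^2$ over $\F_q$. The Reed-Solomon code $V$ with these generators then lives in a field of size $O(n)^{2k^2}$.

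By Corollary~\ref{cor:weak}, I may reduce to the case that $A_1, A_2, A_3$ are pairwise disjoint, inside a smaller Reed-Solomon code $V'$ of dimension $k' \le k$; the reduction preserves $\delta'_i \le \delta_i$, so $\delta_3 \in \{0,1\}$, and the case $\delta_3 = 0$ is automatic from $V'$ being MDS. Assuming $\delta_3 = 1$ and $\delta_1 + \delta_2 = k' - 1$, Lemma~\ref{lem:prod-mat} reduces the MDS(3) condition to the nonvanishing of a $(k' - \delta_1) \times (k' - \delta_1)$ determinant $\det M$ whose row $j \in A_1$ has entries $\beta_j^l \Pi_{A_2}(\beta_j)$ for $l = 0, \ldots, \delta_2 - 1$ followed by $\Pi_{A_3}(\beta_j)$. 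Each entry is a polynomial in $\gamma$ of degree at most $k-1$, so $\det M$ has total degree less than $2k^2$ in $\gamma$; since $\gamma$ has algebraic degree at least $2k^2$ over $\F_q$, it suffices to show $\det M$ is not the zero polynomial in $\gamma$.

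Factoring $\Pi_{A_2}(\beta_j)$ out of row $j$, the first $\delta_2$ columns of the resulting matrix form a Vandermonde-like block in $\beta_j$, so expansion along the last column gives
\[
  \det M = \prod_{j \in A_1} \Pi_{A_2}(\beta_j) \cdot \prod_{i<j \in A_1}(\beta_j - \beta_i) \cdot \mathrm{DD},
\]
where $\mathrm{DD}$ is the $\delta_2$-th divided difference of the rational function $f(\beta) = \Pi_{A_3}(\beta)/\Pi_{A_2}(\beta)$ at the points $\{\beta_j\}_{j \in A_1}$. Both of the first two factors are nonzero polynomials in $\gamma$ by injectivity of $\alpha$ and the disjointness of the $A_i$'s, so it remains to show that $\mathrm{DD}$ is a nonzero rational function in $\gamma$.

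The main obstacle is this final step. Using the factorization $\beta_i - \beta_j = (\alpha_i - \alpha_j)(\gamma - (\alpha_i + \alpha_j))$, $\mathrm{DD}$ becomes a rational function in $\gamma$ whose poles occur at pair sums $\alpha_i + \alpha_j$ for $i, j \in A_1 \cup A_2$. At a pole $\gamma = \alpha_j + \alpha_m$ with $j \in A_1$ and $m \in A_2$, the residue of the $j$-th summand can be explicitly computed as a nonzero constant proportional to $\prod_{m' \in A_3}(\alpha_m - \alpha_{m'})$ divided by a similar nonzero product, the nonvanishing coming from injectivity of $\alpha$ and disjointness of $A_2, A_3$. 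The challenge is to ensure that residues from other summands (which arise when pair sums collide) do not cancel this contribution, so that $\mathrm{DD}$ genuinely has a pole and is hence nonzero. Achieving this requires an explicit choice of $\alpha$ that controls pair-sum collisions among the relevant indices, generalizing the clean bijection-of-roots argument used in Theorem~\ref{thm:thm4} for the $2 \times 2$ case to this higher-order divided-difference setting. Pinning down this combinatorial-algebraic argument while meeting the stated field size bound $O(n)^{2k^2}$ is the principal technical difficulty.
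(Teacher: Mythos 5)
Your setup is sound and matches the paper's up to the point of the key nonvanishing claim: the reduction to pairwise disjoint sets via Corollary~\ref{cor:weak}, the use of Lemma~\ref{lem:prod-mat} to obtain an $(\delta_2+1)\times(\delta_2+1)$ determinant with $\delta_3=1$, the degree bound of less than $2k^2$ in the extension generator, and the divided-difference factorization of $\det M$ are all correct. But the proof has a genuine gap exactly where you flag it: you never establish that the divided difference $\mathrm{DD}$ is a nonzero rational function of $\gamma$. With $\alpha$ merely injective, pair sums $\alpha_i+\alpha_j$ can collide, several summands can share a pole, and the residues can cancel; as written there is no argument ruling this out, and you explicitly leave the "combinatorial-algebraic argument" controlling these collisions open. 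Since this is the entire content of the theorem beyond bookkeeping, the proposal is incomplete.

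The paper closes this gap with two ingredients you did not supply. First, the evaluation points $\alpha_1,\dots,\alpha_n$ are chosen to be a Sidon set in $\F_q$ with $q=O(n^2)$ (e.g.\ columns of a distance-$5$ BCH parity check matrix in characteristic $2$), so that $\alpha_{i_1}+\alpha_{i_2}=\alpha_{j_1}+\alpha_{j_2}$ forces $\{i_1,i_2\}=\{j_1,j_2\}$; since $\beta_{i}-\beta_{j}=(\alpha_i-\alpha_j)\bigl(x-(\alpha_i+\alpha_j)\bigr)$, this makes all the linear factors $\beta_{i_1}-\beta_{i_2}$ pairwise coprime in $\F_q[x]$ (equation~(\ref{eq:beta-gcd})). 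Second, rather than computing residues, the paper works in $\F_q[x]$ and runs an infinite-descent argument: a vanishing determinant would give a kernel vector $(g_1,\dots,g_{a'})$ with $\gcd(g_1,\dots,g_{a'})=1$; coprimality of the $\Pi_{A'_2}(\beta_i)$ with $\Pi_{A'_3}(\beta_i)$ and with each other forces $\prod_i\Pi_{A'_2}(\beta_i)\mid g_{a'}$, and a Vandermonde argument then forces $\Pi_{A'_2}(\beta_{a'})$ to divide every $g_i$, contradicting the gcd condition. Your residue plan is plausibly completable once the Sidon condition is imposed (poles at $\alpha_j+\alpha_m$ with $j\in A_1$, $m\in A_2$ then come from a single summand and have nonzero residue by disjointness of $A_2$ and $A_3$), but you would still need to state and use that condition, verify it costs only $q=O(n^2)$, and handle the poles at $\alpha_i+\alpha_j$ with both indices in $A_1$, which do appear in two summands. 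None of this is in the proposal.
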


Note that for any $k \le 5$, if $|A_1| + |A_2| + |A_3| = 2k$ and each
has size at most $k-1$, then at least one $|A_i| = k-1$. Thus, we have
the following immediately corollary.

\begin{corollary}
There exists an explicit $\change{[}n,5\change{]}$-$\MDS(3)$ code over field size $O(n^{50})$
\end{corollary}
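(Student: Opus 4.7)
My plan is to adapt the construction of Theorem~\ref{thm:thm4}, enlarging the extension of $\gamma$ to accommodate the larger $r = |A_1|$. I would take a prime power $q = \Theta(n)$ with characteristic exceeding $k$, let $K = \F_q[\gamma]$ be a degree-$2k^2$ extension, fix distinct $\alpha_1,\ldots,\alpha_n \in \F_q$, and let $V$ be the $[n,k]$ Reed--Solomon code with generators $\beta_i = \gamma\alpha_i - \alpha_i^2 \in K$, giving total field size $O(n)^{2k^2}$. For the main claim, given $A_1, A_2, A_3$ with $|A_1|+|A_2|=k+1$, $|A_3|=k-1$, and vanishing generic intersection, I first apply Corollary~\ref{cor:weak} to reduce to pairwise disjoint $A'_i$. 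A direct size accounting (using that $|A'_3| - (k'-1) = |A_1 \cap A_2|$) shows that either $A_1 \cap A_2 \neq \emptyset$, in which case $|A'_3| \ge k'$ and the reduced intersection vanishes simply by $V'$ being $\MDS$, or $A_1 \cap A_2 = \emptyset$ and the size relations $|A'_1|+|A'_2|=k'+1$, $|A'_3|=k'-1$ are preserved. In the latter case, WLOG $r := |A_1| \le |A_2|$, so $r \le (k+1)/2$.

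Next, I would apply Lemma~\ref{lem:prod-mat} with $A_1$ placed on the left: the $\MDS(3)$ condition becomes the non-vanishing of an $r \times r$ determinant $D(\gamma)$ whose row indexed by $a \in A_1$ consists of $(\beta_a^s \Pi_{A_2}(\beta_a))_{s=0}^{r-2}$ followed by $\Pi_{A_3}(\beta_a)$. Because $\beta_a - \beta_b = (\alpha_a - \alpha_b)(\gamma - \alpha_a - \alpha_b)$, every entry is a polynomial in $\gamma$, and a column-by-column degree bound gives $\deg_\gamma D = (r-1)(k+1-r) + \binom{r-1}{2} + (k-1)$, which is strictly less than $2k^2$ for all $r \le (k+1)/2$. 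Hence it suffices to show $D \not\equiv 0$ as a polynomial in $\gamma$. Factoring the (nonzero) $\Pi_{A_2}(\beta_a)$ out of each row and cofactoring the last column against the Vandermonde in $\{\beta_a\}_{a \in A_1}$ reduces $D \not\equiv 0$, modulo nonzero polynomial factors, to the condition
\[
\Sigma(\gamma) \;:=\; \sum_{a \in A_1} \frac{\Pi_{A_3}(\beta_a)}{\prod_{b \in B,\,b \neq a}(\beta_a - \beta_b)} \;\not\equiv\; 0, \qquad B := A_1 \cup A_2.
\]

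The hard part will be proving $\Sigma \not\equiv 0$. For $r = 2$ this is exactly Theorem~\ref{thm:thm4}: $\Sigma$ collapses to a difference of two products of linear factors in $\gamma$ with roots of the form $\alpha_a + \alpha_b$, and unique factorization in $\F_q[\gamma]$ combined with $\operatorname{char}(\F_q) > k$ yields a contradiction via a bijection-and-summation argument. For $r \ge 3$, $\Sigma$ is an $r$-term sum rather than a two-term difference, so the unique-factorization step must be iterated---either by inducting on $r$ and peeling off one summand at a time from the cleared polynomial identity, or by directly analyzing the top $\gamma$-degree coefficient of the cleared numerator, which evaluates to the partial Lagrange-residue sum $\sum_{a \in A_1} \prod_{b \in A_3}(\alpha_a - \alpha_b) / \prod_{b \in B \setminus \{a\}}(\alpha_a - \alpha_b)$. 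Note that the full residue sum over $B$ vanishes (since $\deg \Pi_{A_3}(x) = k-1 < |B|$), so non-vanishing of the partial sum is a genuinely nontrivial combinatorial condition, and certifying it for an explicit family of $\alpha_i$'s is the technical heart of the proof.
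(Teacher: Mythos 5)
Your reduction is sound as far as it goes and parallels the paper's proof of Theorem~\ref{thm:5}: for $k=5$ it is indeed enough to treat $|A_1|+|A_2|=k+1$, $|A_3|=k-1$ (one set must have size $k-1$), Corollary~\ref{cor:weak} plus your size accounting correctly isolates the disjoint case with $|A'_1|+|A'_2|=k'+1$ and $|A'_3|=k'-1$, and Lemma~\ref{lem:prod-mat} turns the condition into an $r\times r$ determinant. But the proof stops exactly where the difficulty begins: you never establish $D\not\equiv 0$ (equivalently $\Sigma\not\equiv 0$) for $r\ge 3$, and for $k=5$ the case $(|A_1|,|A_2|)=(3,3)$ genuinely occurs, so $r=3$ cannot be avoided. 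The two-term unique-factorization argument of Theorem~\ref{thm:thm4} does not extend to a three-term sum, and your leading-coefficient observation only shows that the partial Lagrange-residue sum over $A'_1$ equals minus the one over $A'_2$; it gives no reason either is nonzero for an arbitrary injective choice of $\alpha_i$ in a field of size $\Theta(n)$. As written, this proves only the $r=2$ cases, i.e., it does not yield an $[n,5]$-$\MDS(3)$ code.

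The paper closes this gap with a different choice of evaluation points and a different non-vanishing argument. It takes the $\alpha_i$ to be a Sidon-type set in a field of size $q\approx n^2$ (all pairwise sums $\alpha_{i_1}+\alpha_{i_2}$ distinct), so that the factors $\beta_{i_1}-\beta_{i_2}=(\alpha_{i_1}-\alpha_{i_2})(x-\alpha_{i_1}-\alpha_{i_2})$ are pairwise coprime in $\F_q[x]$ across distinct pairs. Working over the polynomial ring $\F_q[x]$, it then runs a gcd/descent argument: if the determinant vanished, there would be polynomial coefficients $g_1,\dots,g_{a'}$ with $\gcd(g_1,\dots,g_{a'})=1$ annihilating the columns, and the coprimality forces each $g_i$ to be divisible by $\Pi_{A'_2}(\beta_{a'})$, a contradiction. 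Finally $x$ is specialized to a generator of a degree-$k^2$ extension, giving field size $q^{k^2}=O(n^{2k^2})=O(n^{50})$. Your parameters ($q=\Theta(n)$ with merely distinct $\alpha_i$) destroy the coprimality this descent relies on, so to complete your route you would need either to adopt the Sidon-set choice and the descent, or to supply a genuinely new certificate that $\Sigma\not\equiv 0$; neither is provided.
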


\begin{proof}[Proof of Theorem~\ref{thm:5}.]
Pick distinct $\alpha_1, \hdots, \alpha_n \in \F_q$ such that
$\alpha_{i_1} + \alpha_{i_2} = \alpha_{j_1} + \alpha_{j_2}$ iff
$\{i_1,i_2\} = \{j_1,j_2\}$. It is clear that this works for
$q \approx n^2$. For example, over characteristic 2, we can take $\alpha_1,\dots,\alpha_n$ to be the columns of a parity check matrix of a BCH code with distance 5. Let $\F_q[x]$ be the univariate polynomial ring over
$\F_q$. We define our Reed-Solomon evaluation polynomials to be
$\beta_i(x) = \alpha_i x - \alpha_i^2$ for all $i \in [n]$. A key
observation is that due to our choice of $\alpha_i$'s, whenever
$i_1 \neq i_2$, $j_1 \neq j_2$, and $\{i_1,i_2\} \neq \{j_1,j_2\}$ we
have that\footnote{Here we assume that $\gcd$ always returns a monic
  polynomial.}
\begin{align}
\gcd(\beta_{i_1} - \beta_{i_2}, \beta_{j_1} - \beta_{j_2}) = 1.\label{eq:beta-gcd}
\end{align}

Let $V(x)$ be the
$\change{[}n,k\change{]}$-RS code generated by these evaluation points. We claim that
for any $A_1, A_2, A_3 \subseteq [n]$ with
$|A_1|+|A_2| = k+1$ and $|A_3| = k-1$ and $W_{A_1} \cap W_{A_2} \cap W_{A_3}$, we have that
\begin{align}
  V_{A_1}(x) \cap V_{A_2}(x) \cap V_{A_3}(x) = 0.\label{eq:V(x)}
\end{align}

Since this condition can be written as a degree $\le k^2$ polynomial
in $x$, by evaluating $V$ at some degree $k^2$ irreducible of $\F_q$,
we then immediately get an $n^{2k^2}$ construction.\footnote{We
  suspect this construction works over a smaller field extension.}

By Corollary~\ref{cor:weak}, it suffices to prove $V'_{A'_1}(x) \cap
V'_{A'_2}(x) \cap V'_{A'_3}(x) = 0$, where $V'$ is the $\change{[}n,k'\change{]}$-RS code with
the same generators where $k' = (|A'_1| + |A'_2| + |A'_3|)/2$, $A'_i
\subseteq A_i$, and $A'_1, A'_2, A'_3$ are disjoint. If $|A'_i| \ge
k'$ for some $i$, then the condition on $V'$ follows from $V'$ being
MDS. Since $|A'_3| \ge k'-1$ by Corollary~\ref{cor:weak}, we may assume that $|A'_1| = a'$,
$|A'_2| = k'+1-a'$ and $|A'_3| = k'-1$. %

Assume without loss of generality that $A'_1 = \{1,2,\hdots, a'\}$. By Lemma~\ref{lem:prod-mat}, we have that
(\ref{eq:V(x)}) holds iff
\[
\det \begin{pmatrix}
\Pi^{a'-1}_{A'_2}(\beta_1) & \Pi_{A'_3}(\beta_1)\\
\Pi^{a'-1}_{A'_2}(\beta_2) & \Pi_{A'_3}(\beta_2)\\
\vdots & \vdots\\
\Pi^{a'-1}_{A'_2}(\beta_{a'}) & \Pi_{A'_3}(\beta_{a'})
\end{pmatrix} \neq 0.
\]
We now use an infinite descent argument.  Assume for sake of
contradiction that the determinant equals zero. Then, there exists
$g_1, \hdots, g_{a'} \in \F_q[x]$ with $\gcd(g_1, \hdots, g_{a'}) = 1$ such
that for all $i \in [a']$,
\[ 
  \left(\sum_{j=1}^{a'-1} \beta_i^{j-1} g_{j}\right)\Pi_{A'_2}(\beta_i) + g_{a'}
  \Pi_{A'_3}(\beta_i) = 0.
\]
Note that by (\ref{eq:beta-gcd}), we have that
$\gcd(\Pi_{A'_2}(\beta_i), \Pi_{A'_3}(\beta_i)) = 1$ for all $i \in
[a]$. Thus, $\Pi_{A'_2}(\beta_i)$ divides $g_{a'}$ for all $i \in [a']$. Further,
by (\ref{eq:beta-gcd}), we have further that $\prod_{i=1}^{a
'}
\Pi_{A'_2}(\beta_i)$ divides $g_{a'}$. Thus, for some $h_{a'} \in \F_q[x]$
(possibly zero), we have that for all $i \in [a']$,
\[
  \sum_{j=1}^{a'-1} \beta_i^{j-1} g_{j} + h_{a'} \Pi_{A'_3}(\beta_i) \prod_{\substack{j=1\\j
      \neq i}}^{a'} \Pi_{A'_2}(\beta_j) = 0.
\]

In particular, we have that for all $i \in [a'-1]$,
\[
  \sum_{j=1}^{a'-1} \beta_i^{j-1} g_{j} \equiv 0 \mod \Pi_{A'_2}(\beta_{a'}).
\]

By standard properties of the Vandermonde determinant, we have that
for each $i \in [a'-1]$ there exists a polynomial $h_i \in \F_q[x]$
such that $g_i$ can be written as
$h_i \Pi_{A'_2}(\beta_{a'}) / \prod_{1 \le i < j \le a'-1} (\beta_j -
\beta_i)$.

Using (\ref{eq:beta-gcd}) one last time, we have that
$\prod_{1 \le i < j \le a'-1} (\beta_j - \beta_i)$ is relatively prime
to $\Pi_{A'_2}(\beta_{a'})$. Thus, each $g_i$ is divisible by
$\Pi_{A'_2}(\beta_{a'})$. This violates the assumption that $\gcd(g_1,
\hdots, g_{a'}) = 1$. Therefore, (\ref{eq:V(x)}) holds and our
construction is valid.
\end{proof}

\bibliographystyle{alpha}
\bibliography{clean-ref.bib}

\end{document}